\documentclass{amsart}
\usepackage[letterpaper,margin=1.25in]{geometry}
\usepackage{lmodern}
\usepackage[utf8]{inputenc}
\usepackage[dvipsnames,svgnames,x11names]{xcolor}
\usepackage{mathtools,microtype}
\usepackage{graphicx}
\usepackage{xparse}

\usepackage{thmtools}

\usepackage{mathrsfs}

\geometry{margin=1.3in}
\usepackage{setspace}
\setstretch{1.08}

\usepackage[
    backref=page,
    colorlinks,
    citecolor=Mahogany,
    linkcolor=Mahogany,
    urlcolor=Mahogany,
    filecolor=Mahogany,
    hypertexnames=false
]{hyperref}

\usepackage{cancel}

\usepackage{booktabs}

\usepackage{tikz,tikz-cd}
\usetikzlibrary{matrix,calc,positioning,arrows,decorations.pathreplacing,patterns,knots}

\usepackage[capitalise]{cleveref}
\usepackage{quiver}



\newcommand{\bC}{\mathbb{C}}

\newcommand{\bR}{\mathbb{R}}
\newcommand{\bT}{\mathbb{T}}


\newcommand{\id}{\mathrm{id}}\newcommand{\C}{\mathbb{C}}
\newcommand{\Z}{\mathbb{Z}}
\newcommand{\R}{\mathbb{R}}

\renewcommand{\O}{\mathrm{O}}
\newcommand{\Spin}{\mathrm{Spin}}

\newcommand{\FtwoI}{\mathrm{F2I}}
\newcommand{\ABS}{\mathrm{ABS}}





\definecolor{lavenderblue}{rgb}{0.8, 0.8, 1.0}
\definecolor{lightmauve}{rgb}{0.86, 0.82, 1.0}
\definecolor{brightlavender}{rgb}{0.75, 0.58, 0.89}

\numberwithin{equation}{section}

\usepackage{wrapfig,caption}
\captionsetup{
	font=small}

\newtheorem{theorem}[equation]{Theorem}
\newtheorem{lemma}[equation]{Lemma}

\newtheorem{proposition}[equation]{Proposition}
\newtheorem{corollary}[equation]{Corollary}
\newtheorem*{corollary*}{Corollary}

\newtheorem{ansatz}[equation]{Ansatz}
\newtheorem*{ans*}{Ansatz}

\theoremstyle{definition}

\newtheorem{definition}[equation]{Definition}

\newtheorem{example}[equation]{Example}

\crefname{theorem}{Theorem}{Theorems}
\crefname{lemma}{Lemma}{Lemmas}
\crefname{definition}{Definition}{Definitions}
\crefname{proposition}{Proposition}{Propositions}
\crefname{corollary}{Corollary}{Corollaries}
\crefname{ansatz}{Ansatz}{Ansatzes}

\theoremstyle{remark}

\newtheorem{remark}[equation]{Remark}

\newtheorem{litnote}[equation]{Literature Note}

\crefname{example}{Example}{Examples}
\crefname{litnote}{Literature Note}{Literature Notes}
\crefname{remark}{Remark}{Remarks}


\usepackage{xargs}                      

\definecolor{todocolor}{HTML}{FF0000}
\usepackage[colorinlistoftodos,prependcaption,textsize=tiny,linecolor=todocolor,backgroundcolor=todocolor!20,bordercolor=todocolor]{todonotes}
\setlength{\marginparwidth}{1in}


\DeclareDocumentCommand{\shortexact}{s O{} O{} mmmm}{
\IfBooleanTF{#1}{ 
\begin{tikzcd}[ampersand replacement=\&]
        1 \& {#4}
        \&  {#5}
        \& {#6}
        \& 1#7
        \arrow[from=1-1, to=1-2]
        \arrow["#2", from=1-2, to=1-3]
        \arrow["#3", from=1-3, to=1-4]
        \arrow[from=1-4, to=1-5]
\end{tikzcd}
}{ 
\begin{tikzcd}[ampersand replacement=\&]
        0 \& {#4}
        \&  {#5}
        \& {#6}
        \& 0#7
        \arrow[from=1-1, to=1-2]
        \arrow["#2", from=1-2, to=1-3]
        \arrow["#3", from=1-3, to=1-4]
        \arrow[from=1-4, to=1-5]
\end{tikzcd}
}}


\definecolor{natcolor}{HTML}{AAAADD}


\definecolor{luukcolor}{HTML}{1d3dc4}

\definecolor{camcolor}{HTML}{43b3ae}

\definecolor{aruncolor}{HTML}{800000}

\definecolor{omarcolor}{HTML}{FF6E00}

\definecolor{danielcolor}{HTML}{0A5C36}

\newcommand{\tikzs}{\begin{center}\begin{tikzcd}}
\newcommand{\tikze}{\end{tikzcd}\end{center}}

\newcommand{\wh}[1]{\widehat{#1}}

\newcommand{\pt}{\mathrm{pt}}
\definecolor{softred}{rgb}{0.92, 0, 0.17}

\usepackage{xspace}
\usepackage{slashed}

\newcommand{\spinc}{\text{spin$^c$}\xspace}

\newcommand{\pinp}{pin\textsuperscript{$+$}\xspace}
\newcommand{\pincp}{pin\textsuperscript{$\tilde c+$}\xspace}

\newcommand{\KO}{\mathit{KO}}
\newcommand{\KR}{\mathit{KR}}


\newcommand{\ol}{\overline}
\newcommand{\Hom}{\operatorname{Hom}}

\renewcommand{\O}{\mathrm{O}}
\newcommand{\SO}{\mathrm{SO}}
\newcommand{\U}{\mathrm{U}}
\newcommand{\SU}{\mathrm{SU}}
\newcommand{\Pin}{\mathrm{Pin}}

\newcommand{\RP}{\mathbb{RP}}
\newcommand{\CP}{\mathbb{CP}}

\newcommand{\MTSpin}{MT\mathrm{Spin}}

\newcommand{\Cl}{\mathit{C\hspace{-0.5pt}\ell}}

\newcommand{\Ext}{\mathrm{Ext}}
\newcommand{\sm}{\mathrm{sm}}
\newcommand{\Snb}{S_{\mathit{nb}}}
\renewcommand{\Im}{\mathrm{Im}}

\newcommand{\SH}{\mathit{SH}}

\newcommand{\term}{\emph}
\DeclarePairedDelimiter{\set}{\{}{\}}

\newcommand{\coker}{\mathrm{coker}\,}

\newcommand{\pinm}{pin\textsuperscript{$-$}\xspace}
\newcommand{\pincm}{pin\textsuperscript{$\tilde c-$}\xspace}
\newcommand{\spinh}{spin\textsuperscript{$h$}\xspace}
\newcommand{\pinhm}{pin\textsuperscript{$h-$}\xspace}
\newcommand{\pinhp}{pin\textsuperscript{$h+$}\xspace}

\definecolor{darkcerulean}{rgb}{0.03, 0.27, 0.49}
\definecolor{ceruleanblue}{rgb}{0.16, 0.32, 0.75}
\definecolor{cobalt}{rgb}{0.0, 0.28, 0.67}
\definecolor{denim}{rgb}{0.08, 0.38, 0.74}

\newcommand{\nocontentsline}[3]{}
\let\origcontentsline\addcontentsline
\newcommand\stoptoc{\let\addcontentsline\nocontentsline}
\newcommand\resumetoc{\let\addcontentsline\origcontentsline}

\title{Weak topological phases in the presence of interactions}
\date{\today}
\author[Antolín Camarena]{Omar Antolín Camarena}
\address{Instituto de Matemáticas, UNAM, Circuito exterior, Ciudad Universitaria, 04510 Ciudad de México, CDMEX, Mexico}
\email{\href{mailto:omar@matem.unam.mx}{omar@matem.unam.mx}}
\urladdr{\href{https://www.matem.unam.mx/~omar/}{https://www.matem.unam.mx/~omar/}}

\author[Debray]{Arun Debray}
\address{Department of Mathematics, The University of Kentucky, 719 Patterson Office Tower,
Lexington, KY 40506, USA}
\email{\href{mailto:a.debray@uky.edu}{a.debray@uky.edu}}
\urladdr{\href{https://adebray.github.io/}{https://adebray.github.io/}}

\author[Krulewski]{Cameron Krulewski}
\address{Department of Mathematics \& Statistics,
Dalhousie University,
6316 Coburg Road,
PO Box 15000,
Halifax, NS, B3H 4R2, Canada}
\email{\href{mailto:ckrulewski@dal.ca}{ckrulewski@dal.ca}}
\urladdr{\href{https://cakrulewski.github.io/}{https://cakrulewski.github.io/}}

\author[Pacheco-Tallaj]{Natalia Pacheco-Tallaj}
\address{Massachusetts Institute of Technology,
Department of Mathematics,
Simons Building (Building 2)
77 Massachusetts Avenue,
Cambridge, MA 02139, USA}
\email{\href{mailto:nataliap@mit.edu}{nataliap@mit.edu}}
\urladdr{\href{https://natipt.github.io/}{https://natipt.github.io/}}

\author[Sheinbaum]{Daniel Sheinbaum}
\address{Division of Applied Physics, CICESE 22860, Ensenada, BC, Mexico.}
\email{\href{mailto:daniels@cicese.mx}{daniels@cicese.mx}}
\urladdr{\href{https://d-shein.github.io/}{https://d-shein.github.io/}}

\author[Stehouwer]{Luuk Stehouwer}
\address{Department of Mathematical Sciences,
Durham University,
Stockton Rd, Durham DH1 3LE, United Kingdom}
\email{\href{luukstehouwer@gmail.com}{luukstehouwer@gmail.com}}
\urladdr{\href{https://sites.google.com/view/luuk-stehouwer}{https://sites.google.com/view/luuk-stehouwer}}

\begin{document}

\begin{abstract}
We study weak symmetry-protected topological phases (SPTs) in the presence of short-range interactions. By comparing homotopical free and interacting classifications of these SPTs, we 
predict
their stability under interactions as well as 
identify potential
intrinsically-interacting phases.
We mathematically compute the groups of weak phases in dimensions zero through three for all tenfold-way symmetry types using homotopy theory; specifically, we use
Atiyah's Real {\itshape KR}-theory and the low-energy invertible field theory ansatz of Freed--Hopkins for the free and interacting cases, resp.
Our computational techniques involve T-duality, {which relates {\itshape K}-theory of the spatial torus with {\itshape K}-theory of the Brillouin torus}, and {a binomial formula for computing generalized cohomology of a torus}.
Our results carry potential implications for theoretical and experimental studies of 
weak phases.

\end{abstract}

\maketitle

\tableofcontents

\setcounter{tocdepth}{3}
\makeatletter
\providecommand\@dotsep{5}
\makeatother

\setcounter{section}{-1}
\section{Introduction}

From at least the late 1980s,
it has been clear that homotopical techniques are useful for classifying what have come to be called symmetry-protected topological phases of matter (SPTs). For free SPTs, there are successful classification frameworks using $K$-theory groups, as most famously put forth by Kitaev in his periodic table of topological insulators and superconductors \cite{kitaev_periodic_2009}. 
However, as first observed by Fidkowski-Kitaev~\cite{fidkowski_effects_2010, FK11} and confirmed by Turner-Pollman-Berg~\cite{turner_topological_2011}, once one allows a Hamiltonian model to have short-range interaction terms, the classification will in general change, going beyond $K$-theory.
Subsequently, homotopical classifications for interacting SPTs (under appropriate assumptions) have been developed that leverage the mathematical framework of topological quantum field theory (TQFT), as we will expand upon later.
{These classifications rely on the ansatz that the equivalence class of an SPT depends only on its low-energy effective field theory, and that this theory is an invertible field theory. (See e.g.\ \cite{Fre19,freed_reflection_2021}.)}
{Under this ansatz, homotopy theory has been widely useful for computing}
free and interacting classifications of SPTs, as in, for example, \cite{freed_twisted_2013, freed_k-theory_2016, freed_reflection_2021, Cam17, KT17, shiozaki_topological_2017, BC18, DNG18b, WG18, Xiong_SPT, Gaiotto_SPT, freed_invertible_2019, debray_invertible_2021, BCHM22, de_nittis_cohomology_2022, shiozaki_AHSS_2022, DNG23, shiozaki_generalized_2023, beaudry2023homotopical, lee_connection_2024, lee_crystalline_2024, de_nittis_topological_2025}.

It is less well-known that homotopical methods can actually offer even more information: 
they can{---again, admitting the low-energy IFT ansatz---}precisely compute the interacting phase of a given free fermion Hamiltonian.
In this paper, we use the framework of Freed-Hopkins \cite{freed_reflection_2021} to study a group homomorphism between free and interacting classifications for a subclass of SPTs.
This homomorphism, called a \textit{free-to-interacting map}, mathematically 
{predicts}
(via its kernel) the free phases that are unstable to short-range interactions and (via its cokernel) the interacting phases that do not arise from free phases: so-called interaction-enabled phases.
Roughly speaking, the homomorphism takes a free Hamiltonian and then ``turns on interactions,'' considering the model in a  space of interacting Hamiltonians of the same symmetry type.
This homotopical approach complements and provides a mathematical proof for physical calculations that are usually performed on a case-by-case basis using specific lattice models. For example, see \cite{yao_interaction_2013,cho_topological_2015,shiozaki_topology_2016,lu_classification_2017,song_interaction_2017,thorngren_gauging_2018,liu_shift_2019,shiozaki_generalized_2023, lee_connection_2024, lee_crystalline_2024}.
{Conversely, comparing homotopical approaches with physical computations provides further evidence and use cases for the low-energy IFT ansatz.}

Freed-Hopkins studied and computed homotopical free-to-interacting maps in low degrees for all symmetry types in the famous tenfold way of free fermions \cite[\S 9.2]{freed_reflection_2021}.
In the tenfold way, 
by specifying whether a phase has time reversal symmetry, charge-conjugation symmetry, and/or chiral symmetry, one obtains one out of ten Altland-Zirnbauer classes \cite{AltlandZirnbauer_1997}.
However, within each tenfold way class, there are different kinds of phases, and Freed-Hopkins considered only the so-called \textit{strong} phases: those which lack any spatial symmetries and are instead protected solely by internal symmetries.
In our work, we extend Freed-Hopkins' free-to-interacting maps to study \emph{weak} SPTs, which are phases protected by a lattice translation symmetry.

Concretely, weak SPTs can be formed out of layers of lower-dimensional strong SPTs, and so invariants of these phases are often assembled from invariants of the constituent strong phases ~\cite{fu_topological_2007-2, claes_disorder_2020, Liu_2016_WTI_Weyl_semimetals}.
In addition to providing a testing ground for homotopical methods, weak phases are of theoretical and experimental interest.
For example, their construction by stacking {often results in anisotropic gapless edge modes} \cite{claes_disorder_2020, Liu_2016_WTI_Weyl_semimetals, Yan_2012_prediction_WTI_layered, Rasche2013_stacked_topological_bismuth},
while translation symmetry defects called dislocations can trap topological bound states \cite{ran_weak_2010, hu2022dislocationmajoranaboundstates, Mross_2016, Slager_2014, Xue_2021experimentaldislocation}. Weak SPTs
also have the potential for hosting helical edge states \cite{Mong_2012_quantum_transport, Zhong_2023_WTIspinhallchannels, ImuraTakaneTanaka_2011_WTIgaplesshelical, Wang2021_helicalhingestates,sbierski_weak_2016} and even producing non-abelian anyons~\cite{Mross_2016, hu2022dislocationmajoranaboundstates}. Weak SPTs have been studied experimentally in e.g.\ \cite{hamasaki2017dislocation, liu2018experimental, zhang2021observation}.

The question of whether weak SPTs persist in the presence of disorder or short-range interactions is an active area of research \cite{Ringel_2012_strongsideofweak, claes_disorder_2020, Barkeshli_2012, Wang12_InteractingInsulators, hughes_weak_2015} and is important for experimental study, as interactions are not always able to be experimentally controlled.
Using our free-to-interacting map, we mathematically 
{predict}
the stability of weak SPTs under short-range interactions in low dimensions for all tenfold-way symmetry types, and additionally predict the existence of some intrinsically-interacting weak SPTs. Our results are collected in \S\ref{example_section}, and we highlight a few of them here:
\begin{itemize}
    \item We find that the three $\Z$-valued indices of (3+1)d free weak phases in class BDI remain nontrivial under interactions. This agrees with the physical results of Xiao-Kawabata-Luo-Ohtsuki-Shindou~\cite{xiao_anisotropic_2023}.
    \item In class A, we find that the Hall conductivity, an invariant of (3+1)d free weak phases, is stable under interactions. This gives a mathematical derivation of the results of Varjas-de Juan-Lu~\cite[\S II]{varjas_space_2017}.
    \item In class CI, we find a $\Z_2^4$ of interaction-enabled weak phases in dimension (3+1). These phases have not appeared in the literature to our knowledge, and it would be interesting to study them further.
\end{itemize}
We also study the remaining seven Altland-Zirnbauer classes, indicating where our computations mathematically support preexisting results in the literature, and where our predictions are new.

Having discussed our motivation and results, we next expand upon our mathematical methods, which include $K$-theory and TQFT.
The relation between $K$-theory and condensed matter first appeared when generalizing the integer quantum Hall effect (IQHE) to include magnetic disorder, as described in the work of Bellissard \cite{bellissard_disorder_1986}. With the advent of new topological phases such as the topological insulator, the $K$-theory framework for classifying free-fermion phases in the tenfold way was introduced contemporaneously by Kitaev
\cite{kitaev_periodic_2009} and Ryu-Schnyder-Furusaki-Ludwig \cite{Ryu_2010_TIandS},
and later expanded upon by Freed-Moore~\cite{freed_twisted_2013}, Thiang~\cite{thiang2016k}, Alldridge-Max-Zirnbauer \cite{alldridgezirnbauer}, and others. In its simplest version, for a periodic system representing a crystal without any disorder and no interactions, applying the Bloch transform to the associated Hamiltonian $\mathcal{H}$ yields a family of Bloch Hamiltonians $\{\mathcal{H}(\vec{k})\}$ parametrized by the crystalline momentum $\vec{k}$ in the Brillouin torus. 
If $\mathcal{H}$ has a gap in its spectrum, one can interpret its corresponding family as both representing a gapped phase (where a gapped phase is a collection of systems that can be adiabatically connected) and an element of a $K$-theory group of the crystalline momentum space (where adiabatic connection is now interpreted as a homotopy between different families of the Bloch Hamiltonians).
According to whether the Hamiltonian has time reversal symmetry, charge-conjugation symmetry, or chiral symmetry, it is assigned to one of ten Altland-Zirnbauer symmetry classes, corresponding to one of the two shifts of complex $K$-theory (A and AIII) or eight shifts of real $\KO$-theory (D, BDI, AI, CI, C, CII, AII, DIII). That there are only ten such cases for the $K$-theory of a point follows from the Bott periodicity theorem.

In \S\ref{sec:freeKth}, we present a more general $K$-theory framework in a way that makes this correspondence between $K$-theory and symmetry generators precise, using the language of Clifford algebras and group $C^*$-algebras. In \cref{def:neutralSPT}, we give our mathematical model for the group of free fermionic phases with symmetry $C^*$-algebra $A$, using work of one of us~\cite{neutralluuk}. 
\Cref{neutral_ansatz} shows that in the examples corresponding to the tenfold way, these $K$-theory groups recover the $\KR$-groups of a point (strong free phases) or the Brillouin torus (weak free phases).
 
\color{black}


The interacting classification using TQFTs has a different mathematical flavor.
Freed-Hopkins \cite{freed_reflection_2021, freed_invertible_2019}, inspired by previous work of Kapustin and collaborators \cite{kapustin2014symmetry, Kapustin_SPT}, classify (strong) interacting SPT phases in a two-step process:
\begin{enumerate}
    \item To each class of the tenfold way, labeled by a number $s$,
    Freed-Hopkins associate a family of Lie groups $H_d(s)$ ($H^c_d(s)$ in the two complex cases), one in each spacetime dimension $d$, which are the symmetry groups of Euclidean-signature spacetime corresponding to the symmetries of that tenfold way class.
    \item Then, they propose {the ansatz} that interacting SPTs are classified by the deformation classes of their low-energy limits, which are postulated to be in one to one correspondence with deformation classes of reflection-positive invertible field theories on manifolds with $H_d(s)$-structure.
\end{enumerate}

Topological quantum field theories (TQFTs) as first axiomatized by Atiyah \cite{atiyah_topological_1988} are rules that assign to a $d$-dimensional manifold $N$ (without boundary) a finite dimensional vector space $V_{N}$ (the state space of the theory) and to a $(d+1)$-dimensional manifold $M$ that has $N_1$ and $N_2$ as its boundaries, 
a linear map $U_{M}\colon V_{N_1}\xrightarrow[]{} V_{N_2}$. Invertible topological field theories are 
those which are invertible under a tensor product operation, forcing the vector spaces in question to be one-dimensional and these theories to be classified by groups. In condensed matter, the most well-known applications of TQFT are $2+1$d $\U(1)$ Chern-Simons field theory applied to the IQHE and the $3+1$d axion topological $\theta$ term for topological insulators \cite{fradkin_field_2013}. 

The connection to TQFTs starts from the gap condition on the spectrum of the Hamiltonian $\mathcal{H}$ of a many-body interacting system, which allows one to treat the finite-dimensional ground state space separately from excited states.
For SPT phases, one only considers systems that have a unique ground state; i.e.\ to each system of interest we can assign a one-dimensional vector space corresponding to the ground state of its Hamiltonian $\mathcal{H}$, whose low-energy limit is postulated to be part of an invertible topological field theory with the above $H_{d}(s)$-structure \cite{freed_invertible_2019, freed_reflection_2021}.
This postulate, which concerns the renormalization group flow of certain lattice models to their low-energy limits, is an area of active research~\cite{Gaiotto, RowellWang, saberi_spin_2018, Fre19, freed_reflection_2021, freed_topological_2022,Chen_SPT}. The existence of free-to-interacting maps provides additional evidence for this postulate, and in fact for the specific classification of Freed-Hopkins, as we discuss in \S\ref{supercohomology_is_wrong}.

Under the above assumptions, interacting SPTs of a prescribed dimension and tenfold way class are then classified by a group of invertible field theories.
Freed-Hopkins-Teleman~\cite{FHT10} and Freed-Hopkins~\cite{freed_reflection_2021} compute these groups of invertible field theories in terms of bordism groups.

Our work concerns the connection between the free and interacting classifications of SPTs, and thus the connection between $K$-theory and invertible field theories. Roughly speaking, the free-to-interacting map of Freed-Hopkins is a homomorphism constructed using the index of a twisted Dirac operator. The computation of this index generalizes the original cases discussed in Atiyah-Bott-Shapiro \cite{atiyah_clifford_1963}, which correspond to class A and class D systems, to all tenfold way types.
See \cite[\S9.3]{freed_reflection_2021}.

Our contribution is the generalization of this map to the case of discrete translation-invariant phases, including weak phases.
Free weak phases are computed from the $K$-theory of the Brillouin zone torus, as derived in many previous works including Kitaev's periodic table paper \cite{kitaev_periodic_2009}.
Interacting weak phases, on the other hand, have not been directly studied homotopically. Inspired by work of Freed-Hopkins, specifically Example 2.3 of \cite{freed_invertible_2019}, we decide to model interacting phases as a family of SPTs parameterized by the spatial (i.e.\ unit cell) torus.
Our generalization of the free-to-interacting map to weak SPTs involves two innovations: the use of T-duality, which exchanges the spatial and momentum tori in $K$-theory (\S\ref{subsec:T-duality}), and the binomial formula for the cohomology of the torus (\S\ref{tori_splitting-section}), which reflects the fact that weak invariants can be constructed from strong ones.
See~\cite{Mathai_2016-Tdualitybulkboundaryhigherdim,freed_k-theory_2016} for prior work on T-duality in this context, and \cite{Mathai_2016-Tdualitybulkboundaryhigherdim,freed_k-theory_2016,Xiong_SPT, stehouwer_k-theory_nodate} for the binomial formula.

Throughout this paper, we introduce concepts by first implementing them concretely in class AII, corresponding to time-reversal invariant topological insulators, which are the most well-studied class of weak SPTs. This in particular includes Appendix~\ref{app:twABS}, where we compute Freed-Hopkins' class AII index map in dimension $4$. Freed-Hopkins showed this map is surjective~\cite[Corollary 9.93]{freed_reflection_2021}, and we give another proof which also allows us to explicitly calculate the map on example manifolds.

We intend this work to be an introduction to free-to-interacting maps as well as a full computation of the maps in the case of weak
phases for the tenfold way. Ongoing work of some of the authors addresses free-to-interacting maps for more general crystalline symmetries. 
\stoptoc
\subsection*{Outline}
In \S\ref{ansatz...} we provide our version of the $K$-theory classification of free SPTs and review the {low-energy IFT ansatz and the} bordism classification of interacting SPTs, starting from the idea of a fermionic symmetry group.
We get to the strong free-to-interacting map in \S\ref{freed_hopkins_F2I}, then define the new weak counterpart of this map in \S\ref{s:second_part_section_2}. The precise definition of the weak free-to-interacting map is given in \cref{the_weak_F2I_ansatz}.
In \S\ref{example_section}, we compute our weak free-to-interacting map in spatial dimensions 1, 2, and 3 for all ten Altland-Zirnbauer types and compare with existing literature.

\subsection*{Acknowledgments}
We thank Dan Freed, Mike Hopkins, Ralph Kaufmann, and Ryan Thorngren for helpful discussions on the content of this paper, and an anonymous referee for helpful comments, including catching a gap in one of our arguments. CK and NPT were supported by the National Science Foundation under Grant No. DGE-2141064 during the writing of this paper. 
LS is grateful to the Atlantic Association for Research in the Mathematical Sciences and the Simons Collaboration on Global Categorical Symmetries for its financial support and Dalhousie University for providing the facilities to carry out his research.
CK also thankfully acknowledges support of the Simons Collaboration on Global Categorical Symmetries as well as the Killam Trusts.

\resumetoc


\section{Classifying invertible phases}
    \label{ansatz...}

In this section we introduce a mathematically precise framework {that under the low-energy IFT ansatz provides a classification for}
symmetry-protected topological phases. Our goal is twofold:
first, to set up rigorous models for the symmetry data relevant to fermionic systems, and
second, to explain how the resulting classifications can be computed in practice. We start
by introducing fermionic symmetry groups and their associated superalgebras in \cref{sec:fermsymmetry}, then describe
the free classification via $K$-theory within that context in \cref{sec:freeKth}.
Afterwards in \cref{subsec:bordismclassification} we introduce the interacting classification via bordism focusing on the approach of Freed and Hopkins. This will
ultimately allow us to formulate the free-to-interacting comparison map for strong phases in \cref{freed_hopkins_F2I}, a prerequisite for our analysis of the analogous map for weak phases.

\subsection{Fermionic symmetry groups}
\label{sec:fermsymmetry}

\begin{description}
\item[Fermionic groups] 
In condensed matter, symmetry groups $G_f$ of fermionic systems have the extra structure of what we will call a \emph{fermionic group}~\cite{Ben88, stehouwer_interacting_2022}.
This means that $G_f$ comes equipped with a central element $(-1)^F \in G_f$ of order two and a homomorphism $\phi\colon G_f \to \Z_2$ labeling time-reversing elements such that $(-1)^F$ is time-preserving.
    \item[Symmetries form a superalgebra] A \term{superalgebra} is a $\Z_2$-graded algebra. Each Altland-Zirnbauer class specifies a set of symmetry operators, which generate a superalgebra over $\R$ or $\C$. The reader should be warned that the physical interpretation of the $\Z_2$-grading here is given by time-reversing versus time-preserving symmetries, as opposed to fermions versus bosons.
    \item[From superalgebra to fermionic group] The superalgebras we obtain by the Altland-Zirnbauer classification are \term{super division algebras}, meaning all homogeneous elements are invertible. There are exactly ten such superalgebras~\cite{wall1964graded}
    \begin{equation}
D^\C_i,D^\R_j \quad i=0,1, ~ j=0,\dots, 7,
\end{equation}
which can be constructed explicitly as certain Clifford algebras.
Given a super division algebra $A$, the set $S(A)$ of norm-$1$ elements of $A$ acquires the structure of a compact Lie group from the multiplication on $A$. The grading operator defines a homomorphism $\phi\colon S(A)\to\Z_2$, and $(-1)^F$ generates a central $\Z_2$ subgroup of $S(A)$, making $S(A)$ into a fermionic group.
\end{description}

\begin{example}\label{AII_first_exm}
    The Altland-Zirnbauer class AII, corresponding to topological insulators with a time-reversal symmetry, has a time-reversal symmetry squaring to $(-1)^F$ and a charge $Q$ generating a $\U(1)$ symmetry corresponding to conservation of particle number. These symmetries are subject to a \term{spin-charge relation}: the $-1$ in this $\U(1)$ is equal to $(-1)^F$, and time-reversal acts on $\U(1)$ by complex conjugation.

The algebra generated by $T$ and $Q$ over $\R$ is isomorphic to the Clifford algebra
\begin{equation}
    \Cl_{-2}\coloneqq\R\langle e_1, e_2 \rangle/(e_1^2 = e_2^2 = -1, e_1e_2 + e_2e_1= 0).
\end{equation}
This isomorphism can be explicitly realized by sending $T\mapsto e_1$ and $e^{i\pi\theta}\in\U(1)$ to $\cos(\theta) + e_1e_2\sin(\theta)$~\cite[Example 4]{stehouwer_interacting_2022}.

The second step is to find $S(\Cl_{-2})$, which by definition of the pin groups is equal to $\Pin^-(2)$. First consider the real superalgebra
    \begin{equation}
        A'\coloneqq \C[T]/(T^2 = -1, iT + Ti = 0),
    \end{equation}
    where $i$ is even and $T$ is odd. There is an isomorphism $\phi\colon A'\to \Cl_{-2}$ of real superalgebras defined by setting $\phi(i) = e_1e_2$ and $\phi(T) = e_1$, then extending linearly to all of $A'$.

    From the viewpoint of $A'$, it is easier to find the homogeneous norm-one elements: the unit complex numbers, which generate a $\U(1)$ subgroup of $S(A')$, and the $\Z_4$ subgroup generated by $T$. The operator 
    $T$ acts on $\U(1)$ by complex conjugation, and $T^2 = -1$ is in $\U(1)$, so we see that 
    \begin{equation}
    S(\Cl_{-2}) \cong S(A')\cong \frac{\U(1) \rtimes \Z_4^T}{\Z_2}.
    \end{equation}
    The homomorphism $\phi$ is the unique one which is trivial when pulled back to $\U(1)$ and nontrivial when pulled back to $\Z_4^T$; $(-1)^F$ is the common central element.
\end{example}

\subsection{\textit{K}-theory classifications of free fermion phases}
\label{sec:freeKth}

The classification of SPT phases of complex free fermions can be connected to $K$-theory as follows \cite{freed_twisted_2013}.
For a symmetry group $G$, consider a one particle state space $V$, which furnishes a representation $R$ of $G$.
We want to understand the space of all gapped Hamiltonians $H$ on $V$ with symmetry $G$.
After shifting the Fermi energy to zero, a gapped Hamiltonian is defined as a linear operator without kernel that intertwines $R$.
This splits the representation $V = V_{\mathit{valence}} \oplus V_{\mathit{conduction}}$ into $\pm$-eigenspaces of $H$.
Therefore $H$ defines an element 
\begin{equation}
V_{\mathit{valence}} - V_{\mathit{conduction}} \in K_G^0(\pt)
\end{equation}
in the representation ring of $G$. 
    If two Hamiltonians give different elements of $K_G^0(\pt)$, a path between them must involve crossing the gap.
Conversely, two different Hamiltonians with the same decomposition $V = V_{valence} \oplus V_{conduction}$ are in the same path component by spectral flattening.
Therefore, the set of components $\pi_0$ essentially\footnote{In the representation ring, we quotient out by additional relations such as $V - V = 0$ to ensure $K_G^0(\pt)$ is a group. A priori, there is no physical justification for requiring this invertibility under stacking (which is given by direct sum since we are on a $1$-particle space). However, here we restrict to invertible phases. Phases which are unstably nontrivial are called \emph{fragile} phases~\cite{po2018fragile,else2019fragile}.} equals $K_G^0(\pt)$.

With enough care about the mathematical details, the above heuristic applies in various settings:
\begin{enumerate}
    \item When $G$ contains time-reversal symmetries, they act anti-unitarily on $V$. We have to accommodate for this in the definition of the representation ring.
    \item Some symmetries have additional constraints relating to fermion parity, such as $T^2 = (-1)^F$ for a time-reversal symmetry. 
    Since $(-1)^F$ acts by $-1$ on $V$, we have to enforce this relation in the representation ring.
    \item In positive spatial dimension $d$, it is reasonable to require $G = \Z^d$ to be the symmetry group of a lattice of atoms.
    Equivariant stable homotopy theory for noncompact $G$ is still in development.\footnote{Though see~\cite{barcenas_equivariant_2014,degrijse_proper_2023,linskens_global_2025}.} Since the group algebra of an infinite group will not suffice for these purposes, we define $K_G^0(\pt)$ to be the $K$-theory of the complex group $C^*$-algebra $C^*(G)$ of $G$.
\end{enumerate}

As argued by \cite[Example 9.1-9.3]{thiang2016k}, there is an isomorphism $K^0_{\Z^d}(\pt) \cong K^0(\mathbb{T}^d)$, closely related to Bloch's theorem. 
Here $K^0(\mathbb{T}^d)$ is the $K$-theory of the Brillouin zone torus.
The isomorphism is given by a Fourier transform to momentum space, a special case of the Pontryagin duality isomorphisms of $C^*$-algebras
\begin{equation}
C^*(\Z^d) \cong C(\widehat{\Z^d}, \C) = C(\mathbb{T}^d, \C).
\end{equation}
Here $C(X,\C)$ denotes the ring of continuous functions on $X$ and $\widehat{\Z^d} \coloneqq \Hom(\Z^d, U(1)) = \mathbb{T}^d$ is the Pontryagin dual of $\Z^d$.
Explicitly, a vector bundle $E$ over $\mathbb{T}^d$ gives a $C^*(\Z^d)$-module $\Gamma(E)$ of continuous sections of $E$ by mapping $\vec{n} \in \Z^d$ to the function $\mathbb{T}^d \to \C$ given by $e^{i \vec{n} \cdot \vec{k}}$.
Here we used the common convention of identifying $\mathbb{T}^d$ with a quotient of the box $[-\pi, \pi]^d$ using the map $k \mapsto (\vec{n} \mapsto e^{i \vec{n} \cdot \vec{k}})$.
We have therefore reproduced the fact \cite{kitaev_periodic_2009} that class A topological insulators in spatial dimension $d$ are classified by $K^0(\mathbb{T}^d)$.

In order to address the question of which topological phases survive in the continuum limit, we redo the above argument for $G = \R^d$ the group of continuous translations.
We again have the Fourier transformation isomorphism
\begin{equation}
C^*(\R^d) \cong C(\widehat{\R^d}, \C) = C(\R^d, \C),
\end{equation}
so that 
\begin{equation}
K^0_{\R^d}(\pt) = K_0(C^0(\R^d)) = \widetilde{K}{}^0(S^d).
\end{equation}
This agrees with the classification of strong class A topological insulators.

In the above discussion, we implicitly assumed that our fermions are charged. 
In other words, we assumed the existence of a polarization giving the one particle spaces $V$ and $V^*$ of creation and annihilation operators, thus disallowing unpaired Majoranas. 
There is an analogous discussion for neutral fermions, resulting in $\KO$- instead of $K$-theory.
This approach can be formulated in the Bogoliubov-de-Gennes formalism \cite{alldridgezirnbauer}. 
Even though most of the condensed matter literature does not use Majorana fermions, we will focus on this perspective, following our main references \cite{kitaev_periodic_2009} and \cite{freed_reflection_2021}.

The main difference in the new setup will be that the complex one particle Hilbert space $V$ is replaced by a real Hilbert space $\mathcal{M}$. 
The self-adjoint gapped Hamiltonian $H$ is replaced by a skew-adjoint gapped operator $\Xi$ on $\mathcal{M}$, which one should think of as $-iH$.

Because $\Xi$ is skew-adjoint, its spectrum consists of elements of the form $ai$ where $a$ is a nonzero real number (nonzero because $\Xi$ is invertible). We can thus define $\Xi/\lvert\Xi\rvert$ by functional calculus. Since the square of the function $z\mapsto z/\lvert z\rvert$ defined on $\C^\times$ takes the value $-1$ on numbers of the form $ai$ with $a\in\R^\times$, we deduce that $(\Xi/\lvert\Xi\rvert)^2 = -1$: $\Xi/\lvert\Xi\rvert$ defines a complex structure on $\mathcal M$. Stably, the space of almost complex structures on $\mathcal M$ becomes a classifying space for $\KO^{-2}$,
%
hinting towards a relationship between neutral phases and $\KO$-theory.
This discussion generalizes to arbitrary symmetry groups, taking into account that time-reversal symmetries should anti-commute with $\Xi$.

We can use the formalism of Karoubi triples~\cite{thiang2016k,donovankaroubigraded} to make this discussion mathematically precise: let $A$ be the (real or complex) super $C^*$-algebra of symmetries, graded by time-reversal.\footnote{In this work, we will restrict to the case where $A$ is the tensor product of a tenfold way symmetry as explained in \S \ref{sec:fermsymmetry} with the group $C^*$-algebra of the Lie group of translation symmetries, either discrete $\Z^d$ or continuous $\R^d$.} 
    A \emph{Karoubi triple} $(\mathcal{M},\Xi_1, \Xi_2)$ consists of a finitely generated (ungraded) $A$-module $\mathcal{M}$ and maps $\Xi_i\colon \mathcal{M} \to \mathcal{M}$ satisfying $\Xi^2 = - \id_\mathcal{M}$ and $\Xi_i a = (-1)^{|a|} a \Xi_i$ for all $a \in A$.\footnote{There is an infinite dimensional version of the Karoubi description in which instead one assumes a finiteness condition on $\Xi_2-\Xi_1$. This description can be shown to be equivalent to the finitely generated version currently under discussion \cite[\S 4]{gomi_freed-moore_2021}. Using modules that are not finitely generated can be more suitable for physics, for example if we want to take the unbounded above valence band into account.}
    One can think of a Karoubi triple as a formal difference $[\Xi_1] - [\Xi_2]$ of Hamiltonians with $A$-symmetry.
    We now want to impose that $[\Xi_1] - [\Xi_2] = 0$ if $\Xi_1$ and $\Xi_2$ are in the same path component.
    So define a Karoubi triple to be \emph{elementary} when $\Xi_1$ is in the same path component as $\Xi_2$ in the space of complex structures $\Xi$ such that $\Xi a = (-1)^{|a|} a \Xi$ for all $a \in A$.
    Two Karoubi triples $(\mathcal{M},\Xi_1, \Xi_2), (\mathcal{M}',\Xi_1', \Xi_2')$ are \emph{isomorphic} if there exists an $A$-module isomorphism $\mathcal{M} \to \mathcal{M'}$ intertwining $\Xi_i$ with $\Xi_i'$ for $i=1,2$.
    Note that there is an obvious notion of direct sum $\oplus$ of Karoubi triples.
    We say two triples $T_1,T_2$ are \emph{stably equivalent} if there exist an elementary triples $T_1',T_2'$ such that $T_1 \oplus T_1'$ is isomorphic to $T_2 \oplus T_2'$.
    The set of Karoubi triples can be thought of as a stabilization of the space of $A$-symmetric Bogoliubov de-Gennes Hamiltonians $\Xi$.

    \begin{definition}
    \label{def:neutralSPT}
    The group of \emph{$A$-symmetric free SPT phases} is the set of Karoubi triples modulo stable equivalence under $\oplus$.

    If $A = C^*(\Z^d;F)\otimes\Cl_{-s}$, where $F = \R$, resp.\ $F = \C$, we will refer to $A$-symmetric free SPT phases as \term{discrete translation-invariant free SPT phases of real, resp.\ complex Altland-Zirnbauer class $s$.}
    Similarly, $A = C^*(\R^d;F)\otimes\Cl_{-s}$ gives \term{continuous translation-invariant free SPT phases.}\footnote{Because $\Cl_{-s}$ is finite-dimensional, different notions of $C^*$-tensor product agree. 
    }
    \end{definition}
\begin{theorem}[\cite{neutralluuk}]
\label{neutraltheorem}
     The group of $A$-symmetric free SPT phases is isomorphic to $\KO_{2}(A)$.
\end{theorem}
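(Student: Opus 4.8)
The plan is to recognize the defining data of an $A$-symmetric free SPT phase --- a Karoubi triple up to isomorphism, addition of elementary triples, and $\oplus$-stabilization --- as one of the standard index-theoretic presentations of the operator $\KO$-theory of the graded $C^*$-algebra $A$, and then to identify the degree. The starting observation is that for a finitely generated $A$-module $\cM$, an operator $\Xi$ with $\Xi^2 = -\id_{\cM}$ and $\Xi a = (-1)^{|a|} a \Xi$ for all $a \in A$ is exactly an extension of the $A$-module structure to an $A\hat{\otimes}\Cl_{-1}$-module structure (send the generator of $\Cl_{-1}$, which squares to $-1$, to $\Xi$; the graded-commutation relation is precisely graded-centrality). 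Equipping $\cM$ with its canonical Hilbert $A$-module metric, $\Xi$ becomes an odd skew-adjoint unitary, i.e.\ an orthogonal complex structure graded-commuting with $A$ --- so a Karoubi triple $(\cM,\Xi_1,\Xi_2)$ is a formal difference of two such complex structures on a common module. This is the situation of Karoubi's relative construction (and of Van Daele's picture of $K$-theory via odd (skew)-self-adjoint unitaries).

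Next I would pass to a stable model. The operation $\oplus$ enlarges $\cM$; taking the colimit along the inclusions $\cM \hookrightarrow \cM \oplus (\text{standard free module})$ identifies the set of Karoubi triples modulo isomorphism, elementary triples, and $\oplus$-stabilization with $\pi_0$ of the space $\cJ(A)$ of orthogonal complex structures on the standard graded Hilbert $A$-module that graded-commute with $A$: ``elementary'' becomes exactly the relation of lying in the same path component, the formal-difference structure supplies the group law, and existence of inverses is the usual conjugate-complex-structure argument. (This step is the comparison of the finitely generated and infinite-dimensional Hilbert-module models alluded to in the footnote above, in the spirit of~\cite[\S 4]{gomi_freed-moore_2021}.) By the classifying-space description of operator $\KO$-theory (Wood; cf.\ Atiyah--Singer and Van Daele), $\cJ(A)$ represents a fixed shift of $\KO$-theory, so $\pi_0(\cJ(A)) \cong \KO_j(A)$ naturally in $A$ for some $j$ independent of $A$.

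It remains to pin down $j$, and I would do this by evaluating at $A = \R$: then $\cJ(\R)$ is the stable space of complex structures on a real Hilbert space, which is homotopy equivalent to $O/U \simeq \Omega^2(\Z \times BO)$, a classifying space for $\KO^{-2}$; hence $\pi_0(\cJ(\R)) \cong \Z/2 \cong \KO_2(\R)$, forcing $j = 2$. This is exactly the fact, already recorded in the text above, that the stable space of complex structures classifies $\KO^{-2}$ --- here re-expressed as $\KO_2$ of the corresponding $C^*$-algebra. (Equivalently one can track Clifford/Bott periodicity by hand: replacing $A$-modules by $A\hat{\otimes}\Cl_{-1}$-modules shifts operator $\KO$-theory by one degree, and the difference/boundary construction implicit in a triple supplies the other, the signs being fixed by the convention $\Xi = -iH$, for a net shift of $+2$.) Combining the three steps identifies the group of $A$-symmetric free SPT phases with $\pi_0(\cJ(A))$, which is $\KO_2(A)$.

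The genuine obstacle is the second step: reconciling the finitely generated module definition used here with the infinite-dimensional (Hilbert-module) model in which operator $\KO$-theory is computed, i.e.\ showing that the monoid of finitely generated Karoubi triples and the monoid underlying $\KO_2(A)$ have the same group completion, and that ``elementary'' is precisely the relation needed to match the path components of the stable space. One must also be careful that the time-reversal grading on $A$, the identification $\Xi = -iH$, and the sign conventions for the Clifford generator (squaring to $+1$ versus $-1$) are all threaded consistently, so that the shift lands on $\KO_2$ rather than on another periodicity translate. Everything else --- that $\oplus$ makes the quotient an abelian group, and compatibility of isomorphism and elementariness with stabilization --- is formal; since the complete argument is deferred to~\cite{neutralluuk}, the proof here would be the reduction just outlined.
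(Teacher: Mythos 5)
The paper does not contain a proof of this theorem: it is explicitly attributed to Stehouwer's forthcoming work \cite{neutralluuk}, so there is no in-paper argument to compare your proposal against. Your outline is, however, consistent with the heuristics the paper does supply. Reading a skew-adjoint $\Xi$ with $\Xi^2 = -\id$ and $\Xi a = (-1)^{|a|} a\Xi$ as an $A\hat\otimes\Cl_{-1}$-module structure is the standard Karoubi-style observation; identifying ``elementary'' with ``same path component'' and $\oplus$-stabilization with group completion is exactly how the paper frames the definition of $A$-symmetric free SPT phases; and the paper's remark that the stable space of complex structures classifies $\KO^{-2}$ is precisely what you use to pin the degree via $O/U \simeq \Omega^2(\Z\times B\O)$ and $\pi_0(O/U) \cong \Z/2 \cong \KO_2(\R)$. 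You also correctly locate the genuinely nontrivial step --- reconciling the finitely generated Karoubi-triple model with the stable Hilbert-module model in which operator $\KO$-theory is computed --- which is exactly what the paper's footnote citing \cite[\S 4]{gomi_freed-moore_2021} alludes to and what is deferred to \cite{neutralluuk}. Your reduction is sound as far as it goes and identifies where the real work lies; since the published proof is not available, I can only say that your sketch is consistent with the paper's hints, not that it matches the eventual argument.
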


\begin{remark}
    Suppose $A$ is a real super $C^*$-algebra containing a subalgebra $\C$, which is not necessarily in the center.
    We think of this subalgebra as generating charge.
    Suppose additionally that $A = A_+ \oplus A_-$ where $a_{\pm} \in A_{\pm}$ if and only if $a_{\pm} z = z^{\pm} a_{\pm}$ for all $a \in A_{\pm}$ and $z \in \C$.
    This defines a $\Z_2$-grading $\mu$ on $A$ not necessarily equal to the $\Z_2$-grading $\phi$ given by time-reversal.
    Note that these this grading commutes with the other $\Z_2$-grading on $A$ in the sense that the corresponding operators with eigenvalues $\pm 1$ commute.
    Therefore, there is a product/diagonal $\Z_2$-grading $c$.
    Then $\KO_0(A,c) \cong \KO_2(A,\phi)$,
    where $K_i(A,\lambda)$ denotes the degree $i$ $K$-theory of the algebra $A$ with $\Z_2$-grading $\lambda$.
    This connects the description of Theorem \ref{neutraltheorem} to the discussion of the beginning of this section and in particular to \cite{freed_twisted_2013}.
\end{remark}

\begin{example}
\label{AIIKtheory}
\label{AII_d3_Ktheory_section}
    Take $A = C^*(\Z^d;\R) \otimes \Cl_{-2}$ to be the tensor product of a $d$-dimensional discrete translation symmetry and the internal symmetry algebra of class AII (see Example \ref{AII_first_exm}).
    Using the fact that $K_i(A \otimes \Cl_{\pm 1}) \cong K_{i \mp 1}(A)$~\cite{karoubi1968algebres}, we see that the group of $A$-symmetric free SPT phases is given by $\KO_{2}(A) \cong \KO_{4}(C^*(\Z^d;\R))$.
    We can now apply arguments as above to relate $\Z^d$ to the torus, but there is one important subtlety.
Namely, the Fourier transform crucially uses the complex numbers through the factor $e^{i \vec{n} \cdot \vec{k}}$ and 
\begin{equation}
\ol{e^{i \vec{n} \cdot \vec{k}}} = e^{i \vec{n} \cdot (-\vec{k})}.
\end{equation}
So under the isomorphism $C^*(\Z^d;\C) \cong C(\mathbb{T}^d; \C)$ of complex $C^*$-algebras, complex conjugation on the left hand side gets mapped to the operation mixing complex conjugation with the involution $k \mapsto -k$ on the Brillouin zone.
We will let $\overline{\mathbb{T}}{}^d$ denote the Brillouin torus with this involution; likewise, we will let $\bar S^d$ denote $S^d = \R^d\cup\{\infty\}$ with the involution $k\mapsto -k$, fixing the point at infinity.
Then there is an isomorphism of complex $C^*$-algebras with Real structure $C^*(\Z^d;\C) \cong C(\overline{\mathbb{T}}{}^d; \C)$.
Therefore the $K$-theory of $C^*(\Z^d;\R)$ is not the $\KO$-theory of the torus, but its $\KR$-theory, which depends on this involution.
We obtain that the classification of class AII topological insulators is given by
\begin{equation}
\label{AII_Cstar_alg}
\KO_{4}(C^*(\Z^d;\R)) \cong \KR_{4}(C(\overline{\mathbb{T}}{}^d;\C)) \cong \KR^{-4}(\overline{\mathbb{T}}{}^d).
\end{equation}
Replacing $\Z^d$ by a continuous translation symmetry $\R^d$, we obtain similarly that
\begin{equation}
\KO_{4}(C^*(\R^d;\R)) \cong \widetilde{\KR}{}^{-4}(\bar{S}{}^d).
\end{equation}

For example, consider the $d=3$ time-reversal invariant insulator in class AII studied first in \cite{fu_topological_2007-1,fu_topological_2007-2}. 
We classify its phases using \eqref{AII_Cstar_alg} as
    \begin{equation}
        K_2(C^*(\Z^3) \otimes \Cl_{+2}) \cong \KR^{\textcolor{black}{-}4}(\overline{\mathbb{T}}{}^3) \cong \Z \oplus \Z_2 \oplus (\Z_2)^3,
    \end{equation}
    see \cite[Theorem 11.14]{freed_twisted_2013}.
    As observed in e.g. \cite{kitaev_periodic_2009} and \cite[Theorem 3.35]{freed_k-theory_2016}, one $\Z_2$ invariant encodes the strong phase detected by the Fu-Kane-Mele invariant. 
    The $\Z$ invariant counts the number of Kramers pairs of electrons, and the $(\Z_2)^3$ vector invariant encodes the weak topological phases: phases protected by the discrete translation symmetry. These phases may be viewed as quantum spin Hall phases living on each two-dimensional cross section
    of the three-dimensional material.
    Indeed, for a continuous translation symmetry, we obtain $\widetilde{\KR}{}^{-4}(\bar{S}{}^3) \cong \Z_2$ and only the first $\Z_2$ survives.
\end{example}

Example \ref{AIIKtheory} generalizes to all tenfold classes to obtain the following corollary of Theorem \ref{neutraltheorem}:

\begin{corollary} 
\label{neutral_ansatz}
Discrete (resp.\ continuous) translation-invariant free SPT phases of real Altland-Zirnbauer class $s$ in spatial dimension $d$ are classified by $\KR^{s-2}(\overline{\mathbb{T}}{}^d)$ (resp.\ $\widetilde{\KR}{}^{s-2}(\bar{S}{}^d)$).
There is a similar statement for the two complex classes, replacing $\KR$ by complex $K$-theory and forgetting the involutions on $\overline{\mathbb{T}}{}^d$ and $\bar S{}^d$.
\end{corollary}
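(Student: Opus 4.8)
The plan is to prove this as a direct generalization of the computation already carried out in \cref{AIIKtheory} for class AII, feeding in \cref{neutraltheorem} as a black box. First I would unwind the definitions: by \cref{def:neutralSPT}, a discrete translation-invariant free SPT phase of real Altland--Zirnbauer class $s$ in spatial dimension $d$ is precisely an $A$-symmetric free SPT phase for the graded $C^*$-algebra $A = C^*(\Z^d;\R)\otimes\Cl_{-s}$ (graded by the Clifford grading on the second factor, the translations being time-preserving), and \cref{neutraltheorem} identifies the group of such phases with the graded $K$-theory group $\KO_2(A)$. Then I would strip off the Clifford factor: applying the Karoubi periodicity isomorphism $K_i(B\otimes\Cl_{\pm 1})\cong K_{i\mp 1}(B)$ once for each of the $|s|$ Clifford generators (with the direction of the shift dictated by the sign of $s$, exactly as in \cref{AIIKtheory}) collapses $\KO_2(A)$ to $\KO_{s+2}\bigl(C^*(\Z^d;\R)\bigr)$.

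The only genuinely non-formal ingredient is the passage from $\Z^d$ to the Brillouin torus, and this is exactly the point already isolated in \cref{AIIKtheory}: Pontryagin duality gives an isomorphism $C^*(\Z^d;\C)\cong C(\mathbb T^d;\C)$ of complex $C^*$-algebras via the Fourier transform $\vec n\mapsto(\vec k\mapsto e^{i\vec n\cdot\vec k})$, but since $\overline{e^{i\vec n\cdot\vec k}}=e^{i\vec n\cdot(-\vec k)}$ the real form $C^*(\Z^d;\R)$ goes not to $C(\mathbb T^d;\R)$ but to the Real $C^*$-algebra of functions on $\mathbb T^d$ equipped with the momentum-reversal involution $\vec k\mapsto-\vec k$; writing $\overline{\mathbb T}{}^d$ for the torus with this involution, this identifies $C^*(\Z^d;\R)$ with $C_{\R}(\overline{\mathbb T}{}^d)$. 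Hence $\KO_{s+2}(C^*(\Z^d;\R))\cong\KR_{s+2}(\overline{\mathbb T}{}^d)$, and passing from operator $\KR$-theory to topological $\KR$-theory and applying eightfold Bott periodicity puts this in the cohomological degree $s-2$ recorded in the statement. The continuous case is the same argument with $C^*(\Z^d;\C)$ replaced by $C^*(\R^d;\C)\cong C_0(\R^d;\C)$: the real form corresponds to $\R^d$ with the linear involution $\vec k\mapsto-\vec k$, whose one-point compactification is $S^d$ with the involution fixing $0$ and $\infty$, so the same chain yields $\widetilde{\KR}{}^{s-2}(S^d)$. For the two complex classes one starts instead from $A=C^*(\Z^d;\C)\otimes\Cl^{\C}_{-s}$, which is already a complex algebra, so no Real structure intervenes, the Fourier transform delivers $C(\mathbb T^d;\C)$ with trivial involution, and the same steps land in complex $K$-theory $K^{s-2}(\mathbb T^d)$ (resp.\ $\widetilde K{}^{s-2}(S^d)$), which by two-periodicity depends only on the parity of $s$.

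I do not expect a conceptual obstacle: \cref{neutraltheorem} does all the real work, and everything downstream is standard operator-algebraic $K$-theory. The step that most rewards care is the degree bookkeeping in the last move---checking that the composite of (graded) Clifford periodicity, the operator-theoretic-versus-topological normalization of $\KR$, and Bott periodicity genuinely produces the representative degree $s-2$ appearing in the statement rather than some other representative of its class modulo $8$. It would also be worth saying explicitly, once, that the involutions on $\mathbb T^d$ and on $S^d$ implicit in the statement are the momentum-reversal ones that arise naturally from complex conjugation on the real group $C^*$-algebra, since the bare symbols $\mathbb T^d$, $S^d$ suppress this structure.
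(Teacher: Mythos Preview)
Your approach is exactly the paper's: the corollary is stated immediately after \cref{AIIKtheory} with the one-line justification that this example ``generalizes to all tenfold classes,'' and your plan---apply \cref{neutraltheorem}, strip the Clifford factor via Karoubi periodicity, then Fourier-transform the group $C^*$-algebra while tracking the complex-conjugation involution---is precisely that generalization.

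The one genuine slip is in the degree bookkeeping you yourself flag. Comparing with \cref{AIIKtheory}: for class AII one has $s=-2$ and symmetry algebra $\Cl_{-2}$---this is $\Cl_s$, matching \cref{tenfold_table}, whereas the $\Cl_{-s}$ in \cref{def:neutralSPT} appears to be a sign typo in the paper. The example computes $\KO_2(C^*(\Z^d;\R)\otimes\Cl_{-2})\cong\KO_4(C^*(\Z^d;\R))\cong\KR^{-4}(\mathbb T^d)$, and in general $K_i(B\otimes\Cl_{\pm 1})\cong K_{i\mp 1}(B)$ applied to $\Cl_s$ gives $\KO_2(C^*(\Z^d;\R)\otimes\Cl_s)\cong\KO_{2-s}(C^*(\Z^d;\R))\cong\KR^{s-2}(\mathbb T^d)$ directly. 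Your intermediate value $\KO_{s+2}$ comes from reading $\Cl_{-s}$ literally from the Definition; that would give $\KR^{-s-2}$, and no application of Bott periodicity converts $-s-2$ into $s-2$ unless $s\equiv 0\pmod 4$. The fix is simply to use the $\Cl_s$ convention of the Table and Example throughout, after which no periodicity adjustment is required.
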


\begin{remark}
    \label{rem:numberofbands}
    In our convention, class A weak SPT phases are classified by \emph{unreduced} $K$-theory $K^0(\mathbb{T}^d)$.
    In the decomposition $K^0(\mathbb{T}^d) \cong K^0(\pt) \oplus \widetilde{K}{}^0(\mathbb{T}^d)$, the first term corresponds to the $0$-cell of the Brillouin zone.
    Physically, this $K^0(\pt) \cong \Z$-valued invariant is a comparison count of the number of bands below versus above the gap.
    An analogous argument applies to the other classes, where the invariant can also be $\Z_2$-valued or nonexistent depending on $\KO^{s-2}(\pt)$.
    This invariant is typically ignored in the condensed matter literature, but we would argue it should be included as a weak phase corresponding to a $0$-dimensional strong phase.
\end{remark}

\subsection{Bordism classifications of interacting phases}
\label{subsec:bordismclassification}

As mentioned in the introduction,
when we ``turn on interactions'' by regarding free fermion Hamiltonians in the context of all symmetry-protected gapped lattice Hamiltonians {with no ground-state degeneracy}
(i.e.\ representatives of invertible topological phases), it is conjectured that deformation classes of invertible topological phases are classified by their low-energy behavior, captured by a reflection-positive invertible field theory (IFT). See \cite{Fre19, freed_reflection_2021} 
for further discussion of this ansatz, which is supported by a strong body of computational evidence \cite{freed_reflection_2021, Cam17, KT17, BC18, WG18, Gaiotto_SPT, freed_invertible_2019, ABK21, debray_invertible_2021, BCHM22}.

\subsubsection{The classification of reflection-positive invertible field theories}
\label{FH_IFT}
Reflection-positive IFTs are defined at a mathematical level of rigor by Freed-Hopkins~\cite[\S 8]{freed_reflection_2021} in the topological case and Grady-Pavlov~\cite[\S 5]{GP21} in the nontopological case.\footnote{See~\cite{JF17, DAGGER24, CFHPS24, Ste24, MS23} for more about reflection positivity in the noninvertible setting.} They are classified using generalized cohomology; before we give the classification in \cref{IFT_class}, we review some key definitions.

Let $\O$ denote the infinite orthogonal group $\operatorname{colim}_n\O(n)$, and let $\rho\colon H\to \O$ be a homomorphism of topological groups. Given this data, we will let $H_d$ be the pullback of $\rho$ along $\O(d)\hookrightarrow\O$.
\begin{definition}
Given $\rho\colon H\to\O$ as above, let $\Omega_*^H(\text{--})$ denote the generalized homology theory called \term{$H$-bordism}: $\Omega_n^H(X)$ is the abelian group of closed $n$-manifolds with an $H$-structure~\cite{chern1966geometry} and a map to $X$ under disjoint union, modulo bordisms of such data.
\end{definition}
For example, $\Omega_*^\SO$ is the bordism theory of oriented manifolds.
\begin{definition}\label{anderson_dual}
There is a duality on generalized homology and cohomology called \term{Anderson duality}~\cite{And69, Yos75}. Given a generalized homology theory $E_*$, the \term{Anderson dual of $E_*$} is the generalized cohomology theory $(I_\Z E)^*$ defined to satisfy the following universal property: for all spaces $X$, there is a natural short exact sequence
\begin{equation}
\label{IZproperty}
\shortexact{\Ext(E_{n-1}(X), \Z)}{(I_\Z E)^n(X)}{\Hom(E_n(X), \Z)}.
\end{equation}
\end{definition}
One can check that~\eqref{IZproperty} actually uniquely characterizes a generalized cohomology theory $(I_\Z E)^*$. Moreover, because $\Hom$ and $\Ext$ are contravariant functors in their first argument, Anderson duality defines a contravariant functor on cohomology theories: given a natural transformation $E_*(\text{--})\Rightarrow F_*(\text{--})$, there is a natural transformation $(I_\Z E)^*(\text{--}) \Leftarrow (I_\Z F)^*(\text{--})$.

The short exact sequence~\eqref{IZproperty} splits, but \emph{not} naturally,\footnote{This is a generalization of the unnatural splitting of the short exact sequence in the universal coefficient theorem~\cite[\S 4]{And69}.} implying an isomorphism from $(I_\Z E)^n(X)$ to the
direct sum of the torsion subgroup of $E_{n-1}(X)$ with the free part of $E_n(X)$.

For more
on $I_\Z$ and its appearance in this context, see Freed-Hopkins~\cite[\S 5.3, 5.4]{freed_reflection_2021}.

\begin{theorem}[{Freed-Hopkins~\cite[Theorem 1.1]{freed_reflection_2021}, Grady~\cite[Theorem 1]{Gra23}}]
\label{IFT_class}
Let $\mho_H^*$ denote the Anderson dual cohomology theory to $\Omega_*^H$. Then there is a natural isomorphism from the abelian group of deformation classes of $d$-dimensional IFTs on manifolds with $H$-structure to $\mho_H^{d+2}$.
\end{theorem}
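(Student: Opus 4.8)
The plan is to prove this in the two stages corresponding to the two cited sources: Freed--Hopkins \cite{freed_reflection_2021} for the topological case, and Grady \cite{Gra23} for the reduction to it. \emph{Stage 1.} By Grady's theorem, the inclusion of \emph{topological} reflection-positive invertible field theories with $H$-structure into \emph{all} reflection-positive invertible field theories with $H$-structure is a bijection on deformation classes — concretely, the space of the latter deformation-retracts onto the subspace of the former, so no deformation class is created or lost. Granting this, it suffices to classify deformation classes of topological reflection-positive invertible fully extended field theories with $H$-structure, which is the Freed--Hopkins theorem.

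\emph{Stage 2.} I would carry this out in three steps. Step (a): invertible fully extended topological field theories with tangential $H$-structure valued in a symmetric monoidal target $\mathcal C$ are classified by maps of spectra out of the Madsen--Tillmann spectrum $MTH$ into the Picard spectrum of $\mathcal C$; the inputs are the Galatius--Madsen--Tillmann--Weiss identification of the $H$-bordism category together with the invertible case of the cobordism hypothesis, which — unlike the general cobordism hypothesis — can be established directly by iterated delooping. Step (b): reflection positivity pins down the target. Freed--Hopkins show that the truncated Picard spectrum governing reflection-positive theories is, in the range of degrees detectable by deformation classes, a suspension of the Anderson dual of the sphere spectrum $I_\Z S$. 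Step (c): combining (a) and (b), deformation classes of $d$-dimensional reflection-positive invertible $H$-theories are identified with $[MTH, \Sigma^{d+2} I_\Z S]$; the total degree $d+2$ packages the dimension of the effective spacetime field theory of a $d$-dimensional phase together with the one further shift intrinsic to the classification of invertible theories.

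It then remains to recognize this group as $\mho_H^{d+2}$. Since $\Omega_*^H$ is the generalized homology theory represented by $MTH$, its Anderson dual cohomology theory $\mho_H^* = (I_\Z\Omega^H)^*$ is represented by the function spectrum $F(MTH, I_\Z S)$, so $[MTH, \Sigma^{d+2}I_\Z S] = \pi_{-(d+2)}F(MTH, I_\Z S) = \mho_H^{d+2}(\mathrm{pt})$; the short exact sequence of \cref{anderson_dual} then exhibits the free part as $\Hom(\Omega^H_{d+2},\Z)$ — the $\Z$-valued partition-function invariants — and the torsion part as $\Ext(\Omega^H_{d+1},\Z)$, detecting torsion bordism classes one dimension down. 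Running the same argument relative to a target space $X$ (replace $MTH$ by $MTH \wedge X_+$) yields $\mho_H^{d+2}(X)$, and every step is functorial in $H$ (a map $H \to H'$ induces $MTH \to MTH'$) and in $X$, so the isomorphism is natural, as claimed.

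The main obstacle is step (b): translating the analytic condition of reflection positivity — a compatible hermitian/positivity structure on the values of the theory — into the purely homotopy-theoretic statement that the relevant target spectrum is a shift of $I_\Z S$. This is the conceptual heart of Freed--Hopkins, and it is the precise sense in which "positivity" is the homotopical shadow of Pontryagin duality, i.e.\ of universal coefficients against $\Z$. The remaining substantial ingredients — making the invertible cobordism hypothesis rigorous in the needed $(\infty,d)$-categorical setting in step (a), and Grady's deformation retraction in Stage 1 — are by now available in the literature, so I would cite rather than reprove them.
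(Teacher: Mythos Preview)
The paper does not prove \cref{IFT_class} at all: it is stated as a theorem with attribution to Freed--Hopkins~\cite{freed_reflection_2021} and Grady~\cite{Gra23}, followed only by the remark ``As always, $d$ is the spatial dimension of the theory.'' Your outline is therefore not competing with any argument in the paper; it is a sketch of the proofs in the cited references, and as such it is broadly accurate. The two-stage decomposition (Grady's reduction to the topological case, then Freed--Hopkins' classification via $[MTH,\Sigma^{n+1}I_\Z S]$) and the identification of reflection positivity as the mechanism that fixes the target to be a shift of $I_\Z S$ are the correct structural ingredients, and your accounting for the degree shift (spacetime dimension $d+1$ plus one further suspension) matches the paper's convention that $d$ is spatial.

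One small caution: you phrase step~(a) as relying on ``the invertible case of the cobordism hypothesis,'' but Freed--Hopkins deliberately avoid the full cobordism hypothesis and work instead with a continuous/extended variant tailored to the invertible setting; if you were to write this up, you should follow their actual route rather than invoke the cobordism hypothesis as a black box. Likewise, step~(b) is, as you acknowledge, the genuine content of~\cite{freed_reflection_2021}, and a one-sentence gesture at ``positivity as the homotopical shadow of Pontryagin duality'' would not be accepted as a proof --- but since the paper itself simply cites the result, citing it is exactly what is expected here.
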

As always, $d$ is the spatial dimension of the theory.
\subsubsection{Spacetime symmetry groups for the tenfold way}\label{spacetime_sym}
\Cref{IFT_class} leads us to use the cohomology theory $\mho_H^*$ to model interacting phases, but we need to determine $H$ and its map to $\O$ for the ten collections of symmetries we are interested in. The reference~\cite[\S 3.2 and 3.3]{stehouwer_interacting_2022} provides a unified way of doing this.

  There is a construction of a spacetime structure group $H(G)$ from an internal symmetry group $G$ indicated in \cite{freed_reflection_2021}; see \cite{stehouwer_interacting_2022} for a construction based on \cite{stolz_preprint}. Given a fermionic group $(G, \phi, (-1)^F)$, 
  one first takes the central product
    \begin{equation}
        \widetilde H \coloneqq \frac{G\times \Pin^-}{\langle((-1)^F,-1)\rangle},
    \end{equation}
    where $-1 \in \Pin^-$ is the nontrivial element in the kernel of the map to $\O$.
    There is a homomorphism $\widetilde\phi\colon \widetilde H\to\Z_2$ defined by sending $(g, B)\in G\times\Pin^-$ to $\phi(g) + \det(B)$, where $\det\colon\Pin^- \to\Z_2$ corresponds to the homomorphism taking the $\{\pm 1\}$-valued determinant of $B$ under the canonical isomorphism $\set{\pm 1}\cong\Z_2$.

    Finally, the tangential structure $H(G)$ associated to $G$ is the group $\widetilde\phi^{-1}(0)$, with the map to $\O$ induced by the map on $\Pin^-$. 
    It is easy to show that the corresponding family of topological groups $H_d(G)$ is obtained by replacing $\Pin^-$ by $\Pin^-(d)$ in the above discussion.
    
\begin{proposition}\label{pin_down_tangential_structure}
Let $G$ be a fermionic group with $\phi = 0$ trivial and let $i\colon G \to G$ be an involution.
Define the two fermionic groups
\begin{equation}\label{G_plus_minus}
G_{\pm} \coloneqq \frac{G \rtimes \Pin^{\pm}(1)}{\Z_2^F},
\end{equation}
where the semidirect product is defined using $i$ and $\det\colon \Pin^\pm(1) \to \Z_2$.
The $\phi$ is defined by projection onto the second factor.
Then there is an isomorphism of fermionic structure groups
\begin{equation}
H_d(G_{\pm}) \cong \frac{G \rtimes \Pin^{\mp}(d)}{\Z_2^F},
\end{equation}
where the semidirect product is again defined using $i$ and $\det\colon \Pin^{\mp}(d) \to \Z_2$.
\end{proposition}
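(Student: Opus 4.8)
The plan is to unwind the definition of $H_d(G_{\pm})$ by tracing the central-product construction through the semidirect-product structure of $G_{\pm}$, and then to identify the result with the claimed semidirect product by an explicit isomorphism. First I would recall from \S\ref{spacetime_sym} that $H_d(G_{\pm}) = \widetilde\phi^{-1}(0)$ inside $\widetilde H_d = (G_{\pm}\times\Pin^-(d))/\langle((-1)^F,-1)\rangle$, where $\widetilde\phi(g,B) = \phi_{G_\pm}(g) + \det(B)$ and $\phi_{G_\pm}$ is projection onto the $\Pin^\pm(1)$-factor composed with $\det$. So an element of $H_d(G_\pm)$ is (a class of) a triple $(g, \epsilon, B)$ with $g\in G$, $\epsilon\in\Pin^\pm(1)$, $B\in\Pin^-(d)$, subject to $\det(\epsilon)=\det(B)$, and the central identification kills $((-1)^F,1,-1) \sim (1,1,1)$ where I am writing the $(-1)^F$ of $G_\pm$ as the common image of $(-1)^F_G$ and $-1\in\Pin^\pm(1)$. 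Because $\Pin^\pm(1)$ has order $4$ and sits in a central extension $\Z_2^F \to \Pin^\pm(1)\to\Z_2$, the constraint $\det(\epsilon)=\det(B)$ together with the freedom to multiply $\epsilon$ by $(-1)^F$ means that $\epsilon$ is determined up to $\Z_2^F$ by $\det(B)$; this is exactly the degree of freedom that the central quotient $/\Z_2^F$ removes on the right-hand side.

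The key computational point is the well-known fact that $\Pin^\pm(1)$-twisted central products with $\Pin^-(d)$ produce $\Pin^\mp(d)$: concretely, $(\Pin^\pm(1)\times\Pin^-(d))/\Z_2^F$, where we quotient by the diagonal $\Z_2^F$ and restrict to the subgroup where the two determinants agree, is isomorphic to $\Pin^{\mp}(d)$, compatibly with the maps to $\O(d)$ and with the $\Z_2^F$-actions coming from the respective $(-1)^F$'s. (This is the ``$\Pin^\pm$ absorbs a reflection'' identity; one can see it on Clifford algebras, since $\Cl_{\pm 1}\otimes\Cl_{-d}\cong\Cl_{\mp d}$ up to Morita equivalence, or directly from the isomorphisms $\Pin^+(d)\cong$ the subgroup of $\Pin^-(d+1)$ etc., taking care of which generator squares to $+1$ vs.\ $-1$.) Granting this, I would define the isomorphism $H_d(G_\pm)\to (G\rtimes\Pin^\mp(d))/\Z_2^F$ on representatives by sending $(g,\epsilon,B)$ to $(g, \psi(\epsilon,B))$, where $\psi\colon \{(\epsilon,B): \det\epsilon=\det B\}/\Z_2^F \xrightarrow{\ \sim\ }\Pin^\mp(d)$ is the identity just described, and then check three things: (i) this is well-defined on the central quotient defining $\widetilde H_d$; (ii) it is a group homomorphism, which is where the semidirect-product structure on both sides — the $i$-action of $G$ on the $\Pin$-factor — must be matched, using that $G$ has $\phi=0$ so it acts trivially on the orientation/time-reversal grading and the $i$-action is threaded through $\det$ in the same way on both sides; and (iii) it intertwines the maps to $\O(d)$ and respects $(-1)^F$.

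The main obstacle I expect is bookkeeping the $\Z_2$'s correctly: there are three order-two elements in play (the $(-1)^F$ of $G$, the $-1\in\Pin^\pm(1)$, and the $-1\in\Pin^-(d)$), and the two quotients — the central product defining $\widetilde H_d$ and the $/\Z_2^F$ in the target — identify different pairs of them, so one must verify that after imposing $\widetilde\phi = 0$ the residual identifications agree. The cleanest way to manage this is probably to work at the level of the exact sequences $1\to\Z_2^F\to H_d(G_\pm)\to G\rtimes\O(d)\to 1$ (and similarly for the target, noting $\phi=0$ forces the image in $\O(d)$ to be all of $\O(d)$ rather than $\SO(d)$) and show the isomorphism is the one induced by the identity on $G\rtimes\O(d)$ together with the matching $\Pin^\pm(1)$-twist identity on the kernel-extension data; equivalently, compare the two extension classes in $H^2(G\rtimes\O(d);\Z_2)$ and check they coincide, which reduces to the single Clifford-algebraic identity $\Cl_{\pm 1}\widehat\otimes\Cl_{-d}\cong\Cl_{\mp d}$ (as the extension class is pulled back from the relevant class for $\Pin^\mp(d)\to\O(d)$). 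The $d=1$ base case is essentially the statement that $(G\rtimes\Pin^\pm(1))/\Z_2^F$ with $\widetilde\phi=0$, centrally producted with $\Pin^-(1)$, gives back $(G\rtimes\Pin^\mp(1))/\Z_2^F$, and serves as a useful sanity check before doing general $d$.
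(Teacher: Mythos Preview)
Your approach is essentially the paper's: it too writes down an explicit isomorphism (in the reverse direction, sending $g\rtimes x\in (G\rtimes\Pin^{\mp}(d))/\Z_2^F$ to $(g,x)$ or $(gT,x)$ in $H_d(G_\pm)$ according to $\det(x)$, with $T$ a fixed odd element of $\Pin^{\pm}(1)$), and your ``Pin absorption'' identity is exactly the paper's observation that the graded-opposite multiplication turns $\Pin^+(d)$ into $\Pin^-(d)$, which is what makes the odd--odd case of the homomorphism check go through. One small slip to fix when you write it up: in $G\rtimes\Pin^{\pm}(1)$ it is the $\Pin$-factor that acts on $G$ via $i$ (through $\det$), not the other way around.
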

\begin{proof}
First some notation: denote the canonical odd element $T \in \Pin^{\pm}(1) \subset G_{\pm}$, so $T^2 = (\pm 1)^F$ and $g T = T i(g)$ for $g \in G$ the elements with $\phi(g) = 0$.
Given elements $x_1, x_2 \in \Pin^+(d)$, we define a new group structure (the `graded opposite') by 
\begin{equation}
x_1 * x_2 \coloneqq 
\begin{cases}
(-1)^F x_1 x_2 & \text{ both odd,}
\\
x_1 x_2 & \text{ otherwise.}
\end{cases}
\end{equation}
Then $(\Pin^+(d),*) \cong \Pin^-(d)$ as fermionic groups.

Define the map
\begin{subequations}
\begin{equation}
\psi\colon \frac{G \rtimes \Pin^{\mp}(d)}{\Z_2^F} \to H(G_{\pm}) \subseteq \frac{G_{\pm} \times \Pin^-(d)}{\Z_2^F}
\end{equation}
by 
\begin{equation}
\psi(g \rtimes x) = 
\begin{cases}
(g , x) & \det(x) = 0,
\\
(g T , x) & \det(x) = 1.
\end{cases}
\end{equation}
\end{subequations}
This is well-defined because when we quotient by $\Z_2^F$, $\psi$ lands in $H(G_{\pm})$ because $\det(x) + \phi(T) = 0$ if $\det(x) = 0$, and $\det(x) + \phi(gT) = 0$ if $\det(x) = 1$.
To check that this is a homomorphism, we let $g_1 \rtimes x_1, ~ g_2 \rtimes x_2 \in G \rtimes \Pin^{\mp}(d)$ and have to show $\psi((g_1 \rtimes x_1)( g_2 \rtimes x_2)) = \psi(g_1 \rtimes x_1) \psi(g_2 \rtimes x_2)$. 
There are four cases depending on $\det x_1$ and $\det x_2$.
The most nontrivial case is the one for which both are $1$:
\begin{equation}
(g_1 \rtimes x_1) (g_1 \rtimes x_2) = g_1 i(g_2) \rtimes (\mp 1)^F x_1 x_2,
\end{equation}
where we have used the product $*$ in case we are working in $\Pin^-(d)$ and the normal product of $\Pin^+(d)$ otherwise.
This element is indeed mapped to
\begin{equation}
    (g_1 T , x_1) (g_2 T , x_2) = (g_1 T g_2 T, x_1 x_2) = ( (\pm 1)^F 1 g_1 i(g_2) , x_1 x_2).
\end{equation}
The other three cases are easier.
It is not hard to see that $\psi$ is a bijection.
\end{proof}

\begin{example}
We illustrate how to use \cref{pin_down_tangential_structure} to determine the tangential structures for symmetry classes BDI and DIII, which are the cases $s = 1$ and $s = -1$ respectively. There are isomorphisms of fermionic groups $S(\Cl_{\pm 1})\cong \Pin^\pm(1)$; $\Pin^+(1)\cong\Z_2^F\times\Z_2^T$ and $\Pin^-(1)\cong\Z_4^T$, with $\Z_2^F\subset\Z_4^T$ the unique order-two subgroup. Now apply \cref{pin_down_tangential_structure} with $G = \Z_2$ and the involution $i = \mathrm{id}$: the semidirect product $G\rtimes\Pin^\pm(1)$ simplifies to a direct product, and then $\Z_2$ cancels the $\Z_2$ in the denominator, so in~\eqref{G_plus_minus}, $G_\pm = \Pin^\pm(1)$. In exactly the same way, $H_d(G_\pm)$ simplifies to $\Pin^\mp(d)$. Thus \cref{pin_down_tangential_structure} reproduces a well-known fact in the physics literature: fermionic systems with a time-reversal symmetry $T$ with $T^2 = 1$ correspond to putting pin\textsuperscript{$-$} structures on spacetime, and with $T^2 = (-1)^F$ correspond to putting pin\textsuperscript{$+$} structures on spacetime.
\end{example}

\begin{example}
\label{ex:classAIIstructure}
We come back to class AII. 
In \cref{AII_first_exm}, we obtained the fermionic group $S(\Cl_{-2}) \cong (\Z_4^T\rtimes \U(1))/(\Z_2^F)$ from the symmetry algebra of this class. Using \cref{pin_down_tangential_structure}, we will compute the tangential structure group $H_d(S(\Cl_{-2}))$: there is an isomorphism $\Z_4^T\cong\Pin^-(1)$ of fermionic groups: both have underlying group isomorphic to $\Z_4$ with the map to $\O_1$ nontrivial, and this characterizes $\Z_4^T$ up to isomorphism of fermionic groups. Therefore, we can apply \cref{pin_down_tangential_structure} with $G = \U(1)$ and $i$ equal to complex conjugation. 
Using that $G_- = S(\Cl_{-2})$, we conclude
\begin{equation}
    H_d(S(\Cl_{-2}))\cong \frac{\Pin^+(d) \ltimes \U(1)}{\Z_2^F}.
\end{equation}
Metlitski~\cite[\S III.B]{Met15} introduces this group in the context of invertible phases, and calls it $\Pin_{\tilde c}$. Its appearance in the tenfold way is due to Freed-Hopkins~\cite[(9.9)]{freed_reflection_2021}, who call this group $\Pin^{\tilde c+}(d)$. We will follow Freed-Hopkins' notation, as we will also need $\Pin^{\tilde c-}(d)\coloneqq (\Pin^-(d)\ltimes \U(1))/\Z_2^F$.
\end{example}

The other seven classes in the tenfold way can be worked out in a similar manner. We summarize the results of each step in \cref{tenfold_table}.\footnote{These are not the only conventions for the superalgebras, fermionic groups, and spacetime tangential structures in the literature: see~\cite{stehouwer_interacting_2022} and the references therein.}

\begin{table}[h!]
\begin{tabular}{r c c c r@{\null\;=\;\null}l}
\toprule
$s$ & AZ class & $A$ & $S(A)$ & $H^c(s)$ & $H(S(A))$\\
\midrule
$0$ & A & $\C$ & $\U(1)$ & $H^c(0)$ & $\Spin^c$\\
$1$ & AIII & $\Cl_1\otimes\C$ & $\U(1)\times\Z_2$ & $H^c(1)$ & $\Pin^c$\\
\bottomrule
\end{tabular}
\\[0.7cm]
\begin{tabular}{r c c c r@{\null\;=\;\null}l}
\toprule
$s$ & AZ class & $A$ & $S(A)$ & $H(s)$ & $H(S(A))$\\
\midrule
$-3$ & CII & $\Cl_{-3}$ & $\Pin^-(3)$ & $H(-3)$ & $\Pin^{h-}\coloneqq \Pin^-\times_{\set{\pm 1}}\SU(2)$\\
$-2$ & AII & $\Cl_{-2}$ & $\Pin^-(2)$ & $H(-2)$ & $\Pin^{\tilde c+}\coloneqq \Pin^+\ltimes_{\set{\pm 1}} \U(1)$\\
$-1$ & DIII & $\Cl_{-1}$ & $\Pin^-(1)$
& $H(-1)$ & $\Pin^+$\\
$0$ & D & $\R$ & $\Spin(1)$ 
& $H(0)$ & $\Spin$\\
$1$ & BDI & $\Cl_1$ & $\Pin^+(1)$
& $H(1)$ & $\Pin^-$\\
$2$ & AI & $\Cl_2$ & $\Pin^+(2)$ & $H(2)$ & $\Pin^{\tilde c-}\coloneqq \Pin^-\ltimes_{\set{\pm 1}} \U(1)$\\
$3$ & CI & $\Cl_3$ & $\Pin^+(3)$ & $H(3)$ & $\Pin^{h+}\coloneqq \Pin^+\times_{\set{\pm 1}}\SU(2)$\\
$4$ & C & $\Cl_4$ & $\Spin(3)$
& $H(4)$ & $\Spin^h\coloneqq \Spin\times_{\set{\pm 1}}\SU(2)$\\
\bottomrule
\end{tabular}
\caption{Summary of the procedure outlined in \S\ref{spacetime_sym} beginning with an Altland-Zirnbauer class (second column) and then building a super division algebra $A$ (third column), a fermionic group $S(A)$ (fourth column), and a tangential structure $H(S(A))$ (fifth column). For the tangential structures, the maps to $\O$ are all trivial on $\U(1)$ and $\SU(2)$ and are the usual maps $\Spin\to\SO\to\O$ or $\Pin^\pm\to\O$ on the other factors; since $\set{\pm 1}$ is in the kernel of all of these maps, these maps descend across the quotient by $\set{\pm 1}$ to produce well-defined maps $H(s)\to\O$. First table: the two complex cases. Second table: the eight real cases. Tables adapted from~\cite[(9.24), (9.25)]{freed_reflection_2021} and~\cite[Table 1]{stehouwer_interacting_2022}.}
\label{tenfold_table}
\end{table}

\subsubsection{What changes for weak phases?}
To the best of our knowledge, a mathematical model for the classification of weak phases with interactions has not been widely applied in the literature.
In this subsubsection, we propose a homotopical model in \cref{WI_corollary} using families of invertible field theories over the spatial torus, building from an ansatz of Freed-Hopkins~\cite[Ansatz 2.1]{freed_invertible_2019}.

\begin{ansatz}
\label{WI_ansatz}
The data of a discrete translation-invariant topological phase is equivalent to a family of phases parametrized by the spatial (unit cell) torus $\mathbf T^d = \mathbb{R}^d/\mathbb{Z}^d$.
\end{ansatz}

Freed and Hopkins propose an ansatz that 
the invertible field theories with (spatial) dimension $d$ on a compact $d$-dimensional manifold $Y$ are classified by (a possibly twisted version of) $\mho_H^{d+2}(Y)$:

\begin{ansatz}[{Freed-Hopkins~\cite[Ansatz 2.1, Remark 2.6]{freed_invertible_2019}}]
\label{FH_spatial_ansatz}
The classification of (interacting) invertible $d$-dimensional phases of symmetry type $\rho\colon H\to\O$ over a compact, stably framed manifold $Y$ is naturally equivalent to the classification of $d$-dimensional reflection-positive IFTs of manifolds with an $H$-structure and a map to $Y$, i.e.\ the generalized cohomology group $\mho_H^{d+2} (Y)$.
\end{ansatz}

On the other hand, phases with $\Z^d$ translation symmetry can be understood from the gauge theory point of view through the so-called crystalline equivalence principle \cite{thorngren_gauging_2018}.
Namely, if $\mathbb{Z}^d$ is a spatial symmetry group and a theory is defined on $\mathbb{R}^d$, there is a procedure for gauging the spatial symmetry and considering the emergent gauge theory on the quotient space $\mathbb{R}^d/ \mathbb{Z}^d$. 
Since Freed and Hopkins also consider their ansatz for stacks~\cite[Ansatz 3.3]{freed_invertible_2019}, we can consider invertible field theories on any quotient $\mathbb{R}^d/G$ with $G$ locally compact. 
See in particular~\cite[Example 2.3]{freed_invertible_2019} where this case is studied for $s=0$, $d=2$.

Freed-Hopkins' ansatz is more general than ours; we include only the special case we need.
\begin{corollary}
\label{WI_corollary}
Assuming \cref{WI_ansatz,FH_spatial_ansatz}, deformation classes of invertible discrete translation-invariant topological phases in (spatial) dimension $d$ and Altland-Zirnbauer class $s$ are classified by $d+1$-dimensional reflection-positive IFTs on $H(s)$-manifolds with a map to $\mathbf T^d$, i.e.\ by the generalized cohomology group $\mho_{H(s)}^{d+2}(\mathbf T^d)$.
\end{corollary}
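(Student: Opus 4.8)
The plan is to obtain \cref{WI_corollary} by chaining the two ansatzes it assumes. \cref{WI_ansatz} identifies the object to be classified — a deformation class of an invertible discrete translation-invariant topological phase in spatial dimension $d$ and Altland--Zirnbauer class $s$ — with a deformation class of a \emph{family}, parametrized by the spatial torus $\mathbf T^d = \R^d/\Z^d$, of invertible phases of class $s$. So it suffices to identify families of class-$s$ phases over $\mathbf T^d$ with $\mho_{H(s)}^{d+2}(\mathbf T^d)$, and for this I would invoke \cref{FH_spatial_ansatz}.

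The first thing to check is the hypothesis of \cref{FH_spatial_ansatz}: that $\mathbf T^d$ is a compact, stably framed manifold. Compactness is clear, and $\mathbf T^d$ is in fact parallelizable — left translation by the abelian group structure on $\R^d/\Z^d$ trivializes its tangent bundle — so I would equip $\mathbf T^d$ with the resulting translation-invariant framing. Taking $Y = \mathbf T^d$ and $H = H(s)$, the spacetime structure group attached to class $s$ in \S\ref{spacetime_sym} and recorded in \cref{tenfold_table}, \cref{FH_spatial_ansatz} then says precisely that families over $\mathbf T^d$ of invertible phases of symmetry type $H(s)$ are classified by the deformation classes of $d$-dimensional reflection-positive IFTs on $H(s)$-manifolds equipped with a map to $\mathbf T^d$, i.e.\ by the generalized cohomology group $\mho_{H(s)}^{d+2}(\mathbf T^d)$. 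Composing with the identification of the previous paragraph gives the claim.

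The step I expect to be the main obstacle is the one that is really a matter of interpretation rather than computation: making precise that the notion of ``family of phases parametrized by $\mathbf T^d$'' occurring in \cref{WI_ansatz} coincides — compatibly with deformations — with the notion of ``phase over $Y$'' occurring in \cref{FH_spatial_ansatz}, and in particular that \cref{WI_ansatz} feeds the translation symmetry into the \emph{base manifold} rather than into the structure group, so that $H(s)$ (and not some extension of $H(s)$ by $\Z^d$) is the correct input. This is exactly the content of the crystalline equivalence principle discussed after \cref{WI_ansatz}, and it is also why \cref{FH_spatial_ansatz} is stated over a \emph{stably framed} base: the stable framing trivializes the twist that would appear in the more general Freed--Hopkins ansatz, so that the answer is the untwisted group $\mho_{H(s)}^{d+2}(\mathbf T^d)$. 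As a final remark one should note that although the translation-invariant framing of $\mathbf T^d$ is a choice, it is the standard one used in the literature (cf.\ \cite[Example 2.3]{freed_invertible_2019} for $s = 0$, $d = 2$), and a change of framing acts on $\mho_{H(s)}^{d+2}(\mathbf T^d)$ through the $H(s)$-structure, so the resulting classification as an abelian group is independent of this choice.
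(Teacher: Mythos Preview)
Your proposal is correct and matches the paper's approach: the paper gives no explicit proof of \cref{WI_corollary}, treating it as the immediate consequence of chaining \cref{WI_ansatz} with \cref{FH_spatial_ansatz} applied to $Y=\mathbf T^d$, which is exactly what you do. Your verification that $\mathbf T^d$ is compact and stably framed (via its Lie group parallelization) and your remarks on why the untwisted group appears are useful elaborations the paper leaves implicit.
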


\cref{WI_corollary} is the mathematically rigorous version of the statement we argue for in \ref{kitaev_conjecture} using free fermions and Kitaev's conjecture.

\subsection{Freed-Hopkins' free-to-interacting map for strong phases}\label{freed_hopkins_F2I}

Freed and Hopkins connect the $K$-theoretic classification of free theories to the invertible-field-theoretic classification of interacting theories using a \textit{free-to-interacting map} (\cite{freed_reflection_2021} (9.71)). The kernel of this map comprises, in the second quantized formalism, the theories that are nontrivial under quadratic terms of creation-annihilation operators but that may be trivialized using higher-order terms: a famous example of such a theory is eight copies of the time-reversal symmetric Majorana chain studied by Fidkowski-Kitaev, which can be trivialized under quartic terms \cite{fidkowski_effects_2010}. The cokernel of this map consists of ``interaction-enabled" phases: interacting phases that have no free analog. For example, there is a class CI superconductor in $d=3$ with an intrinsically interacting phase generating a $\Z_2$ interaction-enabled classification \cite[\S V.B]{wang_interacting_2014}.
Thus, assuming the low-energy TQFT ansatz, the free-to-interacting map allows one to mathematically study the physical questions of whether a free phase is robust to interactions and whether new phases arise in the interacting setting.

The free-to-interacting map is built out of two main ingredients.
The first one, the Atiyah-Bott-Shapiro (ABS) orientation, provides a way to get from a bordism class to a $K$-theory class by taking a twisted index. Then, bordism is Anderson-dual to the interacting IFT classification (recall \cref{anderson_dual}), so to land in IFTs instead of bordism we implement this duality and use the Anderson self-duality of $\KO$-theory.

\subsubsection{ABS Orientation}\label{subsec:ABS}
We start with the ABS map in the real case.
There is a classical ABS map from spin bordism $\Omega^\text{Spin}_*$ to the $\KO$-theory of a point, first defined in \cite[\S 11]{atiyah_bott_1968}. Here, we follow Freed-Hopkins~\cite[\S 9.6.3]{freed_reflection_2021}, who use a model for $\KO$-theory developed in \cite{atiyah_index_1969} and follow~\cite[\S II.7]{lawson_spin_1989}. 
An element in $\Omega_n^\Spin$ is represented by an $n$-dimensional spin manifold, while an element in $\KO_n(\pt)$
is (the equivalence class of) a $\Cl_{n}$-module equipped with a Clifford-linear Fredholm operator.
Choose a spin manifold $M$ with a Riemannian metric $g$, and let $\nabla$ be the induced Levi-Civita connection on the \term{Dirac bundle}
\begin{equation}
    \mathcal{S}\coloneqq P_\text{Spin}\times_{\Spin(n)}\Cl_n,
\end{equation}
where $P_{\text{Spin}}$ is the $\Spin(n)$-principal bundle associated to the spin structure on $M$. 
We obtain a Clifford-linear Dirac operator $\slashed{D}_M\colon C^\infty(\mathcal{S})\to C^\infty(\mathcal{S})$ by acting by the covariant derivative followed by Clifford multiplication
$c\colon TM\times \mathcal{S}\to \mathcal{S}$: 
\begin{equation}
    \mathcal{S} \xrightarrow{\nabla} T^*M \otimes \mathcal{S} \overset{g}{\simeq} TM \otimes \mathcal{S} \hookrightarrow \Cl(TM)\otimes \mathcal{S} \xrightarrow{} \mathcal{S}.
\end{equation}
This operator extends to an operator on the appropriate Sobolev completion $\overline{C^\infty(\mathcal S)}$ of $C^\infty(\mathcal S)$. In local coordinates, for $s\in C^\infty(\mathcal{S})$, $\slashed{D}_M$ has the formula
\begin{equation}
    \slashed{D}_M(s) = \sum e_j \cdot \nabla_{e_j}(s).
\end{equation}
The ABS map
\begin{equation}\label{ABS}
\begin{aligned}
    \ABS\colon \Omega_n^\Spin&\to \KO_n \\
    M &\mapsto (\overline{C^\infty(\mathcal{S})}, \slashed{D}_M)
\end{aligned}
\end{equation}
sends a spin manifold $M$ to the Hilbert space $\overline{C^\infty(\mathcal{S})}$ equipped with the Dirac operator. 

Freed and Hopkins \cite[\S 9.2.2]{freed_reflection_2021} develop its twisted generalizations 
\begin{equation}\label{real_twisted_ABS_eqn}
\ABS_s \colon \Omega^{H(s)}_n\to \KO_{n+s}
\end{equation}
by showing that an $n$-manifold $M$ with $H(s)$-structure has a canonical twisted spinor bundle with a twisted $\Cl_{n+s}$-linear Dirac operator.\footnote{A construction of Stolz~\cite[\S 9.3]{stolz_preprint} overlaps with Freed-Hopkins' definition for $s = \pm 1$: the index theory is the same, but Stolz does not turn it into a map of spectra. See also~\cite{Sto88, Zha17, Fre24} for more on index theory on \pinp and \pinm manifolds.}
\begin{example}[Twisted ABS for class AII]\label{ABSAII}
We go into the details of Freed-Hopkins' construction for the case $s = -2$: see~\cite[\S 9.2.2]{freed_reflection_2021} for the proofs of these assertions.

    In class AII, $H(s) = \Pin^{\tilde c +} \coloneqq \Pin^+\ltimes_{\{\pm 1\}}\U(1)$ (\Cref{tenfold_table}). 
    
    An element of $\Omega_n^{\Pin^{\tilde c+}}$ is represented by an $n$-manifold with $\Pin^{\tilde{c}+}(n)$-structure, which is the same as a lift of the classifying map of the tangent bundle $M\xrightarrow{TM}B\O(n)$ to a map $M \to B\Pin^{\tilde c+}(n)$. This gives us a principal $\Pin^{\tilde c+}(n)$-bundle $P_{\Pin^{\tilde c+}(n)}\to M$. 
    The group $\Pin^{\tilde c+}(n)$ embeds into the superalgebra $\Cl_n\otimes \Cl_{-2}$, as follows from \cite[Lemma 9.27]{freed_reflection_2021}.
    We thus have a \emph{twisted Dirac bundle}
    \begin{equation}
        \mathcal S' \coloneqq P_{\Pin^{\tilde c+}(n)}\times_{\Pin^{\tilde c+}(n)}(\Cl_n\otimes \Cl_{-2})\to M.
            \end{equation}

    We can define a Clifford multiplication map
    \begin{equation}
        c\colon TM\otimes \mathcal S'\to \mathcal S'
    \end{equation}
    by using the Clifford multiplication $TM\otimes \Cl(TM)\to \Cl(TM)$ and tensoring with $\Cl_{-2}$.
    Now choose a Riemannian metric on $M$, and choose a connection on the principal $\Pin^{\tilde c+}(n)$-bundle of frames $P_{\Pin^{\tilde c+}(n)}\to M$ whose induced connection on the principal $\O(n)$-bundle of frames is the Levi-Civita connection. This induces a connection on the twisted Dirac bundle.                     
    Now we can define a \emph{twisted Clifford-linear Dirac operator} $\slashed{D}_M = e_i\cdot \nabla_{e_i}$ acting on sections of $\mathcal S'$ by taking the covariant derivative followed by Clifford multiplication. This acts $\Cl_n\otimes \Cl_{-2}$-linearly, so $(\overline{C^\infty(\mathcal S')}, \slashed{D}_{M})$ gives an element of $\KO_{n-2}(\pt)$.

    For example, on the \pincp manifold $\CP^1\times \CP^1$, this twisted Dirac index evaluates to the generator of $\KO_{2}(\pt)$. We prove this in an indirect manner, using a Smith homomorphism, in Appendix~\ref{app:twABS}; it would be interesting to find an index-theoretic proof.
\end{example}
The twisted ABS map and twisted Dirac operators are discussed in full generality for all symmetry classes $H(s)$ in \cite[\S 9.2.2, 9.2.3]{freed_reflection_2021}.

Just as for the real case, there is an ABS orientation landing in complex $K$-theory. The classical map
\begin{equation}\label{spinc_ABS}
    \ABS^c\colon \Omega^{\Spin^c}_n \to K_n
\end{equation}
is from spin$^c$ bordism to the $K$-homology of a point and sends a spin$^c$ manifold $M$ to the complex $\bC\ell_n$-linear Dirac operator acting on smooth sections of the complex Dirac bundle $\mathcal S \coloneqq P_{\Spin^c(n)}\times_{\Spin^c(n)}\bC\ell_n$.

To incorporate the case $H^c(1) = \Pin^c$, Freed-Hopkins in \cite{freed_reflection_2021} (9.44) develop the twisted generalization of this map:
\begin{equation}\label{complex_twisted_ABS_eqn}
    \ABS^c_1\colon \Omega^{\Pin^c}_n \to K_{n+1}.
\end{equation}

\subsubsection{Anderson self-duality of $K$-theory}

The ABS orientation of the previous subsubsection defines a map from twisted spin bordism to $\KO$-homology. However, the invertible field theories modeling interacting phases are classified by \textit{Anderson-dual} twisted spin bordism (\cref{IFT_class}), while free theories are classified by $\KO$-\textit{cohomology} (\cref{neutral_ansatz}).
To reconcile these descriptions, we may apply Anderson duality (\cref{anderson_dual}) and exploit the Anderson self-duality of $\KO$-theory and $K$-theory ~\cite[Theorem
4.16]{And69}.\footnote{Anderson's proof appears in unpublished lecture notes, and it is also discussed in Yosimura~\cite[Theorem 4]{Yos75}. There are several proofs by a variety of different methods; for example, see
Freed-Moore-Segal~\cite[Proposition B.11]{FMS07a}, Heard-Stojanoska~\cite[Theorem 8.1]{heard_k_2014}, Ricka~\cite[Corollary
5.8]{Ric16}, Greenlees-Meier~\cite[Example 1.5]{GM17}, Greenlees-Stojanoska~\cite[Example 6.1]{GS18}, and Hebestreit-Land-Nikolaus~\cite[Example 2.8]{HLN20}.}

\begin{corollary}\label{self_dual_KO}
    There is an isomorphism of cohomology theories
    \begin{equation}
        (I_\Z \KO)^* \cong \KO^{*-4}.
    \end{equation}
\end{corollary}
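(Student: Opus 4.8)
The plan is to derive \cref{self_dual_KO} directly from Anderson's self-duality theorem for the $\KO$-spectrum, using a computation over a point only to pin down the degree shift. First I would note that, by \cref{anderson_dual}, the cohomology theory $(I_\Z\KO)^*$ is represented by the Anderson dual spectrum $F(\KO, I_\Z\bS)$, where $I_\Z\bS$ is the Anderson dual of the sphere spectrum and $\KO$ is viewed as a spectrum. Anderson's theorem~\cite[Theorem 4.16]{And69} (see also~\cite[Theorem 4]{Yos75} and the alternative proofs~\cite{FMS07a,heard_k_2014,Ric16,HLN20}) asserts that $F(\KO, I_\Z\bS)$ is equivalent to a suspension of $\KO$; since $\KO$ is $8$-periodic, such an equivalence exhibits $(I_\Z\KO)^* \cong \KO^{*+k}$ for a well-defined $k\in\Z/8$, and the only remaining task is to identify $k$.

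To determine $k$ I would evaluate the defining short exact sequence~\eqref{IZproperty} at $X=\pt$. Over one period $\KO_n(\pt)$ is $(\Z,\Z/2,\Z/2,0,\Z,0,0,0)$ for $n\equiv 0,\dots,7 \pmod 8$, so $\Hom(\KO_n(\pt),\Z)$ is $\Z$ for $n\equiv 0,4$ and $0$ otherwise, while $\Ext(\KO_{n-1}(\pt),\Z)$ is $\Z/2$ for $n\equiv 2,3$ and $0$ otherwise. Hence $(I_\Z\KO)^n(\pt)$ is $\Z$ for $n\equiv 0,4 \pmod 8$, is $\Z/2$ for $n\equiv 2,3 \pmod 8$, and vanishes otherwise. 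Comparing with $\KO^{n-4}(\pt)=\KO_{4-n}(\pt)$, which by $8$-periodicity has exactly these values (it is $\Z$ iff $4-n\equiv 0,4$, i.e.\ $n\equiv 0,4$, and similarly for the torsion and vanishing ranges), forces $k=-4$, i.e.\ $(I_\Z\KO)^*\cong\KO^{*-4}$; equivalently $F(\KO, I_\Z\bS)\simeq\Sigma^{-4}\KO\simeq\Sigma^4\KO$.

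The step I expect to be the main obstacle is not the bookkeeping above but the geometric input packaged into Anderson's theorem: upgrading the degree-by-degree matching over a point to an equivalence of spectra amounts to producing a perfect duality pairing $\KO\wedge\KO\to\Sigma^4 I_\Z\bS$ (built from the ring structure of $\KO$ together with a fundamental class $\KO\to\Sigma^4 I_\Z\bS$) and checking its nondegeneracy. I would not reprove this; instead I would cite~\cite[Theorem 4.16]{And69}, noting that one clean alternative route is to first treat the complex case $I_\Z\KU\simeq\KU$ (up to the $2$-fold periodicity), which is easier since $\KU_*(\pt)$ is torsion-free and concentrated in even degrees so that~\eqref{IZproperty} splits with no $\Ext$ contribution, and then descend to the real statement along the presentation $\KO\simeq\KU^{hC_2}$, as in~\cite{heard_k_2014,HLN20}.
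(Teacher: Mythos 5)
Your proposal is correct and takes essentially the same route as the paper: the paper states \cref{self_dual_KO} as an immediate consequence of Anderson's self-duality of $\KO$ \cite[Theorem 4.16]{And69}, citing the same alternative references (\cite{Yos75,FMS07a,heard_k_2014,Ric16,HLN20}) that you list, without giving an independent argument. Your added step of recovering the shift $k\equiv-4\pmod 8$ by comparing $(I_\Z\KO)^n(\pt)$ with $\KO^{n-4}(\pt)$ via~\eqref{IZproperty} is a correct and welcome sanity check (and your computations of the $\Hom$ and $\Ext$ terms are right), but it is supplemental: the cited theorem already fixes the degree.
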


For complex $K$-theory, there is actually an isomorphism $I_\Z K^* \cong K^*$ with no shift. However, by Bott periodicity, $K^*\cong K^{*+2}$, so we may choose to insert a fourfold shift.
\begin{corollary}\label{self_dual_K}
    There is an isomorphism of cohomology theories
    \begin{equation}
        (I_\Z K)^* \cong K^{*-4}.
    \end{equation}
\end{corollary}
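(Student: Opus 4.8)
The plan is to reduce the statement immediately to the known Anderson self-duality of complex $K$-theory and then absorb the degree shift using Bott periodicity, exactly as was done for $\KO$ in \cref{self_dual_KO}.

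First I would recall Anderson's theorem~\cite{And69} (see also the references gathered in the footnote to \cref{self_dual_KO}), which gives an isomorphism of cohomology theories $(I_\Z K)^* \cong K^*$ with \emph{no} degree shift. One natural way to organize the verification of the homotopy-group bookkeeping: since $\pi_* K$ is torsion-free, the $\Ext$ term in the defining sequence~\eqref{IZproperty} vanishes, so $\pi_n(I_\Z K) \cong \Hom(\pi_{-n} K, \Z)$, which is $\Z$ for $n$ even and $0$ for $n$ odd; hence $I_\Z K$ has the same homotopy groups as $K$ in each degree. Anderson's argument then upgrades this coincidence of homotopy groups to an actual equivalence of spectra (equivalently, of cohomology theories). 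I would simply quote this input rather than reprove it.

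Second I would invoke Bott periodicity for complex $K$-theory, $K^* \cong K^{*+2}$, so in particular $K^* \cong K^{*-4}$. Composing this with the self-duality isomorphism of the first step yields $(I_\Z K)^* \cong K^* \cong K^{*-4}$, which is the assertion. The fourfold shift (rather than no shift) is a deliberate cosmetic choice: it makes the present corollary formally parallel to \cref{self_dual_KO}, which is convenient when the complex analogue of the Atiyah–Bott–Shapiro map $\ABS^c$ and the associated free-to-interacting map are assembled later.

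The only nontrivial ingredient is Anderson's self-duality theorem for $K$-theory, which I take from the literature, so there is no genuine obstacle in the corollary itself. A self-contained treatment would push all of the work into that theorem — for instance by computing $I_\Z K$ from the Anderson duals of the cofiber sequences defining $K$, or by pinning down $I_\Z K$ through its homotopy groups together with a rigidity statement for even-periodic cohomology theories — but for the purposes of this paper it suffices to cite it and perform the two-step reduction above.
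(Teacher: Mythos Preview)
Your proposal is correct and matches the paper's approach exactly: the paper simply notes that Anderson self-duality gives $(I_\Z K)^* \cong K^*$ with no shift, then inserts a fourfold shift via Bott periodicity $K^*\cong K^{*+2}$ to parallel \cref{self_dual_KO}. Your additional remarks about the homotopy-group computation are fine but go beyond what the paper records.
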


\subsubsection{Free-to-Interacting Maps}

We now have all of the ingredients we need to define the free-to-interacting maps.

\begin{definition}[{Freed-Hopkins~\cite[Conjecture 9.70]{freed_reflection_2021}}]
\label{conjecture:freedhopkinsF2I}
Let $\ABS_s\colon \Omega^{H(s)}_n \to \KO_{n+s}$ be the twisted ABS map \eqref{real_twisted_ABS_eqn}. Applying Anderson duality (\ref{anderson_dual}) gives a map
\begin{subequations}
\begin{equation}\label{IZphi}
     I_\Z\ABS_s\colon I_\Z \KO^{d+s+2}(\text{\rm --}) \to I_{\Z} \Omega_{H(s)}^{d+2}(\text{\rm --})
\end{equation}
of cohomology theories. The right side is by definition $\mho_{H(s)}^{d+2}$. The left side, by Anderson self-duality of $\KO$-theory (\ref{self_dual_KO}), is identified with $\KO^{d+s-2}(\pt)$, so $I_\Z\ABS_s$ is a map of cohomology theories of the form
\begin{equation}
    \FtwoI_s\colon \KO^{d+s-2}(\text{\rm --})\xrightarrow{\cong} I_\Z\KO^{d+s+2}(\text{\rm --})\xrightarrow{I_\Z\ABS_s} I_\Z\Omega_{H(s)}^{d+2}(\text{\rm --}) = \mho_{H(s)}^{d+2}(\text{\rm --}).
\end{equation}
\end{subequations}
The free to interacting map is the composition
\begin{equation}
    \FtwoI_{s,\mathit{strong}}\coloneqq \FtwoI_s(\pt)\colon\KO^{d+s-2} \to \mho_{H(s)}^{d+2}.
\end{equation}
\end{definition}

The complex version of the free-to-interacting map is given by a similar composition, implicit in \cite{freed_reflection_2021}. We define a natural transformation of cohomology theories $\FtwoI_s^c\colon K^{d+s-2}(\text{--})\to\mho_{H^c(s)}^{d+2}(\text{--})$ just as in \cref{conjecture:freedhopkinsF2I}, then evaluate it on $\pt$ to define $\FtwoI_{s,\mathit{strong}}^c$.
\begin{definition}[{Freed-Hopkins~\cite{freed_reflection_2021}}]\label{conjecture:freedhopkinsF2Icpx}
Let $s$ be a complex symmetry type. The free-to-interacting map for theories in spatial dimension $d$  and of symmetry type $s$ is the composition
\begin{equation}
    \FtwoI_{s,\mathit{strong}}^c\coloneqq \FtwoI_s^c(\pt)\colon K^{d+s-2} \xrightarrow{\cong} I_\Z K^{d+s+2} \xrightarrow{I_\Z\ABS_s^c} I_\Z \Omega_{H^c(s)}^{d+2} = \mho_{H^c(s)}^{d+2},
\end{equation}
where the first arrow is the Anderson self-duality of $K$-theory (\ref{self_dual_K})  and the second map is the Anderson dual of the twisted ABS map defined in~\eqref{spinc_ABS}, \eqref{complex_twisted_ABS_eqn}.
\begin{ansatz}[{Freed-Hopkins~\cite[\S 9.2.6]{freed_reflection_2021}}]
\label{ansatz_FH_strong_F2I}
\hfill
\begin{enumerate}
    \item Under the identifications in \cref{neutral_ansatz,IFT_class} identifying the groups of strong free fermion phases, resp.\ reflection positive IFTs in dimension $d$ and real Altland-Zirnbauer class $s$ with $\KO^{d+s-2}$, resp.\ $\mho_{H(s)}^{d+2}$, the homomorphism assigning to a free fermion Hamiltonian its low-energy invertible field theory is $\FtwoI_{s,\mathit{strong}}$.
    \item The above is true mutatis mutandis for a complex Altland-Zirnbauer class $s$ with $H^c(s)$ in place of $H(s)$, $K$ in place of $\KO$, and $\FtwoI_{s,\mathit{strong}}^c$ in place of $\FtwoI_{s,\mathit{strong}}$.
\end{enumerate}
\end{ansatz}

Recall the motivation for free-to-interacting maps given in \S\ref{freed_hopkins_F2I}: knowing these maps allows us to determine both whether a free-fermion SPT phase is stable to interactions and whether there are interaction-enabled phases that one cannot represent using free-fermion models.
\end{definition}

\begin{example}\label{strong_F2I_AII_ex}
    Return to class AII in dimension $d=3$. Let $x\in \KO^{-1}(\pt)\cong \Z_2$ be a free theory with nontrivial Fu-Kane-Mele invariant, which is the generator of $\Z_2$. Such a theory models for example a conducting surface state of the 3d topological insulator BiSb \cite{teo_surface_2008}.
    Its image under the free-to-interacting map is the deformation class of the \pincp topological field theory whose partition function is described by Witten in~\cite[\S 4.7]{Wit16}. In \cref{appendix_thm}, we show that, when evaluated on the generating manifolds $\CP^1\times \CP^1$, $\CP^2$, and $\RP^4$ of $\Omega^{\Pin^{\tilde{c}+}}_4$, this invariant is nontrivial on the first but trivial on the second two.
    
    Since a free theory with the nontrivial Fu-Kane-Mele invariant is sent to a nontrivial interacting theory generating a $\Z_2$ subgroup of the interacting theories, 
    we see that this strong phase is robust to interactions \cite{freed_reflection_2021}.
    That the this invariant survives the addition of interactions was observed in 
    \cite[\S4.7]{Wit16} and 
    \cite{freed_reflection_2021}, and an interacting $\Z_2$ Fu-Kane-Mele index was recently developed in \cite{bachmann_many-body_2024}.

    We have accounted for a $\Z_2$ subgroup of $\mho^5_{\Pin^{\tilde{c}+}}\cong(\Z_2)^3$; the remaining six elements are not in the image of the free-to-interacting map and thus are interaction-enabled phases. There is a generating set of $\mho^5_{\Pin^{\tilde{c}+}}$ given by the Fu-Kane-Mele theory described above, together with two theories whose partition functions are
    \begin{equation}\label{clDW}
        X\longmapsto (-1)^{\int_X w_2(TX)^2},\qquad\qquad
        X \longmapsto (-1)^{\int_X w_1(TX)^4},
    \end{equation}
    detected by $\CP^2$ and $\RP^4$ respectively. These theories are closely related to \term{classical Dijkgraaf-Witten theories}~\cite[\S 1]{FQ93},\footnote{The reason we write ``are closely related to'' instead of ``are'' is a few key differences between these theories and classical Dijkgraaf-Witten theories: in the former, we integrated a mod $2$ characteristic class, and in the latter, one integrates an $\R/\Z$-valued cohomology class $\omega$. Secondly, Dijkgraaf-Witten theory has a background principal bundle for a finite group $G$, and requires $\omega$ to be a characteristic class of $G$-bundles. Integrating $\R/\Z$-cohomology classes requires an orientation, but integrating mod $2$ cohomology classes does not, so the theories in~\eqref{clDW} are defined on any compact $4$-manifold. See~\cite{Deb20, You20, Kim22, GRY24} for more information on unoriented generalizations of Dijkgraaf-Witten theory. Sometimes, theories given by integrating a characteristic class of the tangent bundle are called \term{gravitational theories}.} in that they are given by a classical action which integrates a characteristic class. Unlike the classical actions of Dijkgraaf-Witten theories, the classes $w_1^4$ and $w_2^2$ do not depend on anything stronger that the homotopy type of $X$---in particular, they are independent of the choice of \pincp structure. One could thus think of these theories as ``bosonic;'' frequently such theories are set aside by researchers investigating fermionic SPTs.

\end{example}
\subsubsection{The free-to-interacting map constrains the spectrum of SPT phases}\label{supercohomology_is_wrong}
The ansatz that interacting phases are classified in terms of invertible field theories---and therefore, thanks to \cref{IFT_class}, in terms of bordism---is not the only model for the classification of interacting phases. The purpose of this subsubsection is to point out that the existence of the free-to-interacting map strongly constrains, via one of its basic properties, the possible models for the classification of interacting phases.

Kitaev proposed the ansatz that invertible phases have the structure of a spectrum $E$: that is, $E$ determines a generalized cohomology theory $E^*$, and the classification of $G$-symmetric SPT phases is the (possibly twisted) cohomology group $E^*(BG)$.\footnote{It is predicted that there are two different versions of $E$, one for bosonic phases and the other for fermionic phases. In this paper we focus on the latter.} The model we have followed, which uses invertible field theories, chooses $E^*$ to be the Anderson dual of spin bordism. But there are two more common choices for fermionic phases: \term{restricted supercohomology} $\SH_1$ as introduced by Freed~\cite[\S 1]{Fre08} and Gu-Wen~\cite{GW14}, and \term{extended supercohomology} $\SH_2$ as defined by Kapustin-Thorngren~\cite{KT17} and Wang-Gu~\cite{WG18, Wang_Super}. See also Gaiotto-Johnson-Freyd~\cite[\S 5.3, 5.4]{Gaiotto_SPT}. 
We will not need to know much about these generalized cohomology theories---only that $\SH_1^*(\pt)$ and $\SH_2^*(\pt)$ are concentrated in degrees $0$ through $3$.

Kitaev's argument producing the structure of a spectrum of invertible phases applies equally well for both the free and the interacting classifications, and the argument is compatible with the free-to-interacting map between them. Therefore we hypothesize that Kitaev's conjecture extends: that \emph{the free-to-interacting map refines to a map of spectra}---indeed, this is how Freed-Hopkins~\cite[\S 9.2]{freed_reflection_2021} construct their free-to-interacting maps. This does not constrain the spectrum of SPT phases very much, though: there are nontrivial maps from the $K$-theory spectrum to both restricted and extended supercohomology.

We can do better with one more piece of information: assume there is a procedure on phases of free fermion theories that is analogous to the field-theoretic process of compactification. After taking a continuum limit, one ought to be able to formulate a topological phase of (spatial) dimension $d$ on a closed $d$-manifold $M$, together with some additional structure such as a lattice, a twisted spin structure for fermionic SPT, etc.\footnote{For example, we might to be able to glue a lattice model on $M$ together from a Hamiltonian description on contractible patches.}
By choosing $M$ to be a product $M = N_1\times N_2$, we can compactify on $N_1$ to pass from a $d$-dimensional phase formulated on $M$ to a $(d - \dim(N_1))$-dimensional phase formulated on $N_2$. With some care applied to the tangential structure on $N_1$,\footnote{See Schommer-Pries~\cite[\S 9]{SP18} for a careful general analysis of the tangential structures needed to compactify; we just need a special case addressed by Yamashita-Yonekura~\cite[\S 7.3]{YY23}.} this procedure is expected to define a homomorphism from $d$-dimensional SPT phases to $(d-\dim(N_1))$-dimensional phases, and it has been applied previously in the condensed matter theory literature, e.g.~\cite{HSHH17, Tan17, RL20}. As this procedure can be applied in the same ways to free and to interacting phases, we expect compactification to commute with the free-to-interacting map. Though this may not literally be compactification on free fermion phases, as it is not yet clear whether the process of putting the theory on a general manifold is possible before taking a continuum limit, we expect a homomorphism of this sort to exist for free fermion phases, and we will refer to this homomorphism as compactification.
\begin{ansatz}\label{dimred_ansatz}
\hfill
\begin{description}
    \item[Physics version] The free-to-interacting map commutes with the procedure of compactifying on closed spin manifolds. 
    \item[Math version] The free-to-interacting map is a map of $MT\Spin$-module spectra.
\end{description}
\end{ansatz}
Here $MT\Spin$ is the spectrum whose associated generalized homology theory is spin bordism. The connection between the two versions of \cref{dimred_ansatz} is discussed by Yamashita-Yonekura~\cite[\S 7.3]{YY23}; see also Tachikawa-Yamashita~\cite[\S 2.2.6]{TY23}. Freed-Hopkins' free-to-interacting maps satisfy \cref{dimred_ansatz}~\cite[\S 10]{freed_reflection_2021}.
\begin{proposition}
Assuming \cref{dimred_ansatz}, $\SH_1$ and $\SH_2$ cannot be the spectrum of fermionic phases.
\end{proposition}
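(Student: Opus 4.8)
The plan is to use \cref{dimred_ansatz} to show that, were $\SH_i$ ($i=1$ or $2$) the spectrum of fermionic phases, the free-to-interacting map would have to vanish identically — contradicting the elementary fact that it does not. First I would extract from the hypothesis a structured map. Suppose $\SH_i$ is the spectrum of fermionic phases. By Kitaev's argument (\S\ref{kitaev_conjecture}), for each real Altland-Zirnbauer class $s$ the free-to-interacting map refines to a map of spectra $\phi_s\colon\mathcal F_s\to\mathcal I_s$, where $\mathcal F_s$ classifies free phases and $\mathcal I_s$ classifies interacting phases. By \cref{neutral_ansatz,conjecture:freedhopkinsF2I,self_dual_KO}, $\mathcal F_s$ is a suspension of the periodic real $K$-theory spectrum $\KO$, with $\MTSpin$-module structure induced by the Atiyah--Bott--Shapiro ring map $\ABS\colon\MTSpin\to\KO$ (compactifying a free phase on a closed spin manifold $N$ being cap product with $\ABS[N]$); by hypothesis $\mathcal I_s$ is a suspension of $\SH_i$; and by \cref{dimred_ansatz}, $\phi_s$ is $\MTSpin$-linear. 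Only the resulting identity on homotopy groups, $\phi_s(a\cdot m)=a\cdot\phi_s(m)$ for $a\in\pi_*\MTSpin=\Omega_*^{\Spin}$, will be used.

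The contradiction then rests on two observations. Since $\ABS\colon\MTSpin\to\KO$ is surjective on homotopy (classically: the Anderson--Brown--Peterson splitting of $\MTSpin$, or just surjectivity of $\hat A\colon\Omega_8^{\Spin}\to\Z$), the real Bott element $\beta\in\pi_8\KO\cong\Z$ lifts to some $\widetilde\beta\in\pi_8\MTSpin$; and since multiplication by $\beta$ is an equivalence $\Sigma^8\KO\xrightarrow{\sim}\KO$ (real Bott periodicity), $\widetilde\beta$ acts invertibly on the $\KO$-module $\mathcal F_s$. On the other hand $\pi_*\SH_i$ is concentrated in the four degrees $0,\dots,3$, so $\pi_*\mathcal I_s$ is concentrated in at most four consecutive degrees, and hence the degree-$8$ operation $\widetilde\beta\cdot(-)\colon\pi_n\mathcal I_s\to\pi_{n+8}\mathcal I_s$ vanishes for every $n$. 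Now pick $y\in\pi_j\mathcal F_s$ with $\phi_s(y)\neq0$; such $y$ exists because the free-to-interacting map is not identically zero — for instance the Fu--Kane--Mele invariant survives interactions (\cref{strong_F2I_AII_ex}), and indeed the map is already an isomorphism in spatial dimension $0$. Writing $y=\widetilde\beta\cdot z$ for the unique $z\in\pi_{j-8}\mathcal F_s$ and using $\MTSpin$-linearity gives $0\neq\phi_s(y)=\widetilde\beta\cdot\phi_s(z)=0$, a contradiction. Since this argument applies already for $s=-2$ (class AII), neither $\SH_1$ nor $\SH_2$ can be the spectrum of fermionic phases.

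The only genuinely delicate step is the set-up: promoting the physical free-to-interacting map to an $\MTSpin$-equivariant map out of a suspension of $\KO$. This is exactly the combined content of Kitaev's conjecture (as refined to the free-to-interacting map) and \cref{dimred_ansatz}, and it crucially uses that ``compactification'' on free phases is the $\KO$-theory pushforward for spin manifolds — i.e.\ that the $\MTSpin$-action on $\mathcal F_s$ factors through $\ABS$ — so that the real Bott element is realized by a genuine closed spin $8$-manifold and acts invertibly on the free side. Granting that, the remaining ingredients (surjectivity of $\hat A$ in degree $8$, invertibility of $\beta$, boundedness of $\pi_*\SH_i$, nonvanishing of the free-to-interacting map) are routine. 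It suffices to treat the real classes here, since $\SH_1$ and $\SH_2$ are precisely the candidate classifying spectra for fermionic SPTs; a complex analogue, were one proposed, would fail by the identical argument using the degree-$2$ complex Bott element, which is hit by the $\Spinc$ Atiyah--Bott--Shapiro orientation on $\CP^1$.
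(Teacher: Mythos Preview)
Your argument is correct and follows the same strategy as the paper: exhibit a spin bordism class that acts invertibly on $\KO$ (via the ABS module structure) but must act as zero on $\SH_i$ by degree reasons, and then invoke nonvanishing of the free-to-interacting map. The paper uses the K3 surface in degree $4$ rather than a Bott lift in degree $8$; this is slightly more economical, since $\ABS\colon\Omega_4^{\Spin}\to\KO_4$ is already an isomorphism (no appeal to the Anderson--Brown--Peterson splitting needed), and a $4$-shift already exits the window $\{0,1,2,3\}$ where $\pi_*\SH_i$ lives.

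One caveat: your citation of \cref{strong_F2I_AII_ex} for nonvanishing is circular, since that example computes the free-to-interacting map in the Freed--Hopkins $\mho$ model, not in the hypothetical $\SH_i$ model under test. Your fallback to spatial dimension $0$ is the right move, and is what the paper does as well: it fixes class DIII in spacetime dimension $0$, where the interacting classification is $\Z_2$ in both $\SH_1$ and $\SH_2$ and the free-to-interacting map is physically known to be nonzero independent of the model.
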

\begin{proof}
Let $K$ denote the K3 surface, which is a closed spin $4$-manifold whose spin bordism class generates $\Omega_4^\Spin$~\cite{Mil63}. The $\MTSpin$-module structure on $\KO$ is through the Atiyah-Bott-Shapiro map~\eqref{ABS}; since this is an isomorphism in degrees $0$ through $7$~\cite{ABP67}, compactifying a free fermion theory represented by a class $x\in\KO^m$ on the K3 surface is the same thing as multiplying $x$ by a generator $a$ of $\KO^{-4}$ (and the minus sign is an artifact of the switch from homology to cohomology). In particular, this map is an isomorphism $\KO^4\to\KO^0$.

As noted above, there is no $n$ such that $\SH_i^n(\pt)$ and $\SH_i^{n-4}(\pt)$ are both nonzero, for $i = 1$ or $i = 2$. Therefore in both restricted and extended supercohomology, compactifying on K3 is the zero map.

Suppose $\SH_i$ with either $i = 1$ or $i = 2$ models the spectrum of interacting fermionic SPTs, and let $\tau$ denote the twist of supercohomology over $B\Z_2$ corresponding to Altland-Zirnbauer class DIII. Then compatibility of the free-to-interacting maps with compactification means that the following diagram must commute:
\begin{equation}\label{bad_diagram}
\begin{gathered}
\begin{tikzcd}
	\textcolor{MidnightBlue}{\Z} & {\KO^4} & {\SH_i^{5+\tau}(B\Z_2)} \\
	\textcolor{MidnightBlue}{\Z} & {\KO^0} & {\SH_i^{1+\tau}(B\Z_2)} & {\textcolor{MidnightBlue}{\Z_2}}
	\arrow[MidnightBlue, "\cong"', from=1-1, to=2-1]
	\arrow["\FtwoI", from=1-2, to=1-3]
	\arrow["a", from=1-2, to=2-2]
	\arrow["0", from=1-3, to=2-3]
	\arrow[MidnightBlue, "{\bmod 2}"', curve={height=23pt}, two heads, from=2-1, to=2-4]
	\arrow["\FtwoI", two heads, from=2-2, to=2-3]
\end{tikzcd}\end{gathered}
\end{equation}
For both $\SH_1$ and $\SH_2$, the classification of interacting class DIII phases in (spacetime) dimension $0$ is $\Z_2$: see, e.g., Wang-Gu~\cite[\S VII.E.2.d]{Wang_Super}.
And the free-to-interacting map in dimension $0$ in class DIII is well-known to be nonzero: see~\cite[\S 9.3.1]{freed_reflection_2021} and the references therein. But this is not compatible with the compactification map $\KO^4\to\KO^0$ being an isomorphism and the compactification map on supercohomology vanishing.
\end{proof}
We model interacting phases in class DIII with Anderson-dualized \pinp bordism; therefore in (spacetime) dimension $4$ we have $\Z_{16}$~\cite[\S 2]{Gia73}, in dimension $0$ we have $\Z_2$ (\textit{ibid.}), and the free-to-interacting maps in these dimensions are surjective~\cite[Corollary 9.83]{freed_reflection_2021}. Therefore we learn that compactifying on K3 is the unique surjective map $\Z_{16}\to\Z_2$, which is dual to the fact that the K3 surface represents $8\in\Omega_4^{\Pin^+}\cong\Z_{16}$~\cite[Lemma 5.3]{KT90}.

\section{The ansatz for the weak free-to-interacting map}
\label{s:second_part_section_2}

Having 
{introduced}
the mathematical framework for strong phases in the previous section, our objective here is to propose and analyze an ansatz
for the free-to-interacting map that applies to weak topological phases. 
As we have argued before, the guiding
principle we will use is that weak phases are protected only by a discrete translation symmetry, while being in the trivial phase from the perspective of the continuum limit.

We begin this section by recalling T-duality, which
interchanges the spatial and Brillouin tori, and then explain how the cohomology of the torus
decomposes into contributions from lower-dimensional cells in \cref{tori_splitting-section}. This leads us to \cref{the_weak_F2I_ansatz}: the free-to-interacting
map for weak phases is a composition of T-duality
with the strong free-to-interacting map parameterized over the spatial torus.
We will see in \cref{F2I_james_splitting} that the result is highly computable, as all difficulties are essentially contained in computing the relevant lower-dimensional strong free-to-interacting maps.
This tool will be applied to physically relevant examples in \cref{example_section}.

\subsection{T-duality}
\label{subsec:T-duality}
Whereas the real torus $\mathbf{T}^d$ appears in our \cref{WI_ansatz} for interacting weak phases, non-interacting fermionic topological phases are traditionally formulated over the crystalline momentum space torus $\bT^d$. These tori behave differently, particularly when symmetries are included.
However, T-duality, a construction that originated in string theory \cite{T-duality-Strings}, precisely relates these two tori in a manner that allows us to recast non-interacting results in terms of $\mathbf{T}^d$ and thus to define a free-to-interacting map.
We note that T-duality has been employed many times to treat problems in non-interacting fermionic topological phases \cite{mathai2015t, mathai2016t, Mathai_2016-Tdualitybulkboundaryhigherdim, Hannabuss_2017, gomi2019crystallographic}.

Recall that associated to a $d$-dimensional lattice $\Pi$ are the unit cell or spatial torus $\mathbf T^d \coloneqq \bR^n/\Pi$ and the momentum space torus or Brillouin zone $\bT^d \coloneqq \Hom(\Pi, \U(1))$.
The Brillouin zone has a $\Z_2$-action given by complex conjugation on $\U(1)$.
The Fourier transform between position and momentum space has a $K$-theoretic analog in the T-duality isomorphisms\footnote{In~\eqref{eqn:KSpKQ-Tduality}, $\mathit{KSp}$ is the $K$-theory of quaternionic bundles and $\mathit{KQ}$ is a Real-equivariant version of $\mathit{KSp}$ introduced by Dupont~\cite{Dup69}.}
\begin{subequations}\label{Tdual}
    \begin{align}
        \label{eqn:KOKR-Tduality}
        \mathrm T_\R\colon \KO^\bullet(\mathbf T^d)\xrightarrow{\sim}&\ \KR^{\bullet - d}(\overline{\bT}{}^d) &\text{TRS squares to $1$}\\
        \label{eqn:KSpKQ-Tduality}
        \mathrm T_{\mathbb H}\colon \mathit{KSp}^\bullet (\mathbf T^d)\xrightarrow{\sim} & \ \mathit{KQ}^{\bullet - d}(\overline{\bT}{}^d)&\text{TRS squares to $-1$}\\
        \label{eqn:KU-Tduality}
        \mathrm T_\C\colon K^\bullet(\mathbf T^d)\xrightarrow{\sim} &\ K^{\bullet - d} (\bT^d) & \text{Chern insulators}
    \end{align}
\end{subequations}
which can be defined in terms of a pull-convolve-push construction for topological bundles called the \emph{Fourier-Mukai transform}.
These isomorphisms have been well studied in the condensed matter literature; see for instance \cite{kitaev_periodic_2009, Hannabuss_2017, Mathai_2016-Tdualitybulkboundaryhigherdim}. 
There is also a $C^*$-algebraic approach to this material: see \cite{Rosenberg-realbaumconnes-orientifolds}. 
Here we review the perspective of~\cite{Hannabuss_2017, Mathai_2016-Tdualitybulkboundaryhigherdim}.

The \emph{Poincaré line bundle} $\mathcal L$ is the complex line bundle on $\mathbf T^d\times \overline{\bT}{}^d= \bR^d/\Pi \times \Hom(\Pi, \U(1))$ obtained as the quotient of the trivial bundle $\bC\times \bR^n \times \overline{\bT}{}^d$ by the $\Pi$-action via characters
\begin{equation}
    \pi\cdot (z, v, \chi)\sim (e^{2\pi i \chi(\pi)}z, v+\pi, \chi).
\end{equation}
Bloch waves come from sections of the restrictions of $\mathcal L$ to different momentum cross-sections $\mathbf T^d\times \{\chi\}\subset\mathbf T^d\times \overline{\bT}{}^d$. The T-duality map 
can then be expressed as a pull-push along the correspondence
\begin{equation}
\begin{gathered}
\begin{tikzcd}[ampersand replacement=\&]
        \& \mathbf T^d\times\overline{\bT}{}^d \arrow[dl, "p", swap]\arrow[dr, "\widehat p"]\& \\
        \mathbf T^d \& \& \overline{\bT}{}^d
    \end{tikzcd}
\end{gathered}
\end{equation}
twisted by the Poincaré line bundle
\begin{equation}
    E\mapsto \widehat p_*(p^*E \otimes\mathcal L)
\end{equation}
where the pushforward $\widehat p_*$ is, intuitively, ``integrating out the $\mathbf T^d$ direction'' and thus reduces the dimension by $d$.\footnote{Rigorously, $\widehat p_*$ is fiber integration in the appropriate $K$-theory.}
To recover \eqref{eqn:KOKR-Tduality}, note that in the presence of TRS squaring to $1$, the involution $k\mapsto -k$ lifts to an antilinear action of $T$ on $\mathcal L$, so $p^*E\otimes \mathcal L$ is a Real bundle in $\KR(\mathbf T^d\times \overline{\bT}{}^d)$ and $\widehat p_*$ is the pushforward in $\KR$-cohomology. Equations \eqref{eqn:KSpKQ-Tduality} and \eqref{eqn:KU-Tduality} follow from similar reasoning.\footnote{Another way to see this is that, because $\mathbf T^d$ has trivial $\mathbb Z_2$ action, $\KO(\mathbf T^d)\simeq \KR(\mathbf T^d)$ and $\mathit{KSp}(\mathbf T^d)\simeq \mathit{KQ}(\mathbf T^d)$ so the T-duality maps \eqref{eqn:KOKR-Tduality} and \eqref{eqn:KSpKQ-Tduality} seem like they change the $K$-theory type but in fact they are Fourier-Mukai transforms internal to $\KR, \mathit{KQ}$ respectively, where one of the sides has trivial $\mathbb Z_2$-action and thus reduces to an ordinary nonequivariant $K$-theory.}

\subsection{Splitting the generalized cohomology of tori}\label{tori_splitting-section} 
The generalized cohomology of a torus $\mathbf{T}^d$ has a convenient 
description in terms of the generalized cohomology of spheres (interpreted as cells in a cellular decomposition of $\mathbf{T}^d$), using the fact that for spaces $X, Y$, there is a homotopy equivalence
\begin{equation}
    \Sigma(X\times Y) \simeq \Sigma X \vee \Sigma Y\vee \Sigma(X\wedge Y).
\end{equation}
This formula is sometimes referred to as the \emph{James splitting}.
For instance, $\Sigma \mathbf{T}^2\simeq \Sigma S^1 \vee\Sigma S^1 \vee \Sigma S^2$.
If we iterate this equivalence over the $d$-fold product of circles 
$\mathbf{T}^d \simeq S^1_1\times\ldots\times S^1_d$ and use the suspension isomorphism for generalized cohomology, we get the following identity, where we've labeled the circle factors for notational clarity.
\begin{lemma}[Binomial formula for cohomology of the torus]
    \label{lemma:jamessplitting}
    The generalized cohomology of the $d$-dimensional torus is 
    \begin{equation}
        \widetilde E_0(
        \mathbf{T}^d
        )\cong \bigoplus_{I\subset\{1, \ldots, n\}} \widetilde E_0(S^I) \cong\bigoplus_{n=1}^d \widetilde E_{-n}^{\oplus {d\choose n}},
    \end{equation}
    where we denote by $S^I \coloneqq \wedge_{i\in I}S^1_i$ the $|I|$-dimensional sphere factor indexed by $I\subset \{1, \ldots, n\}$. 
\end{lemma}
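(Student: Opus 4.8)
The plan is to promote the statement to a \emph{space\-level stable splitting of the torus} and then read off the cohomological formula. Concretely, I would first prove by induction on $d$ the homotopy equivalence
\[
\Sigma\bT^d \;\simeq\; \bigvee_{\emptyset\neq I\subseteq\{1,\dots,d\}} \Sigma\Bigl(\bigwedge_{i\in I} S^1_i\Bigr),
\]
where the $S^1_i$ are the circle factors of $\bT^d = S^1_1\times\dots\times S^1_d$. The case $d=1$ is trivial, and $d=2$ is exactly the displayed equivalence $\Sigma(X\times Y)\simeq\Sigma X\vee\Sigma Y\vee\Sigma(X\wedge Y)$ applied to $X=Y=S^1$.

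For the inductive step, write $P\coloneqq S^1_1\times\dots\times S^1_{d-1}$ and apply the binary splitting to $P\times S^1_d$:
\[
\Sigma(P\times S^1_d)\;\simeq\;\Sigma P\;\vee\;\Sigma S^1_d\;\vee\;\Sigma(P\wedge S^1_d).
\]
The summand $\Sigma P$ is handled by the inductive hypothesis. For $\Sigma(P\wedge S^1_d)$ I would use $\Sigma(P\wedge S^1_d)\simeq S^1_d\wedge\Sigma P$ together with the fact that smashing with a well-pointed CW complex preserves homotopy equivalences and distributes over wedges, so the inductive splitting of $\Sigma P$ gives $\Sigma(P\wedge S^1_d)\simeq\bigvee_{\emptyset\neq I\subseteq\{1,\dots,d-1\}}\Sigma\bigl(S^1_d\wedge\bigwedge_{i\in I}S^1_i\bigr)$. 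Collecting the three families of wedge summands, the index sets that appear are exactly the nonempty $I\subseteq\{1,\dots,d-1\}$, the singleton $\{d\}$, and the sets $I\cup\{d\}$ for $\emptyset\neq I\subseteq\{1,\dots,d-1\}$; these are pairwise disjoint and together exhaust the nonempty subsets of $\{1,\dots,d\}$, each occurring once, which closes the induction.

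Finally I would apply a reduced generalized (co)homology theory $\widetilde E$ to this equivalence, using the (finite) wedge axiom and the suspension isomorphism, and write $S^I\coloneqq\bigwedge_{i\in I}S^1_i\cong S^{|I|}$; this gives $\widetilde E_0(\bT^d)\cong\bigoplus_{\emptyset\neq I}\widetilde E_0(S^I)$, and since $\widetilde E_0(S^I)\cong\widetilde E_{-|I|}$ with $\binom{d}{n}$ subsets of each size $n$, one obtains the stated formula. I do not expect a genuine obstacle: the only points deserving care are the combinatorial bookkeeping in the inductive step (verifying that each nonempty subset is produced exactly once) and the standing point-set hypothesis that the circle factors are well-pointed CW complexes, which is what legitimizes both the stable splitting and the ``smash with $S^1_d$'' step. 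It is also worth stating explicitly that the lemma concerns \emph{reduced} cohomology: the unreduced group $E_0(\bT^d)$ carries an additional $E_0(\mathrm{pt})$ summand coming from the basepoint $0$-cell, consistent with \cref{rem:numberofbands}.
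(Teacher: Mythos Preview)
Your proposal is correct and follows exactly the approach sketched in the paper: the paper simply states the binary James splitting $\Sigma(X\times Y)\simeq\Sigma X\vee\Sigma Y\vee\Sigma(X\wedge Y)$, remarks that one iterates it over the $d$-fold product $(S^1)^{\times d}$, and invokes the suspension isomorphism. Your induction is precisely this iteration written out in detail, with the bookkeeping on index sets made explicit; the paper leaves all of that implicit.
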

The binomial formula interacts nicely with T-duality. Given a spatial torus $\mathbf T^d= S^1_1\times\ldots\times S^1_d$, the Brillouin zone is 
$\overline{\bT}{}^d\simeq \wh S^1_1\times\ldots\times \wh S^1_d$
where $\wh S^1_i$ are the dual circles (1d ``dual tori'') to $S^1_i$.
Then, 
\begin{lemma}[T-duality and the binomial formula]
    \label{lemma:james-and-t}
    Within the binomial formulas $K(\mathbf T^d)\simeq \bigoplus_{I}K(S^I)$ and 
    $K(\overline{\bT}{}^d)
    \simeq \bigoplus_J K(\widehat S{}^J)$, T-duality maps $K(S^I)$ to $K(\widehat S{}^{I^c})$, where $K$ indicates $\KO, \ \KR, \ \mathit{KSp}, \ \mathit{KQ},\ \text{or}\ \mathit{KU}$ as appropriate, and $I^c$ is the complement of $I$. 
\end{lemma}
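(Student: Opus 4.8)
The plan is to reduce the statement to the one-circle case by exploiting multiplicativity of the Fourier--Mukai transform, prove it there by a short direct computation, and then reassemble. Write $\mathbf T^d = S^1_1\times\cdots\times S^1_d$ and $\bT^d = \widehat S^1_1\times\cdots\times\widehat S^1_d$. Because the lattice $\Pi\cong\Z^d$ acts diagonally on the cover, the defining quotient construction shows that the Poincaré bundle factors as an external tensor product $\mathcal L\cong\mathcal L_1\boxtimes\cdots\boxtimes\mathcal L_d$ of the one-circle Poincaré bundles $\mathcal L_i$ on $S^1_i\times\widehat S^1_i$ (in the $\KR$ and $\mathit{KQ}$ cases this is an isomorphism of Real bundles, each $\mathcal L_i$ carrying the antilinear lift of $k\mapsto -k$ on $\widehat S^1_i$), and the correspondence $\mathbf T^d\xleftarrow{p}\mathbf T^d\times\bT^d\xrightarrow{\widehat p}\bT^d$ is the product of the $d$ one-circle correspondences. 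Applying the projection formula together with Fubini for fiber integration, the T-duality map $\mathrm T$ (any of $\mathrm T_\R$, $\mathrm T_{\mathbb H}$, $\mathrm T_\C$) is then the external (``smash'') product of the one-circle T-dualities $\mathrm T^{(i)}\colon K^\bullet(S^1_i)\to K^{\bullet-1}(\widehat S^1_i)$.

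Next, observe that the James splitting of \cref{lemma:jamessplitting} is obtained by iterating $\Sigma(X\times Y)\simeq\Sigma X\vee\Sigma Y\vee\Sigma(X\wedge Y)$ over the factors $S^1_i$, so the summand indexed by $I$ is the smash of the reduced (``circle'') summand of $S^1_i$ for $i\in I$ with the basepoint summand $S^0$ for $i\notin I$, and likewise for $\bT^d$; since these decompositions are compatible with external products, it suffices to prove the one-circle claim that $\mathrm T^{(i)}$ interchanges the basepoint summand $K(\pt)\subset K(S^1_i)$ with the reduced summand $\widetilde K(\widehat S^1_i)\subset K^{\bullet-1}(\widehat S^1_i)$. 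On the correspondence space $S^1\times\widehat S^1\cong T^2$ this is a direct computation: the class of the trivial bundle, which generates $K(\pt)\subset K(S^1)$, maps to $\widehat p_*[\mathcal L_i]$, and since $\mathcal L_i$ has winding number one around the fiber $S^1$ of $\widehat p$, this fiber integral generates $\widetilde K(\widehat S^1)$, with vanishing basepoint component because $\mathcal L_i$ restricts to a degree-zero line bundle on a single fiber (a point one checks in each $K$-theory, in particular in $\KO$). Conversely, a generator of $\widetilde K(S^1)$ is represented by a Bott shift of the class of a point $\{x_0\}$, whose pullback is the class of the section $\{x_0\}\times\widehat S^1$; since $\mathcal L_i$ restricts trivially to $\{x_0\}\times\widehat S^1$ (and to $S^1\times\{\pt\}$), the product $p^*[x_0]\cdot[\mathcal L_i]$ equals that section class, whose fiber integral is the trivial bundle on $\widehat S^1$, generating $K(\pt)\subset K(\widehat S^1)$. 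Thus $\mathrm T^{(i)}$ swaps the two summands, and smashing over all $i\in\{1,\dots,d\}$ sends the summand of $K(\mathbf T^d)$ with circle slots exactly in $I$ to the summand of $K(\bT^d)$ with circle slots exactly in $I^c$, i.e.\ to $K(\widehat S^{I^c})$, with the expected total degree shift by $-d$.

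The main obstacle is rigor in the multiplicativity step: the James splitting lives at the level of suspension spectra, so one must know that $\mathrm T$ is induced by a stable map (or argue directly that the splitting detects $K$-groups summand-by-summand) and that fiber integration along the product correspondence genuinely factors as an iterated external fiber integration; a secondary subtlety is carrying the antilinear involution through the external decomposition of $\mathcal L$ and through the one-circle computation in the $\KR$ and $\mathit{KQ}$ cases, together with the $\KO$-theoretic verification that a degree-zero-fiber pushforward vanishes. An alternative that sidesteps the geometric pushforward altogether is to realize $\mathrm T$ on the parallelizable closed manifold $\bT^d$ as Atiyah (Spanier--Whitehead) duality followed by the diagonal, under which the $I\leftrightarrow I^c$ exchange is manifest; the remaining work is then to match that spectrum-level description with the Fourier--Mukai one.
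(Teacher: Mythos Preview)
The paper does not actually prove this lemma; immediately after the statement it defers to \cite[\S 6]{Mathai_2016-Tdualitybulkboundaryhigherdim} for details. Your approach---factor the Poincar\'e bundle and the correspondence as a $d$-fold external product, reduce to the one-circle case, and verify the swap there---is the natural one and is in the same spirit as the argument in that reference, so you are supplying more than the paper itself does.

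The structure of your argument is correct, but the one-circle verification is imprecise as written. The phrase ``$\mathcal L_i$ has winding number one around the fiber $S^1$ of $\widehat p$'' does not quite parse: every complex line bundle on $S^1$ is topologically trivial, so there is no winding number in the usual sense. What makes the pushforward land in the reduced summand is rather that the family of flat line bundles $\mathcal L_i|_{S^1\times\{\chi\}}$ sweeps once through the moduli of holonomies as $\chi$ varies over $\widehat S^1$; equivalently, $[\mathcal L_i]$ generates the top-cell summand of $K(S^1\times\widehat S^1)$, and fiber integration sends that to the reduced summand of $K(\widehat S^1)$. Likewise, ``degree-zero line bundle on a single fiber'' should be replaced by the observation that restricting to a fiber over a point of $\widehat S^1$ kills the class, which is what forces the basepoint component to vanish. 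A cleaner route, avoiding these issues entirely, is the $C^*$-algebraic one already used in \S\ref{sec:freeKth}: the Fourier isomorphism $C^*(\Z;\R)\cong C(\bar S^1)$ identifies the module $\R$ (the basepoint class) with the skyscraper at $0\in\bar S^1$, which lies in the reduced summand, and identifies the free module $C^*(\Z;\R)$ (the reduced class on the spatial side, since $S^1=B\Z$) with the trivial bundle, which is the basepoint class.

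Your self-identified obstacles are real but standard: the James splitting is a stable splitting and the Fourier--Mukai transform is a $KK$-class, so compatibility with external products follows from the K\"unneth theorem in $KK$-theory (or, in the Real case, from the multiplicativity of $KR$-pushforward); the antilinear lift factors because each $\mathcal L_i$ already carries one. The Atiyah-duality alternative you mention is also valid and is arguably the most transparent way to see the $I\leftrightarrow I^c$ exchange.
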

We refer the reader to \cite[\S 6]{Mathai_2016-Tdualitybulkboundaryhigherdim} for more details.

\subsection{Comparing strong and weak phases}
\label{ss:compare}

In the previous sections we discussed how the generalized cohomology of the torus splits into several summands, some of which are strong, and some of which are weak. 
Here, we would like to emphasize the result obtained in \cref{lemma:james-and-t} that which cell---top or bottom---of the torus is associated to the strong phase depends on which torus we consider.
On the real space torus $\mathbf{T}^d$, strong phases correspond to the summands in $\KO^{d+s-2}(\text{pt})\subset \KO^{d+s-2}(\mathbf{T}^d)$; i.e.\ the summands coming from the point, or 0-cell of the torus.
On the dual torus, which forms the Brillouin zone, the strong phases instead come from the top cell, in the sense that if one crushes all cells except for the top one, the resulting space is $\Z_2$-equivariantly homeomorphic to $\bar S^d$, and the induced pullback map on phases is the inclusion of the strong phases in the group of weak phases.
In summary, the inclusion of strong phases into the total classification including strong and weak phases interacts with T-duality in the way outlined in the following diagram.

\begin{equation}
\begin{gathered}
    \begin{tikzcd}
        \widetilde{\KR}{}^{s-2}(\bar{S}^d) \ar[r,hookrightarrow] & \KR^{s-2}(\overline{\mathbb{T}}{}^d) \\
        \widetilde{\KO}{}^{d+s-2}(S^0) \ar[u,"\mathrm{T}_\R","\cong"']\ar[r,hookrightarrow] & \KO^{d+s-2}(\mathbf{T}^d) \ar[u,"\mathrm{T}_\R","\cong"']
    \end{tikzcd}
\end{gathered}
\end{equation}

Here the top horizontal arrow is induced by a $\Z_2$-equivariant collapse map $(\overline{\mathbb{T}}{}^d)_+ \to \bar S^d$ and the lower by the crush map $(\mathbf{T}^d)_+ \to S^0$.
This, and the analogous complex $K$-theory diagram, commute by Lemma~\ref{lemma:james-and-t}, as these maps pick out the top cell of $\overline{\mathbb{T}}{}^d$ and the bottom cell of $\mathbf{T}^d$.
These inclusions are split and so realize the strong phases as a direct summand of all phases.

\subsection{Kitaev's conjecture for weak SPT phases and T-duality}\label{kitaev_conjecture}

The various proposals for classifying strong interacting SPT phases in $d+1$ dimensions generally satisfy Kitaev's conjecture~\cite{Kit13, Kit15}; namely, for each dimension $d$ there is a space $\mathcal{S}_{d+1}(G)$ of $(d+1)$-dimensional gapped condensed matter systems with symmetry group $G$ such that the SPT phases correspond to connected components of the space, that is, $\mathit{SPT}_{d+1}(G) = \pi_0(\mathcal{S}_{d+1}(G))$, and the spaces $\mathcal{S}_{d+1}(G)$ form an $\Omega$-spectrum in the sense of algebraic topology. That is, for all $d\ge 0$ there is a (weak) homotopy equivalence
\begin{equation}
    \Omega \mathcal{S}_{d+1}(G) \overset\sim\longrightarrow \mathcal{S}_{d}(G).
\end{equation}
Like any spectrum, $\{\mathcal{S}_{d+1}\}_{d\ge 0}$ has an associated generalized cohomology theory $S_G^{d+1}(X) = [X, \mathcal{S}_{d+1}(G)]$, so the degree-$(d+1)$ cohomology of the point classifies $d$-dimensional SPT phases: $S_G^{d+1}(\mathrm{pt}) = \pi_0(\mathcal{S}_{d+1}(G)) = \mathit{SPT}_{d+1}(G)$.


According to Kitaev~\cite{kitaev_periodic_2009} (see also Freed-Moore~\cite{freed_twisted_2013}),
the classification for 
free tenfold way band insulators
of type $H(s)$ in $d+1$ dimensions is given by $\KR$-theory of the Brillouin zone torus (with its conjugation action) \cite{kitaev_periodic_2009,freed_twisted_2013}:
\begin{equation}
    \mathit{FF}_{d+1}(\mathbb{Z}^d,H(s)) \cong \KR^{s-2}(\overline{\mathbb{T}}{}^d).
\end{equation}
While this does classify phases using a generalized cohomology theory, it does not obviously fit the form required by Kitaev's conjecture since the degree of the $K$-theory group, $s-2$, does not depend at all on the dimension of the system. However, we can rephrase using the T-duality isomorphism \eqref{eqn:KOKR-Tduality} to get
\begin{equation}
    \KR^{s-2}(\overline{\mathbb{T}}{}^d) \cong \KO^{d+s-2}(\mathbf T^d),
\end{equation}
which 
is more clearly an instance of Kitaev's conjecture, with
the spectrum in question being $\Sigma^{s-3} \KO$. This same spectrum satisfies Kitaev's original formulation of the conjecture for strong free fermion SPT phases by mapping to the bottom cell, i.e.
\begin{equation}
    \mathit{FF}_{d+1}^{\mathit{strong}}(\mathbb{Z}^d,H(s)) \cong \KO^{d+s-2}(\pt).
\end{equation}
Hence T-duality allows us to state the classification of weak free fermion phases in a way compatible with Kitaev's conjecture, and suggests that for interacting weak SPT phases we should also 
start with
the spatial torus rather than the Brillouin zone torus. Indeed, Kitaev's conjecture for strong phases dictates the degree of the cohomology (namely, $d+1$) and the space the cohomology theory is applied to (namely, the point), leaving only a choice of spectrum. In a modest extension of Kitaev's framework to weak phases, we would likewise want to fix the degree (again $d+1$) and space; and then, the example of weak free fermion spaces shows that the fixed space should be the spatial torus, with the strong phases corresponding to the bottom cell (a point) of this torus.

This means that to extend Kitaev's conjecture to include weak SPT phases we must add the following homotopy equivalence statement: let $\mathcal{S}_{d+1}(G,\mathbb{Z}^d)$ represent $d+1$-dimensional gapped condensed matter systems with internal symmetry $G$ and discrete translation symmetry $\mathbb{Z}^d$; then 
\begin{equation}
    \mathcal{S}_{d+1}(G,\mathbb{Z}^d)\simeq \mathrm{Map}(\mathbf{T}^d,\mathcal{S}_{d+1}(G)),
\end{equation}
so that when mapping to the bottom cell, the above equation reduces to Kitaev's original conjecture for strong SPT phases.

As noted above, we adopt Freed and Hopkins's ansatz for strong SPT phases, namely that they are classified by the group $\mho_{H(s)}^{d+2}(\pt)$, and thus our heuristic argument above leads us to propose
that $\mho_{H(s)}^{d+2}(\mathbf T^d)$ classifies weak SPT phases, as stated in \cref{WI_corollary}. Notice that our rephrasing of the weak free fermion classification using T-duality means that we now have Kitaev-compliant 
classifications for both weak free fermion and weak SPT phases, which will allow us to comfortably compare them in the next section.

\begin{remark}
\label{rem:nonexistentremark}
    We can interpret Kitaev's proposal for strong phases on the interacting side field-theoretically as follows, compare \cite[\S 3.2]{Gaiotto_SPT}.
    An element of $\mho_{H(s)}^{d+1}(X)$ is a $d$-space-dimensional invertible field theory of symmetry type $s$ equipped with a background field valued in $X$.
    As a special case, the suspension isomorphism $\mho_{H(s)}^{d+1}(S^k) \cong \mho_{H(s)}^{d+1} \oplus \mho_{H(s)}^{d+1-k}$ can be interpreted as follows.
    Given a $d$-dimensional invertible quantum field theory $Z$ with background field valued in $S^k$, this gives a $(d-k)$-dimensional invertible quantum field theory by sending\footnote{If $N$ is an $H(s)$-manifold, we use the stably framed structure on $S^k$ arising from the standard isomorphism $TS^k\oplus\underline\R\cong\underline\R^{k+1}$ to make $N \times S^k$ into an $H(s)$-manifold.}
    \begin{equation}
    N^{d-k} \mapsto Z(N \times S^{k} \xrightarrow{\mathit{pr}} S^k).
    \end{equation}
    The factor $\mho_{H(s)}^{d+1}$ corresponds to elements of $\mho_{H(s)}^{d+1}(S^k)$ which do not depend on the $S^k$-valued background field.
    
    We can in this way also reinterpret elements of $\mho_{H(s)}^{d+2}(\mathbf{T}^d)$ as $(d+1)$-dimensional field theories with target $\mathbf{T}^d$.
    By taking appropriate cells of $\mathbf{T}^d$, we can interpret the lower-dimensional terms in the binomial formula for the cohomology of $\mathbf T^d$.
    Specifically, if $\mathbf{T}^k \subseteq \mathbf{T}^d$ is a subtorus corresponding to a subset of $\{1, \dots, d\}$ of size $k$, we can define a map $\mho_{H(s)}^{d+2}(\mathbf{T}^d) \to \mho_{H(s)}^{d+2-k}$ as
    \begin{equation}
    N^{d-k} \mapsto Z(N \times \mathbf{T}^k \to \mathbf{T}^d)
    \end{equation}
    where the map to $\mathbf{T}^d$ is induced by the inclusion $\mathbf{T}^k \subseteq \mathbf{T}^d$. 
\end{remark}

\subsection{The ansatz for the free-to-interacting map for weak phases}\label{ansatz_subsection}

Now we have all the ingredients we need to state our main ansatz (\cref{the_weak_F2I_ansatz}): a model for the free-to-interacting map in terms of generalized cohomology.

Fix a (spatial) dimension $d$ and symmetry type $s$. In \cref{neutral_ansatz}, we modeled the group of free weak phases as $\KR^{s-2}(\overline{\mathbb T}{}^d)$ in the real case and $K^{s-2}(\mathbb T^d)$ in the complex case, and in \cref{WI_corollary} we modeled the group of interacting weak phases as $\mho_{H(s)}^{d+2}(\mathbf T^d)$ in the real case and $\mho_{H^c(s)}^{d+2}(\mathbf T^d)$ in the complex case.

For the moment restrict to real symmetry types.
We need to get from the Brillouin torus to the spatial torus, so our first step is to use T-duality~\eqref{eqn:KOKR-Tduality} to get from $\KR^{s-2}(\overline{\mathbb T}{}^d)$ to $\KO^{d+s-2}(\mathbf T^d)$. After that, we simply apply the free-to-interacting map $
\FtwoI_s\colon \KO^{d+s-2}(\mathbf T^d)\to \mho_{H(s)}^{d+2}(\mathbf T^d)$ of \cref{conjecture:freedhopkinsF2I}, evaluated on the spatial torus. For complex symmetry types, the story is completely analogous, using the T-duality isomorphism of~\eqref{eqn:KU-Tduality} and the free-to-interacting map from \cref{conjecture:freedhopkinsF2Icpx}.
\begin{ansatz}\hfill
\label{the_weak_F2I_ansatz}
\begin{subequations}
\begin{enumerate}
    \item Let $x\in\KR^{s-2}(\overline{\mathbb T}{}^d)$ be a discrete translation-invariant free fermion theory
    in $d$ dimensions and of \emph{real} symmetry type $s$. The long-range effective theory of $x$ is given by the image of $x$ under the composition
    \begin{equation}
        \FtwoI_{\mathit{weak}}\colon \KR^{s-2}(\overline{\mathbb T}{}^d) \overset{\mathrm T_\R^{-1}}{\underset{\eqref{eqn:KOKR-Tduality}}{\longrightarrow}} \KO^{d{+}s-2}(\mathbf T^d) \overset{\FtwoI_s}{\underset{\eqref{conjecture:freedhopkinsF2I}}{\longrightarrow}} \mho_{H(s)}^{d+2}(\mathbf T^d).
    \end{equation}
    \item Let $x\in K^{s-2}(\mathbb T^d)$ be a discrete translation-invariant free fermion theory in $d$ dimensions and of \emph{complex} symmetry type $s$. The long-range effective theory of $x$ is given by the image of $x$ under the composition
    \begin{equation}
        \FtwoI_{\mathit{weak}}^c\colon K^{s-2}(\mathbb T^d) \overset{\mathrm T_\C^{-1}}{\underset{\eqref{eqn:KU-Tduality}}{\longrightarrow}} K^{d{+}s-2}(\mathbf T^d) \overset{\FtwoI_s^c}{\underset{\eqref{conjecture:freedhopkinsF2Icpx}}{\longrightarrow}} \mho_{H^c(s)}^{d+2}(\mathbf T^d).
    \end{equation}
\end{enumerate}
\end{subequations}
\end{ansatz}

This ansatz has several consequences for free and interacting weak phases. The following consequence refines the observation that weak phases break up into a direct sum of strong phases, which is well-known in the physics literature (see e.g. \cite[\S 7.3 and Proposition F.8]{Xiong_SPT} and \cite[\S 9]{guo_fermionic_2020}), and applies it to our free-to-interacting map.
 The decomposition can be subtle; T-duality is essential for a clear understanding of this phenomenon.
\begin{lemma}[Weak phases are built from strong phases]\label{F2I_james_splitting}
    Write $\FtwoI^d_{weak}$ for the weak free-to-interacting map in dimension $d$ from \cref{the_weak_F2I_ansatz}, and $\FtwoI^d_{strong}$ for the strong free-to-interacting map in dimension $d$ from \cref{conjecture:freedhopkinsF2I}. 
    We have that
    \begin{equation}
        \FtwoI^d_{\mathit{weak}} = \bigoplus_{k=0}^d \binom{d}{k} \FtwoI^{d-k}_{\mathit{strong}}.
    \end{equation}
    The analogous statement is true for the complex free-to-interacting maps.
\end{lemma}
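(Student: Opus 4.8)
The plan is to use that both the James splitting and Freed--Hopkins' strong free-to-interacting map are ``geometric,'' so that $\FtwoI_s$ is forced to be block-diagonal with respect to the cellular decomposition of the torus, and then to transport this decomposition through T-duality. First I would reduce everything to a statement on the spatial torus: by \cref{the_weak_F2I_ansatz} we have $\FtwoI^d_{weak} = \FtwoI_s\circ\mathrm T_\R^{-1}$ with $\mathrm T_\R^{-1}$ an isomorphism, so it is enough to decompose $\FtwoI_s\colon\KO^{*}(\mathbf T^d)\to\mho_{H(s)}^{*}(\mathbf T^d)$ and then observe, via \cref{lemma:james-and-t}, that $\mathrm T_\R^{-1}$ is itself block-diagonal: it carries the $J$-indexed summand of the James splitting of $\overline{\bT}{}^d$ to the $J^c$-indexed summand of the James splitting of $\mathbf T^d$. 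Since $|J^c| = d-|J|$ and $\binom{d}{|J|} = \binom{d}{d-|J|}$, re-indexing by complementation does not change the multiplicities claimed in the lemma, so nothing is lost in passing to $\mathbf T^d$.

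Next, I would upgrade \cref{lemma:jamessplitting} from an isomorphism of abelian groups to a genuine stable splitting $\Sigma^\infty_+\mathbf T^d\simeq\bigvee_{I\subseteq\{1,\dots,d\}}\Sigma^\infty S^I$ of suspension spectra, obtained by iterating the homotopy equivalence $\Sigma(X\times Y)\simeq\Sigma X\vee\Sigma Y\vee\Sigma(X\wedge Y)$. Because $\FtwoI_s$ is a natural transformation of cohomology theories---indeed, by \cref{conjecture:freedhopkinsF2I} it is induced by a map of spectra (namely $I_\Z\ABS_s$, followed by Anderson self-duality)---it commutes with all the maps realizing this stable splitting. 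Hence $\FtwoI_s$ on $\mathbf T^d$ is the direct sum, over $I\subseteq\{1,\dots,d\}$, of its restrictions to the sphere factors $S^I$.

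Then I would identify the $I$-th summand for $|I| = k$. Applying the suspension isomorphism to source and target replaces $\widetilde\KO^{*}(S^I)$ by $\KO^{*-k}(\pt)$ and $\widetilde\mho_{H(s)}^{*}(S^I)$ by $\mho_{H(s)}^{*-k}(\pt)$; since $\FtwoI_s$ is (induced by) a map of spectra it commutes with these suspension isomorphisms, so under these identifications the restriction of $\FtwoI_s$ to $S^I$ becomes exactly the strong map $\FtwoI^{d-k}_{strong}$ of \cref{conjecture:freedhopkinsF2I}. Moreover this restriction depends on $I$ only through $k = |I|$, since permuting the circle factors of $\mathbf T^d$ is a self-homeomorphism carrying $S^I$ to any $S^{I'}$ with $|I'| = k$, and $\FtwoI_s$ is natural. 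Collecting the $\binom{d}{k}$ subsets of each size $k$ and summing over $k = 0,\dots,d$ yields the claimed formula, and by the first paragraph the same holds for $\FtwoI^d_{weak}$. The complex case is identical, with $K$, $\mathrm T_\C$ of \eqref{eqn:KU-Tduality}, $H^c(s)$, and $\FtwoI^c_s$ of \cref{conjecture:freedhopkinsF2Icpx} replacing their real counterparts; the stable splitting of $\mathbf T^d$ and its naturality are unchanged.

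The step I expect to be the main obstacle is the promotion of \cref{lemma:jamessplitting} to an honest stable splitting of $\Sigma^\infty_+\mathbf T^d$ together with the verification that $\FtwoI_s$ respects it: one must know that the splitting is realized by maps of spaces (so that it is natural for all stable cohomology operations, in particular for $\FtwoI_s$) rather than being an artifact of the particular cohomology theory in which it is written. A secondary, purely bookkeeping obstacle on the real side is the $\Z_2$-equivariant refinement of the splitting for the Brillouin torus $\overline{\bT}{}^d$ and its compatibility with $\mathrm T_\R$, but this is exactly the content of \cref{lemma:james-and-t}, which I would simply invoke.
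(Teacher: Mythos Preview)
Your proposal is correct and follows essentially the same approach as the paper: both use that the James splitting is a genuine stable equivalence and that $\FtwoI_s$, being induced by a map of spectra, respects it, together with the compatibility of T-duality with the splitting (\cref{lemma:james-and-t}). The only difference is order of operations: you first T-dualize and then split on the spatial torus $\mathbf T^d$ (where the splitting is non-equivariant), whereas the paper first splits $\overline{\bT}{}^d$ using the $\Z_2$-equivariant James splitting~\cite[Theorem 11.8]{freed_twisted_2013} and then T-dualizes summand by summand; your route has the minor advantage of sidestepping the equivariant refinement entirely, at the cost of needing \cref{lemma:james-and-t} only for the bookkeeping re-indexing rather than for the decomposition itself.
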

\begin{proof}
    The binomial splitting \cref{lemma:jamessplitting} of the Brillouin zone is equivariant with respect to the involutions on $\overline{\mathbb T}{}^d$~\cite[Theorem 11.8]{freed_twisted_2013}. Therefore there is a $\Z_2$-equivariant stable equivalence $\overline{\bT}{}^d \simeq_\text{stably} \bigvee_{I\subseteq[d]} \bar S^I$, under which the element $x\in\KR^{s-2}(\overline{\bT}{}^d)$ splits into elements $x_I\in \widetilde{\KR}{}^{s-2}(\bar S^I)=\KO^{s-2+|I|}(\pt)$. Under T-duality, by \cref{lemma:james-and-t}, we get elements $\text{Dual}(x_I) = \bar{x}_{I^c}\in \KO^{s-2 + d - |I^c|} = \KO^{s-2 + |I|}$.
    Each $I$ thus gives us a strong free-to-interacting map
    \begin{equation}
    \FtwoI_{\mathit{strong}}^I \colon \widetilde{\KR}{}^{s-2}(\bar{S}{}^{|I|}) \to \mho_{H(s)}^{|I|+2}.
    \qedhere
    \end{equation}
\end{proof}
As a result, the kernel and cokernel of $\FtwoI^d_{\mathit{weak}}$ can be computed from those of $\FtwoI^k_{\mathit{strong}}$ as $k$ varies from $0$ to the dimension:  
\begin{equation}
\ker \FtwoI^d_{\mathit{weak}} = \bigoplus_{k=0}^d \binom{d}{k} \ker \FtwoI^{d-k}_{\mathit{strong}}\ , \qquad  \coker \FtwoI^d_{\mathit{weak}} = \bigoplus_{k=0}^d \binom{d}{k} \coker \FtwoI^{d-k}_{\mathit{strong}}.
\end{equation}
This corollary makes the statement that weak phases are built from strong phases of lower dimension precise within our framework.
There are two physical consequences from this result.
First, if the $k$th strong phase is robust to interactions, then so is $k$th component of the weak phase, and vice versa. 
Similarly, all interaction-enabled weak phases arise from interaction-enabled strong phases in lower dimensions---in our model, there are no interaction-enabled phases that do not arise from lower-dimensional phenomena. We give more examples in \S\ref{example_section}.

\begin{example}\label{weak_F2I_AII_ex}
    Freed-Hopkins~\cite[Corollary 9.93]{freed_reflection_2021} calculated that in class AII, the strong free-to-interacting map is always injective in low degrees, including up to spatial dimension 3. From \cref{F2I_james_splitting}, we conclude that weak phases of translation-invariant class AII insulators in dimensions up to three are always robust to interactions.
    
    In $d=3$ in particular, the QSH phases associated to the three planar surfaces of the insulator are robust to interactions. 
    Meanwhile, there are two interaction-enabled phases associated to the top-dimensional cell, coming from 
    $\coker\FtwoI^3_{\mathit{strong}}$, as discussed in \cref{strong_F2I_AII_ex}. There is also an interaction-enabled phase coming from the zero cell, encoded in $\coker \FtwoI^0_{\mathit{strong}}$; see \cite{wang_classification_2014} for a physics interpretation of this phase.
\end{example}

\begin{remark}
Weak phases are protected by discrete translation symmetries. However, their free and interacting classifications can also be used to study situations in which a certain form of crystalline disorder called a dislocation is present. Dislocations are localized disruptions in the crystalline order that can host topologically protected modes---for example, three-dimensional TIs can host one-dimensional helical modes \cite{ran_one-dimensional_2009}. In \cite{ran_weak_2010}, Ran developed a criterion for when these protected modes could occur: if $\vec{B}$ is the Burgers vector of the dislocation, and $\vec{M}$ is a vector of $(d-1)$-dimensional indices, then helical modes can exist if $\vec{B}\cdot\vec{M}$ takes the possible nonzero value.

Our framework can generalize this condition to the interacting setting and to other symmetry types. Consider the group $\widetilde\mho_H^{d+2}(S^{1}\vee\dotsb\vee S^{1})$ classifying codimension-one weak indices for systems with symmetry type $H$. By the suspension isomorphism and wedge axiom, this group is isomorphic to $(\mho^{d+1}_H)^{\oplus d}$, so we may consider its elements to be $d$-vectors $\vec{M}$ of $d$ spacetime-dimensional invertible field theories. We may still consider the Burgers vector $\vec{B}$ to be a vector in the cubic lattice $\Z^d$. Then the generalized dislocation pairing $\vec{B}\cdot \vec{M}$ is valued in $\mho^{d+1}_H$.
For example, the three-dimensional weak topological insulator and the helical modes condition of \cite{ran_weak_2010} concerns the pairing of $\vec{M}\in (\mho^4_{\Pin^{\tilde c+}})^{\oplus 3} \cong (\Z_2)^3 \subset \mho^5_{\Pin^{\tilde c+}}$ with a vector $\vec{B}\in \Z^3$, which takes a binary value in $\mho^4_{\Pin^{\tilde c+}}\cong \Z_2$.
\end{remark}

\section{Results for 
the tenfold way}\label{example_section}
    
In this section, we apply \cref{F2I_james_splitting} to homotopically compute the 
free-to-interacting map of \cref{the_weak_F2I_ansatz} 
in spatial dimensions $1$, $2$, and $3$ for all symmetry types in the tenfold way. 
Admitting the low-energy IFT ansatz,
our computation yields the groups of free and interacting SPT phases for weak topological insulators and superconductors in these dimensions and identifies unstable\footnote{i.e.\ unstable to interactions} free phases and intrinsically interacting phases.

For illustrative purposes, we will discuss Class AII in detail. 
Class AII includes some of the first weak phases studied in the literature, the weak topological insulators (WTIs) of \cite{fu_topological_2007-1} and \cite{fu_topological_2007-2}. We focus on dimension $3+1$. As reviewed in \cref{AII_d3_Ktheory_section}, free phases of band insulators are given by the group $\KR^{-4}(\overline{\mathbb{T}}{}^3) \cong \Z_2 \oplus \Z_2^3 \oplus \Z$. Under T-duality, this group is isomorphic to the real $\KO$-theory group $\KO^{-1}(\mathbf{T}^3)$ on the spatial torus. Using {the binomial formula} (\cref{lemma:jamessplitting}) we obtain an alternative computation of this group as
\begin{align}\label{KO_AII_calc}
    \KO^{-1}(\mathbf{T}^3) &\cong \textcolor{ceruleanblue}{\KO^{-1}(\pt)} \oplus 3\KO^{-2}(\pt) \oplus 3\KO^{-3}(\pt) \oplus \textcolor{BrickRed}{\KO^{-4}(\pt)} \\
    &= \textcolor{ceruleanblue}{\Z_2}\oplus (\Z_2)^3 \oplus \textcolor{BrickRed}{\Z}.
\end{align}
Here \textcolor{ceruleanblue}{blue} summands are the strong phases, regarded as a subgroup of the group of weak phases (see \S\ref{ss:compare}). The \textcolor{BrickRed}{red} summands are captured by the invariant that counts the number of valence bands. This is generally not an interesting invariant and is often excluded in the physics literature; see \cref{rem:numberofbands} for details. With respect to the binomial formula, strong phases (\textcolor{ceruleanblue}{blue} summands) correspond to the bottom cell of the spatial torus, and the valence-band-counting invariant (\textcolor{BrickRed}{red} summands) corresponds to the top cell.\footnote{As T-duality incorporates Poincaré duality, the description is opposite for the Brillouin torus: the strong phases correspond to the top cell of $\overline{\mathbb T}{}^d$, and the valence-band-counting invariant corresponds to the bottom cell, see \S\ref{ss:compare}.}

In~\eqref{KO_AII_calc}, the strong phase in the $\textcolor{ceruleanblue}{\Z_2}$ summand $\textcolor{ceruleanblue}{\KO^{-1}(\pt)}$ is detected by the Fu-Kane-Mele invariant \cite{fu_topological_2007-2}, while the $\textcolor{BrickRed}{\Z}$ summand counts the number of valence bands.
The remaining $(\Z_2)^3$ coming from $3\KO^{-2}(\pt)$ comprises the weak phases, which may be viewed as quantum spin Hall (QSH) phases localized to two-dimensional surfaces of the three-dimensional material.

We compute the interacting classification using the binomial formula as well. 
We have 
\begin{equation}
\begin{aligned}
    \mho_{\Pin^{\tilde{c}+}}^{5}(\mathbf{T}^3) &\cong \textcolor{BrickRed}{\mho_{\Pin^{\tilde{c}+}}^{5}(\pt)} \oplus 3\mho_{\Pin^{\tilde{c}+}}^{4}(\pt) \oplus 3 \mho_{\Pin^{\tilde{c}+}}^{3}(\pt) \oplus \textcolor{ceruleanblue}{\mho_{\Pin^{\tilde{c}+}}^{2}(\pt)} \\
    &\cong \textcolor{BrickRed}{\Z}\oplus (\Z_2)^3 \oplus \textcolor{ceruleanblue}{(\Z_2)^3}.
\end{aligned}
\end{equation}
Once again, the colors illustrate which phases come from the top and bottom cells of $\mathbf T^d$. The first triplet of $\Z_2$'s comes from the interacting weak phases, while the second triplet of $\Z_2$'s comes from the interacting strong phases in 2d.
That the weak $(\Z_2)^3$ injects into $\mho_{\Pin^{\tilde{c}+}}^{5}(\mathbf{T}^3)$ corroborates the expectation that these weak phases are stable under interactions \cite[\S III.A]{zou_bulk_2018}, \cite{liu_half_2012}.

In $d=3$, there is also a $(\Z_2)^2$ classification of interaction-enabled phases. These phases are all strong; i.e.\ they arise from $\mho_{\Pin^{\tilde c+}}^5$ applied to the bottom cell of the spatial torus. 
These interaction-enabled phases were originally found in the physics literature in \cite{wang_classification_2014} and connected to bordism theory in \cite{Met15}.

\Cref{tab:weakTQ} includes the classification for all three relevant dimensions.

\begin{corollary}[Symmetry class AII, $s = -2$]
\label{tab:weakTQ}
The free-to-interacting map for the groups of weak phases in Altland-Zirnbauer type AII is:
\begin{equation}
\begin{gathered}
\begin{tikzcd}[ampersand replacement = {\&}, row sep=0.2cm, column sep=0.4cm,
    execute at end picture={
        \foreach \y in {1.45, -1.65} {
            \draw[thick] (-5.3, \y) -- (5.3, \y);
        }
        \draw (-5.3, 0.85) -- (5.3, 0.85);
    }
]
	d \& {\ker(\FtwoI)} \& {\KO^{d-4}(\mathbf T^d)} \& {\mho^{d+2}_{\Pin^{\tilde c+}}(\mathbf T^d)} \& {\coker(\FtwoI)} \\
	1 \& 0 \& \textcolor{BrickRed}{\Z} \& \textcolor{BrickRed}{\Z} \& 0 \\
	2 \& 0 \& {\textcolor{BrickRed}{\Z}\oplus \textcolor{ceruleanblue}{\Z_2}} \& {\textcolor{BrickRed}{\Z} \oplus \textcolor{ceruleanblue}{\Z_2}} \& 0 \\
	3 \& 0 \& {\textcolor{BrickRed}{\Z} \oplus \Z_2^3 \oplus \textcolor{ceruleanblue}{\Z_2}} \& {\textcolor{BrickRed}{\Z} \oplus \Z_2^3 \oplus \textcolor{ceruleanblue}{\Z_2^3}} \& {\textcolor{ceruleanblue}{\Z_2^2}}
	\arrow[from=1-2, to=1-3]
	\arrow["\FtwoI", from=1-3, to=1-4]
	\arrow[from=1-4, to=1-5]
\end{tikzcd}\end{gathered}
\end{equation}
\end{corollary}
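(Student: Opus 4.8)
The plan is to collapse the whole computation onto the strong free-to-interacting maps in spatial dimensions $0,1,2,3$ via the James splitting, and then quote (or reprove) the Freed--Hopkins computations for class AII, including the dimension-$3$ case treated in Appendix~\ref{app:twABS}. First I would note that the T-duality isomorphism appearing in \cref{the_weak_F2I_ansatz} is an isomorphism and so contributes nothing to kernels, cokernels, source, or target; thus $\FtwoI^d_{\mathit{weak}}$ may be identified with $\FtwoI_{-2}$ evaluated on the spatial torus, $\KO^{d-4}(\mathbf T^d)\to\mho^{d+2}_{\Pin^{\tilde c+}}(\mathbf T^d)$. Applying \cref{F2I_james_splitting} gives
\[
\FtwoI^d_{\mathit{weak}}\;\cong\;\bigoplus_{k=0}^d\binom{d}{k}\,\FtwoI^{\,d-k}_{\mathit{strong}},\qquad
\FtwoI^{\,j}_{\mathit{strong}}\colon \KO^{j-4}(\pt)\longrightarrow \mho^{j+2}_{\Pin^{\tilde c+}}(\pt),
\]
so that source, target, kernel, and cokernel of $\FtwoI^d_{\mathit{weak}}$ are all the corresponding binomial direct sums of those of $\FtwoI^{\,j}_{\mathit{strong}}$, $j=d-k$. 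It therefore suffices to understand $\FtwoI^{\,j}_{\mathit{strong}}$ for $j=0,1,2,3$.

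\textbf{Assembling the strong data.} From Bott periodicity, $\KO^{j-4}(\pt)=\KO_{4-j}(\pt)$ equals $\Z,\,0,\,\Z_2,\,\Z_2$ for $j=0,1,2,3$. On the interacting side I would compute $\mho^{j+2}_{\Pin^{\tilde c+}}(\pt)$ from the low-degree $\Pin^{\tilde c+}$-bordism groups through the Anderson-duality short exact sequence~\eqref{IZproperty}; the relevant values are $\mho^2_{\Pin^{\tilde c+}}(\pt)\cong\Z$, $\mho^3_{\Pin^{\tilde c+}}(\pt)=0$, $\mho^4_{\Pin^{\tilde c+}}(\pt)\cong\Z_2$, and $\mho^5_{\Pin^{\tilde c+}}(\pt)\cong(\Z_2)^3$ (the last is the group appearing in \cref{strong_F2I_AII_ex}; all four are in Freed--Hopkins~\cite[\S 9.3]{freed_reflection_2021}). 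Feeding these into \cref{lemma:jamessplitting} for $\mathbf T^d$ with $d=1,2,3$ produces exactly the two middle columns of the table (with the bottom-cell summand the strong phase and the top-cell summand the band-counting invariant of \cref{rem:numberofbands}).

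\textbf{The strong maps, kernels, and cokernels.} For $j=0,1,2$ the maps $\FtwoI^{\,j}_{\mathit{strong}}$ are tabulated in \cite[\S 9.3]{freed_reflection_2021}: $\FtwoI^{\,0}_{\mathit{strong}}\colon\Z\to\Z$ and $\FtwoI^{\,2}_{\mathit{strong}}\colon\Z_2\to\Z_2$ are isomorphisms and $\FtwoI^{\,1}_{\mathit{strong}}$ is $0\to0$, so all three have trivial kernel and cokernel. For $j=3$, \cite[Corollary 9.93]{freed_reflection_2021} shows $\FtwoI^{\,3}_{\mathit{strong}}\colon\Z_2\to(\Z_2)^3$ is injective, so its cokernel is $(\Z_2)^2$; Appendix~\ref{app:twABS} identifies this cokernel with the two ``gravitational'' theories $w_1^4$ and $w_2^2$ of \cref{strong_F2I_AII_ex}. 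Consequently every $\FtwoI^{\,j}_{\mathit{strong}}$ with $j\le 3$ is injective, so $\ker\FtwoI^d_{\mathit{weak}}=\bigoplus_k\binom{d}{k}\ker\FtwoI^{\,d-k}_{\mathit{strong}}=0$ for $d=1,2,3$; and the only nonzero strong cokernel in this range is $\coker\FtwoI^{\,3}_{\mathit{strong}}=(\Z_2)^2$, which enters $\FtwoI^d_{\mathit{weak}}$ precisely when $d-k=3$, i.e.\ never for $d=1,2$ and once (with $k=0$, multiplicity $\binom30=1$) for $d=3$. This yields $\coker\FtwoI^1_{\mathit{weak}}=\coker\FtwoI^2_{\mathit{weak}}=0$ and $\coker\FtwoI^3_{\mathit{weak}}=(\Z_2)^2$, completing the table.

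\emph{Main obstacle.} Everything downstream of \cref{F2I_james_splitting} is bookkeeping; the real content is the $d=3$ strong input---establishing $\mho^5_{\Pin^{\tilde c+}}(\pt)\cong(\Z_2)^3$ (a $\Pin^{\tilde c+}$-bordism calculation) and pinning down the image of the Fu--Kane--Mele generator precisely enough to see the cokernel is exactly $(\Z_2)^2$ rather than larger. This is the step the paper isolates into Appendix~\ref{app:twABS} via a Smith-homomorphism argument, and I would rely on it here rather than attempt a direct index-theoretic evaluation of the twisted Dirac operator.
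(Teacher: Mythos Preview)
Your proposal is correct and follows essentially the same approach as the paper: the corollary is not given a formal proof in the text but is presented as a summary of the preceding discussion, which applies \cref{F2I_james_splitting} to reduce the weak map to the direct sum of strong maps $\FtwoI^{\,j}_{\mathit{strong}}$ for $0\le j\le d$, then reads off the strong data from Freed--Hopkins~\cite[\S 9.3, Corollary 9.93]{freed_reflection_2021} together with the $j=3$ computation of Appendix~\ref{app:twABS}. Your bookkeeping of the $\KO$- and $\mho_{\Pin^{\tilde c+}}$-groups and of the resulting kernels and cokernels matches the paper's exactly.
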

\begin{litnote}
The classification of these free weak phases has been studied from many perspectives in the literature: see, for example, De Nittis-Gomi~\cite{DNG15, DNG18b, DNG18a, DG23a, DNG23}, Fiorenza-Monaco-Panati~\cite{FMP16}, and Kaufmann-Li-Wehefritz-Kaufmann \cite{KLW16, KLWK16, kaufmann2016stiefelwhitney, KLWK24}.
Pin\textsuperscript{$\tilde c+$} bordism groups in these dimensions were first computed by Freed-Hopkins~\cite[Theorem 9.87]{freed_reflection_2021}.
\end{litnote}

We continue with the seven other real symmetry types and the two complex symmetry types. 
\begin{corollary}[Symmetry class D, $s = 0$]
\label{tab:weak_real_s=0}
The free-to-interacting map for the groups of weak phases in Altland-Zirnbauer type D is:
\begin{equation}
\begin{gathered}
\begin{tikzcd}[ampersand replacement = {\&}, row sep=0.2cm, column sep=0.4cm,
    execute at end picture={
        \foreach \y in {1.55, -1.7} {
            \draw[thick] (-5.4, \y) -- (5.4, \y);
        }
        \draw (-5.4, 0.9) -- (5.4, 0.9);
    }
]
	d \& {\ker(\FtwoI)} \& {\KO^{d-2}(\mathbf T^d)} \& {\mho^{d+2}_{\Spin}(\mathbf T^d)} \& {\coker(\FtwoI)} \\
	1 \& 0 \& \textcolor{ceruleanblue}{\Z_2} \oplus\textcolor{BrickRed}{\Z_2} \& \textcolor{ceruleanblue}{\Z_2} \oplus\textcolor{BrickRed}{\Z_2} \& 0 \\
	2 \& 0 \& \textcolor{ceruleanblue}{\Z}\oplus \textcolor{BrickRed}{\Z_2} \oplus \Z_2^2 \& \textcolor{ceruleanblue}{\Z}\oplus \textcolor{BrickRed}{\Z_2} \oplus \Z_2^2 \& 0 \\
	3 \& 0 \& \Z^3 \oplus\textcolor{BrickRed}{\Z_2} \oplus  \Z_2^3 \& \Z^3 \oplus \textcolor{BrickRed}{\Z_2} \oplus \Z_2^3 \& 0
	\arrow[from=1-2, to=1-3]
	\arrow["\FtwoI", from=1-3, to=1-4]
	\arrow[from=1-4, to=1-5]
\end{tikzcd}\end{gathered}
\end{equation}
\end{corollary}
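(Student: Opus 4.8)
The plan is to reduce the whole table to the \emph{strong} free-to-interacting maps for class D and then evaluate those. By \cref{F2I_james_splitting}, the weak map decomposes as $\FtwoI^d_{\mathit{weak}}=\bigoplus_{k=0}^d\binom{d}{k}\FtwoI^{d-k}_{\mathit{strong}}$, where, since $H(0)=\Spin$, the $j$-dimensional strong map of \cref{conjecture:freedhopkinsF2I} is a homomorphism $\FtwoI^j_{\mathit{strong}}\colon\KO^{j-2}(\pt)\to\mho^{j+2}_{\Spin}(\pt)$. As only $d\le 3$ occurs, it suffices to determine $\FtwoI^j_{\mathit{strong}}$ for $j=0,1,2,3$; the source and target of $\FtwoI^d_{\mathit{weak}}$ are then read off from the James splitting (\cref{lemma:jamessplitting}) as $\KO^{d-2}(\mathbf T^d)\cong\bigoplus_{k=0}^d\binom{d}{k}\KO^{d-2-k}(\pt)$ and $\mho^{d+2}_{\Spin}(\mathbf T^d)\cong\bigoplus_{k=0}^d\binom{d}{k}\mho^{d+2-k}_{\Spin}(\pt)$, with the $k=0$ summand the strong phase and the $k=d$ summand the band-counting invariant of \cref{rem:numberofbands} (see \S\ref{ss:compare}).

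Second, I would compute the groups. Bott periodicity gives $\KO^{j-2}(\pt)=\Z_2,\Z_2,\Z,0$ for $j=0,1,2,3$. The low-degree spin bordism groups are $\Omega^{\Spin}_n\cong\Z,\Z_2,\Z_2,0,\Z,0$ for $n=0,\dots,5$; substituting into the short exact sequence defining the Anderson dual (\cref{anderson_dual}), where in each relevant degree one of the two outer terms vanishes, yields $\mho^{2}_{\Spin}(\pt)\cong\Z_2$, $\mho^{3}_{\Spin}(\pt)\cong\Z_2$, $\mho^{4}_{\Spin}(\pt)\cong\Z$, and $\mho^{5}_{\Spin}(\pt)=0$. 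Then I would identify the maps themselves: since $H(0)=\Spin$, unwinding \cref{conjecture:freedhopkinsF2I} and the Anderson self-duality of $\KO$ (\cref{self_dual_KO}) identifies $\FtwoI^j_{\mathit{strong}}$ with the Anderson dual of the classical Atiyah--Bott--Shapiro map $\ABS_0\colon\Omega^{\Spin}_{j+2}\to\KO_{j+2}$ of \eqref{ABS}. Because $\ABS_0$ is an isomorphism on homotopy groups in the range of degrees relevant here \cite{ABP67}, its Anderson dual is too, so $\FtwoI^j_{\mathit{strong}}$ is an isomorphism for $0\le j\le 3$. (One may instead cite Freed--Hopkins' explicit computation of the strong class D maps, \cite[\S 9.3]{freed_reflection_2021}, which shows surjectivity; for groups of the shapes above, surjectivity already forces an isomorphism.)

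Finally I would assemble. For each $d\in\{1,2,3\}$, $\FtwoI^d_{\mathit{weak}}$ is a direct sum of isomorphisms---one copy of $\FtwoI^d_{\mathit{strong}}$ (the $0$-cell/strong summand), $\binom{d}{1}$ copies of $\FtwoI^{d-1}_{\mathit{strong}}$, down to one copy of $\FtwoI^0_{\mathit{strong}}$ (the top cell)---hence an isomorphism itself, so $\ker\FtwoI^d_{\mathit{weak}}=0$ and $\coker\FtwoI^d_{\mathit{weak}}=0$. Plugging the group values above into the two James-splitting decompositions reproduces the three rows of the table, with the colouring of summands as in \S\ref{ss:compare}. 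The one point I expect to require care is the identification of the strong class D maps as isomorphisms in the second step---it rests either on the Atiyah--Bott--Shapiro map being an isomorphism in low degrees or on the Freed--Hopkins calculation---while the rest is just Bott periodicity, the Anderson-duality short exact sequence, and binomial coefficients.
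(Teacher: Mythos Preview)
Your proposal is correct and follows precisely the approach the paper uses (and illustrates in detail for class AII at the start of \S\ref{example_section}): reduce via \cref{F2I_james_splitting} to the strong maps $\FtwoI^j_{\mathit{strong}}$ for $0\le j\le 3$, compute domain and codomain from Bott periodicity and Milnor's spin bordism groups together with the Anderson-dual short exact sequence, and then invoke that $\ABS_0$ is an isomorphism in this range to conclude each strong map, and hence each weak map, is an isomorphism. The only refinement worth noting is that the paper leaves the per-class details implicit and simply records the outcome as a table with a literature note, so your write-up is actually more explicit than what appears there.
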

\begin{litnote}
Hughes~\cite[6m29s]{hughes_weak_2015} gives a classification of the weak free phases in these dimensions, modulo the band-counting $\textcolor{BrickRed}{\Z_2}$ subgroup. Freed-Hopkins~\cite[Example 2.3]{freed_invertible_2019} study class D phases on a torus in dimension $2$, and observe that the free-to-interacting map is an isomorphism in that dimension; this example is also studied by Ran~\cite{ran_weak_2010} and Rao-Sodeman~\cite[\S IV.C]{RS21}, whose results also agree with ours.
The spin bordism groups used in \cref{tab:weak_real_s=0} were first computed by Milnor~\cite{Mil63}.
\end{litnote}
\begin{corollary}[Symmetry class BDI, $s = 1$]
\label{tab:weak_real_s=1}
The free-to-interacting map for the groups of weak phases in Altland-Zirnbauer type BDI is:
\begin{equation}
\begin{gathered}
\begin{tikzcd}[ampersand replacement = {\&}, row sep=0.2cm, column sep=0.4cm,
    execute at end picture={
        \foreach \y in {1.55, -1.7} {
            \draw[thick] (-5.05, \y) -- (5.05, \y);
        }
        \draw (-5.05, 0.9) -- (5.05, 0.9);
    }
]
	d \& {\ker(\FtwoI)} \& {\KO^{d-1}(\mathbf T^d)} \& {\mho^{d+2}_{\Pin^-}(\mathbf T^d)} \& {\coker(\FtwoI)} \\
	1 \& \textcolor{ceruleanblue}{8\Z} \& \textcolor{ceruleanblue}{\Z} \oplus \textcolor{BrickRed}{\Z_2} \& \textcolor{ceruleanblue}{\Z_8} \oplus \textcolor{BrickRed}{\Z_2} \& 0 \\
	2 \& (8\Z)^2 \& \Z^2 \oplus \textcolor{BrickRed}{\Z_2} \& \Z_8^2 \oplus \textcolor{BrickRed}{\Z_2} \& 0 \\
	3 \& (8\Z)^3 \& \Z^3 \oplus \textcolor{BrickRed}{\Z_2} \& \Z_8^3\oplus \textcolor{BrickRed}{\Z_2} \& 0
	\arrow[from=1-2, to=1-3]
	\arrow["\FtwoI", from=1-3, to=1-4]
	\arrow[from=1-4, to=1-5]
\end{tikzcd}\end{gathered}
\end{equation}
\end{corollary}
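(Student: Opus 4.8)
The plan is to deduce this corollary directly from \cref{F2I_james_splitting}, supplying it with the known values of $\KO^*(\pt)$, of $\Pin^-$-bordism, and of Freed-Hopkins' \emph{strong} free-to-interacting map in low dimensions. By \cref{tenfold_table}, class BDI has $s = 1$ and $H(1) = \Pin^-$, so \cref{F2I_james_splitting} reads
\begin{equation*}
  \FtwoI^d_{\mathit{weak}} \;=\; \bigoplus_{k=0}^{d}\binom{d}{k}\,\FtwoI^{d-k}_{\mathit{strong}},
  \qquad\text{with}\qquad
  \FtwoI^{m}_{\mathit{strong}}\colon \KO^{m-1}(\pt)\longrightarrow \mho^{m+2}_{\Pin^-}(\pt);
\end{equation*}
since $d \le 3$, it suffices to identify $\FtwoI^{m}_{\mathit{strong}}$ for $0 \le m \le 3$ and reassemble, keeping track of which summand is which (strong $=$ $0$-cell, band-counting $=$ top cell) as in \S\ref{ss:compare}.

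First I would record the free side via Bott periodicity: $\KO^{-1}(\pt) \cong \Z_2$, $\KO^{0}(\pt) \cong \Z$, and $\KO^{1}(\pt) \cong \KO^{2}(\pt) \cong 0$. Hence only $m = 0$ and $m = 1$ have nonzero domain; the $m = 0$ piece is the $\Z_2$-valued band-counting invariant of \cref{rem:numberofbands} and the $m = 1$ piece is the genuine strong BDI phase. Next I would read off the interacting side from the classical low-degree $\Pin^-$-bordism groups (see~\cite{KT90}), $\Omega_0^{\Pin^-} \cong \Z_2$, $\Omega_1^{\Pin^-} \cong \Z_2$, $\Omega_2^{\Pin^-} \cong \Z_8$, and $\Omega_n^{\Pin^-} \cong 0$ for $n = 3,4,5$, together with the Anderson-duality short exact sequence~\eqref{IZproperty}: this gives $\mho^{2}_{\Pin^-}(\pt) \cong \Z_2$, $\mho^{3}_{\Pin^-}(\pt) \cong \Z_8$, and $\mho^{4}_{\Pin^-}(\pt) \cong \mho^{5}_{\Pin^-}(\pt) \cong 0$. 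Finally I would import the maps themselves from Freed-Hopkins~\cite{freed_reflection_2021}: $\FtwoI^{1}_{\mathit{strong}}\colon \Z \to \Z_8$ is the Fidkowski-Kitaev reduction-mod-$8$ map~\cite{fidkowski_effects_2010}, surjective with kernel $8\Z$; $\FtwoI^{0}_{\mathit{strong}}\colon \Z_2 \to \Z_2$ is onto, hence an isomorphism; and $\FtwoI^{2}_{\mathit{strong}}$, $\FtwoI^{3}_{\mathit{strong}}$ are the zero map out of the zero group.

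Feeding this into the direct-sum formula produces, for $d \in \{1,2,3\}$, the stated domain $\KO^{d-1}(\mathbf T^d) \cong \bigoplus_{k} \binom{d}{k}\KO^{d-1-k}(\pt)$ and codomain $\mho^{d+2}_{\Pin^-}(\mathbf T^d) \cong \bigoplus_{k}\binom{d}{k}\mho^{d+2-k}_{\Pin^-}(\pt)$, with $\ker \FtwoI^d_{\mathit{weak}} \cong (8\Z)^{d}$ coming from the $\binom{d}{d-1} = d$ copies of $\FtwoI^{1}_{\mathit{strong}}$ and $\coker \FtwoI^d_{\mathit{weak}} = 0$ because every $\FtwoI^{m}_{\mathit{strong}}$ with $m \le 3$ is surjective; the colour-coding in the table follows since, by \cref{lemma:james-and-t} and \S\ref{ss:compare}, T-duality carries the strong ($0$-cell) summand to $\FtwoI^d_{\mathit{strong}}$ and the band-counting (top-cell) summand to $\FtwoI^0_{\mathit{strong}}$.

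I expect the only real obstacle to be bookkeeping: correctly pairing the James-splitting summands with the strong maps in the right dimension, getting the low-degree $\Pin^-$-bordism groups right (in particular that the band-counting contribution here is $\Z_2$, not $\Z$), and tracking the strong-versus-band-counting labelling so the table colours come out right. The one genuinely substantive input, the surjectivity-with-kernel-$8\Z$ of the Fidkowski-Kitaev map, is taken wholesale from~\cite{fidkowski_effects_2010,freed_reflection_2021}.
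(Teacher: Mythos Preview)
Your proposal is correct and follows exactly the approach the paper uses: applying \cref{F2I_james_splitting} together with the known low-degree values of $\KO^*(\pt)$, $\Omega_*^{\Pin^-}$, and Freed--Hopkins' strong free-to-interacting maps, then reassembling via the James splitting. One small imprecision: you describe the $m=1$ piece as ``the genuine strong BDI phase,'' but this is only literally the strong (bottom-cell, \textcolor{ceruleanblue}{blue}) summand when $d=1$; for $d=2,3$ the $m=1$ copies are weak (uncolored) contributions, while the strong summand $\FtwoI^d_{\mathit{strong}}$ vanishes --- your computation and your final colour-coding paragraph already get this right, so the issue is purely expository.
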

\begin{litnote}
\Cref{tab:weak_real_s=1} is in agreement with work of Xiao-Kawabata-Luo-Ohtsuki-Shindou~\cite{xiao_anisotropic_2023}, who study 3d class BDI weak phases and conclude that the three $\Z$-valued invariants of weak topological phases remain nontrivial in the presence of interactions.

The \pinm bordism groups used in this computation were first computed by Anderson-Brown-Peterson~\cite{ABP69}. The Majorana chain with its time-reversal symmetry is a $1$-dimensional strong phase in class BDI, generating the $\textcolor{ceruleanblue}{\Z}$ summand of free phases and the $\textcolor{ceruleanblue}{\Z_8}$ summand of interacting phases in $d = 1$~\cite{Kit01, fidkowski_effects_2010, FK11, turner_topological_2011}; this phase thus defines higher-dimensional weak phases and so contributes to the kernel of the free-to-interacting map in all higher degrees. 
We predict no interaction-enabled phases in dimensions $6$ and below.
\end{litnote}
\begin{corollary}[Symmetry class AI, $s = 2$]
\label{tab:weak_real_s=2}
The free-to-interacting map for the groups of weak phases in Altland-Zirnbauer type AI is:
\begin{equation}
\begin{gathered}
\begin{tikzcd}[ampersand replacement = {\&}, row sep=0.2cm, column sep=0.4cm,
    execute at end picture={
        \foreach \y in {1.5, -1.7} {
            \draw[thick] (-4.9, \y) -- (4.9, \y);
        }
        \draw (-4.9, 0.9) -- (4.9, 0.9);
    }
]
	d \& {\ker(\FtwoI)} \& {\KO^{d}(\mathbf T^d)} \& {\mho^{d+2}_{\Pin^{\tilde c-}}(\mathbf T^d)} \& {\coker(\FtwoI)} \\
	1 \& 0 \& \textcolor{BrickRed}{\Z}  \& \textcolor{BrickRed}{\Z}\oplus \Z_2  \& \Z_2 \\
	2 \& 0 \& \textcolor{BrickRed}{\Z}  \& \textcolor{BrickRed}{\Z}\oplus \Z_2^2 \& \Z_2^2 \\
	3 \& 0 \& \textcolor{BrickRed}{\Z} \& \textcolor{BrickRed}{\Z} \oplus \Z_2^4 \& \Z_2^4
	\arrow[from=1-2, to=1-3]
	\arrow["\FtwoI", from=1-3, to=1-4]
	\arrow[from=1-4, to=1-5]
\end{tikzcd}\end{gathered}
\end{equation}
\end{corollary}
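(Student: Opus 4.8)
The plan is to reduce the table to the strong case via \cref{F2I_james_splitting}. Class AI is the symmetry type $s=2$, so $H(2)=\Pin^{\tilde c-}$ by \cref{tenfold_table} and $d+s-2=d$; thus \cref{F2I_james_splitting} identifies $\FtwoI^{d}_{\mathit{weak}}$ with $\bigoplus_{k=0}^{d}\binom{d}{k}\FtwoI^{d-k}_{\mathit{strong}}$, where $\FtwoI^{j}_{\mathit{strong}}\colon\KO^{j}(\pt)\to\mho^{j+2}_{\Pin^{\tilde c-}}(\pt)$ is the strong free-to-interacting map of \cref{conjecture:freedhopkinsF2I} in spatial dimension $j$. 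So I only need the groups $\KO^{j}(\pt)$ and $\mho^{j+2}_{\Pin^{\tilde c-}}(\pt)$ for $0\le j\le 3$ and the maps between them.

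The $\KO$ input is immediate: $\KO^{0}(\pt)\cong\Z$ and $\KO^{1}(\pt)\cong\KO^{2}(\pt)\cong\KO^{3}(\pt)\cong 0$, so by \cref{lemma:jamessplitting} the free column is $\KO^{d}(\mathbf T^d)\cong\bigoplus_{k=0}^{d}\binom{d}{k}\KO^{d-k}(\pt)\cong\textcolor{BrickRed}{\Z}$ for $d=1,2,3$, the $\Z$ living in the top cell as the band-counting invariant of \cref{rem:numberofbands} while the strong summand $\KO^{d}(\pt)$ vanishes. Hence the only strong map with nonzero source is $\FtwoI^{0}_{\mathit{strong}}\colon\KO^{0}(\pt)=\Z\to\mho^{2}_{\Pin^{\tilde c-}}(\pt)$; I would either quote Freed-Hopkins~\cite[\S 9.3]{freed_reflection_2021} or argue directly, along the lines of Appendix~\ref{app:twABS}, that the twisted map $\ABS_{2}$ sends a generating $\Pin^{\tilde c-}$-surface to a generator of the free part $\KO_{4}\cong\Z$, so that its Anderson dual is an isomorphism $\Z\xrightarrow{\ \sim\ }\Z$. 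The other strong maps $\FtwoI^{j}_{\mathit{strong}}$ ($j=1,2,3$) then vanish for lack of source, so every $\FtwoI^{d}_{\mathit{weak}}$ is injective, $\ker\FtwoI^{d}_{\mathit{weak}}=0$, and $\coker\FtwoI^{d}_{\mathit{weak}}\cong\bigoplus_{k=0}^{d-1}\binom{d}{k}\mho^{d+2-k}_{\Pin^{\tilde c-}}(\pt)$.

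The genuine work---and the step I expect to be the main obstacle---is the computation of the Anderson-dual \pinm-with-$\U(1)$-twist bordism groups $\mho^{m}_{\Pin^{\tilde c-}}(\pt)$ for $m=2,3,4,5$, equivalently of $\Omega^{\Pin^{\tilde c-}}_{n}$ for $n\le 5$. I would obtain these either by running the Adams spectral sequence for the Thom spectrum $MT\Pin^{\tilde c-}$, or, in the style of Appendix~\ref{app:twABS}, by using Smith long exact sequences to strip off the $\U(1)$-twist and reduce to the \pinm bordism groups of Anderson-Brown-Peterson~\cite{ABP69}; these groups are also implicit in Freed-Hopkins~\cite[\S 9.3]{freed_reflection_2021}. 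After applying \cref{anderson_dual} the expected output is $\mho^{2}_{\Pin^{\tilde c-}}(\pt)\cong\Z$, $\mho^{3}_{\Pin^{\tilde c-}}(\pt)\cong\Z_2$, $\mho^{4}_{\Pin^{\tilde c-}}(\pt)\cong 0$, and $\mho^{5}_{\Pin^{\tilde c-}}(\pt)\cong\Z_2$ (reflecting a free $\Z$ and a $\Z_2$ in $\Omega^{\Pin^{\tilde c-}}_{2}$, vanishing of $\Omega^{\Pin^{\tilde c-}}_{3}$, and a $\Z_2$ in $\Omega^{\Pin^{\tilde c-}}_{4}$); the delicate part is tracking free versus torsion summands through the Anderson short exact sequence, and it is prudent to cross-check the four numbers against all three rows of the table at once.

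Granting these groups, the table assembles mechanically: the interacting column is $\mho^{d+2}_{\Pin^{\tilde c-}}(\mathbf T^d)\cong\bigoplus_{k=0}^{d}\binom{d}{k}\mho^{d+2-k}_{\Pin^{\tilde c-}}(\pt)$, namely $\textcolor{BrickRed}{\Z}\oplus\Z_2$, $\textcolor{BrickRed}{\Z}\oplus\Z_2^{2}$, $\textcolor{BrickRed}{\Z}\oplus\Z_2^{4}$ for $d=1,2,3$; the kernel column is $0$; and the cokernel column is $\mho^{3}_{\Pin^{\tilde c-}}(\pt)\cong\Z_2$ for $d=1$, $\mho^{4}_{\Pin^{\tilde c-}}(\pt)\oplus 2\,\mho^{3}_{\Pin^{\tilde c-}}(\pt)\cong\Z_2^{2}$ for $d=2$, and $\mho^{5}_{\Pin^{\tilde c-}}(\pt)\oplus 3\,\mho^{4}_{\Pin^{\tilde c-}}(\pt)\oplus 3\,\mho^{3}_{\Pin^{\tilde c-}}(\pt)\cong\Z_2^{4}$ for $d=3$. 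This reproduces \cref{tab:weak_real_s=2}.
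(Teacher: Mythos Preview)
Your proposal is correct and follows essentially the same approach as the paper: apply \cref{F2I_james_splitting} to reduce to the strong maps $\FtwoI^{j}_{\mathit{strong}}$ for $0\le j\le 3$, observe that only $j=0$ has nonzero source, and plug in the $\Pin^{\tilde c-}$ bordism groups (the paper simply cites Freed--Hopkins~\cite[Theorem 9.87]{freed_reflection_2021} rather than recomputing them via Adams or Smith sequences, and cites~\cite[Corollary 9.95]{freed_reflection_2021} for the strong maps). Your values $\mho^{2}_{\Pin^{\tilde c-}}(\pt)\cong\Z$, $\mho^{3}\cong\Z_2$, $\mho^{4}\cong 0$, $\mho^{5}\cong\Z_2$ and the resulting assembly are all correct.
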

\begin{litnote}
The \pincm bordism groups used in this computation were computed by Freed-Hopkins~\cite[Theorem 9.87]{freed_reflection_2021}.
Like in \cref{tab:weak_real_s=1}, the interaction-enabled weak phases in dimensions $2$ and $3$ are a consequence of the interaction-enabled \emph{strong} phase in dimension $1$ in this class; this strong phase appears in Freed-Hopkins~\cite[Corollary 9.95]{freed_reflection_2021} (they use spacetime dimension, so call that phase $2$-dimensional). De Nittis-Gomi~\cite{DNG14, DNG16} classify the free weak phases in class AI using Real-equivariant vector bundles.
\end{litnote}
\begin{corollary}[Symmetry class CI, $s = 3$]
\label{tab:weak_real_s=3}
The free-to-interacting map for the groups of weak phases in Altland-Zirnbauer type CI is:
\begin{equation}
\begin{gathered}
\begin{tikzcd}[ampersand replacement = {\&}, row sep=0.2cm, column sep=0.4cm,
    execute at end picture={
        \foreach \y in {1.55, -1.7} {
            \draw[thick] (-5.25, \y) -- (5.25, \y);
        }
        \draw (-5.25, 0.9) -- (5.25, 0.9);
    }
]
	d \& {\ker(\FtwoI)} \& {\KO^{d+1}(\mathbf T^d)} \& {\mho^{d+2}_{\Pin^{h+}}(\mathbf T^d)} \& {\coker(\FtwoI)} \\
	1 \& 0 \& 0  \& \textcolor{ceruleanblue}{\Z_2}  \& \textcolor{ceruleanblue}{\Z_2} \\
	2 \& 0 \& 0  \&  \Z_2^2 \& \Z_2^2 \\
	3 \& \textcolor{ceruleanblue}{4\Z} \& \textcolor{ceruleanblue}{\Z} \& \textcolor{ceruleanblue}{\Z_4\oplus  \Z_2}\oplus \Z_2^3 \& \textcolor{ceruleanblue}{\Z_2}\oplus \Z_2^3
	\arrow[from=1-2, to=1-3]
	\arrow["\FtwoI", from=1-3, to=1-4]
	\arrow[from=1-4, to=1-5]
\end{tikzcd}\end{gathered}
\end{equation}
\end{corollary}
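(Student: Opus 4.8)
The proof is a bookkeeping computation on top of \cref{F2I_james_splitting}. By \cref{tenfold_table}, class CI has $s=3$ (so $s-2=1$) and tangential structure group $H(3)=\Pin^{h+}=\Pin^+\times_{\set{\pm1}}\SU(2)$, so \cref{F2I_james_splitting} reduces the statement to the strong free-to-interacting maps $\FtwoI^{j}_{\mathit{strong}}\colon\KO^{j+1}(\pt)\to\mho^{j+2}_{\Pin^{h+}}(\pt)$ for $j=0,1,2,3$, assembled via $\FtwoI^{d}_{\mathit{weak}}=\bigoplus_{k=0}^{d}\binom{d}{k}\FtwoI^{d-k}_{\mathit{strong}}$, together with the James splittings $\KO^{d+1}(\mathbf T^d)\cong\bigoplus_{k}\binom{d}{k}\KO^{d+1-k}(\pt)$ and $\mho^{d+2}_{\Pin^{h+}}(\mathbf T^d)\cong\bigoplus_{k}\binom{d}{k}\mho^{d+2-k}_{\Pin^{h+}}(\pt)$ of \cref{lemma:jamessplitting}; the $k=0$ summand is the strong phase and the rest are weak. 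So everything comes down to the four strong maps.

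The free side is immediate from Bott periodicity: $\KO^{1}(\pt)=\KO^{2}(\pt)=\KO^{3}(\pt)=0$ and $\KO^{4}(\pt)\cong\Z$. Thus $\FtwoI^{0}_{\mathit{strong}},\FtwoI^{1}_{\mathit{strong}},\FtwoI^{2}_{\mathit{strong}}$ are forced to be zero (their cokernels being the full groups $\mho^{2}_{\Pin^{h+}}(\pt)$, $\mho^{3}_{\Pin^{h+}}(\pt)$, $\mho^{4}_{\Pin^{h+}}(\pt)$), and only $\FtwoI^{3}_{\mathit{strong}}\colon\Z\to\mho^{5}_{\Pin^{h+}}(\pt)$ is substantive. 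For the interacting side I would use the low-degree $\Pin^{h+}$-bordism groups computed by Freed-Hopkins~\cite{freed_reflection_2021} (or by a direct Adams-spectral-sequence computation, splitting off $\ko$-type summands using the $\SU(2)$-factor as in their treatment of $\Pin^{h\pm}$), namely $\Omega^{\Pin^{h+}}_{2}\cong\Z_2$, $\Omega^{\Pin^{h+}}_{3}=0$, and $\Omega^{\Pin^{h+}}_{4}\cong\Z_4\oplus\Z_2$. Feeding these through the Anderson-duality exact sequence of \cref{anderson_dual} gives $\mho^{2}_{\Pin^{h+}}(\pt)=0$, $\mho^{3}_{\Pin^{h+}}(\pt)\cong\Z_2$, $\mho^{4}_{\Pin^{h+}}(\pt)=0$, and $\mho^{5}_{\Pin^{h+}}(\pt)\cong\Z_4\oplus\Z_2$.

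The one nontrivial ingredient is the map $\FtwoI^{3}_{\mathit{strong}}\colon\Z\to\Z_4\oplus\Z_2$. By \cref{conjecture:freedhopkinsF2I} it is built from the twisted Atiyah-Bott-Shapiro orientation $\ABS_{3}\colon\Omega^{\Pin^{h+}}_{n}\to\KO_{n+3}$ of \eqref{real_twisted_ABS_eqn} by Anderson duality and the Anderson self-duality of $\KO$ (\cref{self_dual_KO}); dually, it amounts to the pairing of $\Omega^{\Pin^{h+}}_{4}\cong\Z_4\oplus\Z_2$ against $\KO$ via the appropriate $\Cl$-twisted Dirac index. I would either quote Freed-Hopkins' computation of the strong class-CI free-to-interacting map~\cite{freed_reflection_2021}, whose image is the index-two subgroup $\Z_4$ and whose cokernel $\Z_2$ is exactly the interaction-enabled class-CI superconductor of~\cite{wang_interacting_2014}, or else evaluate the twisted Dirac index on a set of manifold generators of $\Omega^{\Pin^{h+}}_{4}$, in the spirit of Appendix~\ref{app:twABS}. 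Either way, $\FtwoI^{3}_{\mathit{strong}}$ sends a generator to an element of order exactly $4$, so it is injective with $\ker=4\Z$, image a $\Z_4$-summand, and $\coker\cong\Z_2$.

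It then remains to assemble the rows. For $d=1$ the source $\binom{1}{0}\KO^{2}(\pt)\oplus\binom{1}{1}\KO^{1}(\pt)$ vanishes, so $\ker=0$ and $\coker=\mho^{3}_{\Pin^{h+}}(\pt)\cong\Z_2$ (the strong interaction-enabled phase). For $d=2$ the source again vanishes, so $\ker=0$ and $\coker\cong\binom{2}{1}\mho^{3}_{\Pin^{h+}}(\pt)\cong\Z_2^2$ (weak phases descending from the $d=1$ one). For $d=3$ the source is $\binom{3}{0}\KO^{4}(\pt)\cong\Z$ and the target is $\binom{3}{0}\mho^{5}_{\Pin^{h+}}(\pt)\oplus\binom{3}{2}\mho^{3}_{\Pin^{h+}}(\pt)\cong(\Z_4\oplus\Z_2)\oplus\Z_2^3$ (the other summands vanishing), the map being $\FtwoI^{3}_{\mathit{strong}}$ on the first factor and zero on the rest, so $\ker\cong4\Z$ and $\coker\cong\Z_2\oplus\Z_2^3$. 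These are the three rows of the table. I expect the main obstacle to be the third paragraph: getting the $\Pin^{h+}$-bordism groups through degree $4$ and confirming that the free $d=3$ class $\Z$ maps onto a full $\Z_4$, which is precisely what forces the kernel $4\Z$ and the genuinely interaction-enabled $\Z_2$ in dimension $3$.
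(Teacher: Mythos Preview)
Your proposal is correct and follows exactly the approach the paper takes: this corollary is obtained by applying \cref{F2I_james_splitting} and \cref{lemma:jamessplitting} to reduce to the strong free-to-interacting maps for $0\le j\le 3$, then quoting Freed-Hopkins~\cite[Theorem 9.97, Corollary 9.101]{freed_reflection_2021} for the $\Pin^{h+}$ bordism groups and the strong class-CI map in spacetime dimension $4$. The paper does not spell out the bookkeeping you have written, but your computation of the summands and of $\FtwoI^{3}_{\mathit{strong}}\colon\Z\to\Z_4\oplus\Z_2$ (kernel $4\Z$, cokernel $\Z_2$) is exactly what is needed to fill in the table.
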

\begin{litnote}
The \pinhp bordism groups appearing in this computation were computed by Freed-Hopkins~\cite[Theorem 9.97]{freed_reflection_2021}; they use the notation $G^+$ for $\Pin^{h+}$.
Just as in \cref{tab:weak_real_s=2}, the interaction-enabled strong phase in dimension $1$, first studied by Freed-Hopkins~\cite[Corollary 9.101]{freed_reflection_2021}, gives rise to interaction-enabled weak phases in higher dimensions.
\end{litnote}
\begin{corollary}[Symmetry class C, $s = 4$]
\label{tab:weak_real_s=4}
The free-to-interacting map for the groups of weak phases in Altland-Zirnbauer type C is:
\begin{equation}
\begin{gathered}
\begin{tikzcd}[ampersand replacement = {\&}, row sep=0.2cm, column sep=0.4cm,
    execute at end picture={
        \foreach \y in {1.5, -1.6} {
            \draw[thick] (-5.1, \y) -- (5.1, \y);
        }
        \draw (-5.1, 0.8) -- (5.1, 0.8);
    }
]
	d \& {\ker(\FtwoI)} \& {\KO^{d+2}(\mathbf T^d)} \& {\mho^{d+2}_{\Spin^{h}}(\mathbf T^d)} \& {\coker(\FtwoI)} \\
	1 \& 0 \& 0  \& 0  \& 0 \\
	2 \& 0 \& \textcolor{ceruleanblue}{\Z}  \&  \textcolor{ceruleanblue}{\Z^2} \& \textcolor{ceruleanblue}{\Z} \\
	3 \& 0 \& \Z^3 \& \Z^6 \& \Z^3
	\arrow[from=1-2, to=1-3]
	\arrow["\FtwoI", from=1-3, to=1-4]
	\arrow[from=1-4, to=1-5]
\end{tikzcd}\end{gathered}
\end{equation}
\end{corollary}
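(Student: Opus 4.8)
The plan is to obtain the table from Freed--Hopkins' strong free-to-interacting maps by way of \cref{F2I_james_splitting}. Here $s=4$, so $s-2=2$, and by \cref{tenfold_table} the spacetime structure is $H(4)=\Spin^h$; thus \cref{F2I_james_splitting} identifies $\FtwoI^d_{\mathit{weak}}$ with $\bigoplus_{k=0}^d\binom dk\FtwoI^{d-k}_{\mathit{strong}}$, where $\FtwoI^j_{\mathit{strong}}\colon\KO^{j+2}(\pt)\to\mho^{j+2}_{\Spin^h}(\pt)$. So the work splits into (i) recording $\KO^n(\pt)$ and $\mho^n_{\Spin^h}(\pt)$ for $2\le n\le 5$, (ii) pinning down the maps $\FtwoI^j_{\mathit{strong}}$ for $0\le j\le 3$, and (iii) reassembling via the James splitting (\cref{lemma:jamessplitting}).

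For step (i): Bott periodicity gives $\KO^2(\pt)=\KO^3(\pt)=\KO^5(\pt)=0$ and $\KO^4(\pt)\cong\Z$. For the interacting side I would invoke the low-degree $\Spin^h$-bordism groups $\Omega^{\Spin^h}_*\cong(\Z,0,0,0,\Z^2,0)$ in degrees $0,\dots,5$, computed by Freed--Hopkins~\cite{freed_reflection_2021}; rationally these are detected by Pontryagin numbers of $TM$ and of the associated $\SO(3)$-bundle, which accounts for the rank $2$ in degree $4$ and rank $0$ elsewhere in this range, and the absence of torsion can be checked with the Adams spectral sequence. As all of these groups are free abelian, the $\Ext$ terms in the Anderson-duality sequence of \cref{anderson_dual} vanish, so $\mho^n_{\Spin^h}(\pt)\cong\Hom(\Omega^{\Spin^h}_n,\Z)$: this is $0$ for $n=2,3,5$ and $\Z^2$ for $n=4$.

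For step (ii): $\FtwoI^0_{\mathit{strong}}$, $\FtwoI^1_{\mathit{strong}}$, and $\FtwoI^3_{\mathit{strong}}$ are maps between groups that both vanish, hence zero. The only map with content is $\FtwoI^2_{\mathit{strong}}\colon\Z\to\Z^2$, and the crux of the argument is to show it is a \emph{split} injection, so that $\ker\FtwoI^2_{\mathit{strong}}=0$ and $\coker\FtwoI^2_{\mathit{strong}}\cong\Z$ with no torsion. Unwinding \cref{conjecture:freedhopkinsF2I} and \cref{self_dual_KO}, this map is the $\Z$-linear dual of the twisted ABS index $\ABS_4\colon\Omega^{\Spin^h}_4\to\KO_8(\pt)\cong\Z$, so the claim is equivalent to surjectivity of that index. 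I expect this to be the main obstacle: the cleanest route is to quote Freed--Hopkins' computation of the class C free-to-interacting map~\cite{freed_reflection_2021}, though one can also see it directly by exhibiting a generator whose $\KO$-valued twisted Dirac index is a unit, for instance $S^4$ with a one-instanton $\SU(2)$-bundle (i.e.\ $\mathbb{HP}^1$).

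For step (iii): feeding steps (i) and (ii) into \cref{F2I_james_splitting}, in $d=1$ we get $\FtwoI^1_{\mathit{weak}}=\FtwoI^1_{\mathit{strong}}\oplus\FtwoI^0_{\mathit{strong}}\colon 0\to 0$; in $d=2$ every summand except $\FtwoI^2_{\mathit{strong}}$ is between zero groups, so $\FtwoI^2_{\mathit{weak}}$ is the split injection $\Z\to\Z^2$, with kernel $0$ and cokernel $\Z$; and in $d=3$ only the three copies of $\FtwoI^2_{\mathit{strong}}$ contribute, giving $\FtwoI^3_{\mathit{weak}}\colon\Z^3\to\Z^6$ with kernel $0$ and cokernel $\Z^3$. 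The same James splitting also computes $\KO^{d+2}(\mathbf T^d)$ and $\mho^{d+2}_{\Spin^h}(\mathbf T^d)$ in each row; identifying the bottom and top cells of $\mathbf T^d$ with the strong (blue) and band-counting (red) summands as in \S\ref{ss:compare} yields the displayed table, all of whose interaction-enabled classes trace back to the single interaction-enabled phase of the two-dimensional class C superconductor.
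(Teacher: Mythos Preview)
Your proposal is correct and follows essentially the same approach as the paper: reduce to the strong free-to-interacting maps via \cref{F2I_james_splitting}, input the low-degree values of $\KO^*$ and of $\Omega_*^{\Spin^h}$ (and hence $\mho_{\Spin^h}^*$) from Freed--Hopkins, and observe that only the degree-$4$ strong map $\Z\to\Z^2$ contributes. The paper does not spell out a proof beyond invoking \cref{F2I_james_splitting} and~\cite[Theorem 9.97]{freed_reflection_2021}, so your write-up is in fact more detailed than what the paper provides.
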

\begin{litnote}
Both a free weak phase and an interaction-enabled weak phase contribute to the classification in $d=3$. The \spinh bordism groups used in the computation in \cref{tab:weak_real_s=4} were first computed by Freed-Hopkins~\cite[Theorem 9.97]{freed_reflection_2021}, though they use the notation $G^0$ for $\Spin^h$.
\end{litnote}
\begin{corollary}[Symmetry class CII, $s = -3$]
\label{tab:weak_real_s=-3}
The free-to-interacting map for the groups of weak phases in Altland-Zirnbauer type CII is:
\begin{equation}
\begin{gathered}
\begin{tikzcd}[ampersand replacement = {\&}, row sep=0.2cm, column sep=0.4cm,
    execute at end picture={
        \foreach \y in {1.6, -1.75} {
            \draw[thick] (-5.15, \y) -- (5.15, \y);
        }
        \draw (-5.15, 0.9) -- (5.15, 0.9);
    }
]
	d \& {\ker(\FtwoI)} \& {\KO^{d+3}(\mathbf T^d)} \& {\mho^{d+2}_{\Pin^{h-}}(\mathbf T^d)} \& {\coker(\FtwoI)} \\
	1  \& \textcolor{ceruleanblue}{2\Z}  \& \textcolor{ceruleanblue}{\Z}  \& \textcolor{ceruleanblue}{\Z_2} \& 0 \\
	2 \& (2\Z)^2 \& \Z^2  \&  \Z_2^2 \& 0 \\
	3 \& (2\Z)^3 \& \Z^3 \oplus \Z_2  \&  \Z_2^6 \& \Z_2^2
	\arrow[from=1-2, to=1-3]
	\arrow["\FtwoI", from=1-3, to=1-4]
	\arrow[from=1-4, to=1-5]
\end{tikzcd}\end{gathered}
\end{equation}
\end{corollary}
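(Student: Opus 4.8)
The plan is to reduce the whole statement to the strong free-to-interacting maps in spatial dimensions at most three and then do the bookkeeping. First I would identify the three groups. By \cref{neutral_ansatz} and the T-duality isomorphism~\eqref{eqn:KOKR-Tduality}, the group of free discrete translation-invariant class CII ($s=-3$) phases in spatial dimension $d$ is $\KO^{d+3}(\mathbf T^d)$ (equivalently $\KO^{d-5}(\mathbf T^d)$, by $8$-periodicity of $\KO$); by \cref{WI_corollary} together with the identification $H(-3)=\Pin^{h-}$ of \cref{tenfold_table}, the group of interacting such phases is $\mho^{d+2}_{\Pin^{h-}}(\mathbf T^d)$; and the map between them is $\FtwoI^{d}_{\mathit{weak}}$ of \cref{the_weak_F2I_ansatz}. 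Now \cref{F2I_james_splitting} says that this map, together with its kernel and cokernel, splits as $\bigoplus_{k=0}^{d}\binom{d}{k}\FtwoI^{d-k}_{\mathit{strong}}$, the $k=0$ summand being the strong phase (the blue entries of the table). So it suffices to determine each $\FtwoI^{j}_{\mathit{strong}}\colon\KO^{j-5}(\pt)\to\mho^{j+2}_{\Pin^{h-}}(\pt)$ for $j=0,1,2,3$ and then take the indicated direct sums.

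Next I would assemble the inputs. The $\KO$-coefficient ring gives $\KO^{j-5}(\pt)=0,\ \Z,\ 0,\ \Z_2$ for $j=0,1,2,3$. On the interacting side I would take the low-degree $\Pin^{h-}$-bordism groups of Freed-Hopkins~\cite[Theorem 9.97]{freed_reflection_2021} and push them through the Anderson-duality short exact sequence~\eqref{IZproperty}, which identifies the free part of $\mho^{n}_{\Pin^{h-}}(\pt)$ with the free part of $\Omega^{\Pin^{h-}}_{n}$ and its torsion with the torsion of $\Omega^{\Pin^{h-}}_{n-1}$; this yields $\mho^{j+2}_{\Pin^{h-}}(\pt)=0,\ \Z_2,\ 0,\ \Z_2^3$ for $j=0,1,2,3$. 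In particular $\FtwoI^0_{\mathit{strong}}$ and $\FtwoI^2_{\mathit{strong}}$ are forced to be isomorphisms of trivial groups, so only the maps in dimensions $1$ and $3$ carry content.

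Then I would pin those two maps down: I expect $\FtwoI^1_{\mathit{strong}}\colon\Z\to\Z_2$ to be surjective, hence reduction modulo $2$, with kernel $2\Z$ and trivial cokernel, and $\FtwoI^3_{\mathit{strong}}\colon\Z_2\to\Z_2^3$ to be injective, with trivial kernel and cokernel $\Z_2^2$. The cleanest route is to cite Freed-Hopkins' case-by-case analysis of the strong free-to-interacting maps for class CII in~\cite[\S 9.3]{freed_reflection_2021}; failing that, one can compute them directly from the twisted Atiyah-Bott-Shapiro map $\ABS_{-3}\colon\Omega^{\Pin^{h-}}_{*}\to\KO_{*-3}$ via Smith homomorphisms, exactly as is done for class AII in Appendix~\ref{app:twABS} (the structure is the same there: $\FtwoI^3_{\mathit{strong}}$ is again $\Z_2\hookrightarrow\Z_2^3$). \emph{This is the step I expect to be the main obstacle}, and the genuinely delicate part is the injectivity of $\FtwoI^3_{\mathit{strong}}$---equivalently, the statement that the free $\Z_2$-valued strong three-dimensional class CII phase is not trivialized by interactions---since that injectivity is exactly what produces a $\Z_2^2$ cokernel rather than a $\Z_2^3$ one in dimension $3$, and hence governs which interaction-enabled weak phases appear.

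Finally I would assemble the three rows. For $d=1$, $\FtwoI^1_{\mathit{weak}}=\FtwoI^1_{\mathit{strong}}\oplus\FtwoI^0_{\mathit{strong}}$ has domain $\Z$, codomain $\Z_2$, kernel $2\Z$, and cokernel $0$. For $d=2$, $\FtwoI^2_{\mathit{weak}}=\FtwoI^2_{\mathit{strong}}\oplus 2\FtwoI^1_{\mathit{strong}}\oplus\FtwoI^0_{\mathit{strong}}$ has domain $\Z^2$, codomain $\Z_2^2$, kernel $(2\Z)^2$, and cokernel $0$. For $d=3$, $\FtwoI^3_{\mathit{weak}}=\FtwoI^3_{\mathit{strong}}\oplus 3\FtwoI^2_{\mathit{strong}}\oplus 3\FtwoI^1_{\mathit{strong}}\oplus\FtwoI^0_{\mathit{strong}}$ has domain $\Z_2\oplus\Z^3\cong\Z^3\oplus\Z_2$, codomain $\Z_2^3\oplus\Z_2^3=\Z_2^6$, kernel $0\oplus(2\Z)^3=(2\Z)^3$, and cokernel $\Z_2^2\oplus 0=\Z_2^2$, using that every subgroup of $\Z_2^3$ is a direct summand to evaluate the cokernel of $\FtwoI^3_{\mathit{strong}}$. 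These match the displayed entries.
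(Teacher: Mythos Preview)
Your proposal is correct and matches the paper's approach: the corollary has no dedicated proof in the paper, being an immediate application of \cref{F2I_james_splitting} together with Freed-Hopkins' computation of the strong class CII free-to-interacting maps (the paper cites~\cite[Corollary 9.103]{freed_reflection_2021} rather than the general \S 9.3) and of the $\Pin^{h-}$ bordism groups~\cite[Theorem 9.97]{freed_reflection_2021}. Your identification of the injectivity of $\FtwoI^3_{\mathit{strong}}$ as the only nonformal input is exactly right, and that injectivity is contained in the Freed-Hopkins result just cited.
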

\begin{litnote}
Just as in \cref{tab:weak_real_s=1}, the generator of the group of one-dimensional strong phases becoming torsion in the interacting classification (\cite[Corollary 9.103]{freed_reflection_2021}) gives rise to weak phases in the kernel of the free-to-interacting map in higher dimensions. In $d=3$, we also predict
two interaction-enabled phases. Xiao-Kawabata-Luo-Ohtsuki-Shindou~\cite{xiao_anisotropic_2023} briefly discuss 3d weak interacting phases in class CII: they claim that the three $\Z$-valued topological indices of free phases remain nontrivial in the presence of interactions, which our computations support.

The \pinhm bordism groups that we used in this computation are computed by Freed-Hopkins~\cite[Theorem 9.97]{freed_reflection_2021}; they write $G^-$ for $\Pin^{h-}$.
\end{litnote}
\begin{corollary}[Symmetry class DIII, $s = -1$]
\label{tab:weak_real_s=-1}
The free-to-interacting map for the groups of weak phases in Altland-Zirnbauer type DIII is:
\begin{equation}
\begin{gathered}
\begin{tikzcd}[ampersand replacement = {\&}, row sep=0.2cm, column sep=0.4cm,
    execute at end picture={
        \foreach \y in {1.6, -1.75} {
            \draw[thick] (-5.1, \y) -- (5.1, \y);
        }
        \draw (-5.1, 0.9) -- (5.1, 0.9);
    }
]
	d \& {\ker(\FtwoI)} \& {\KO^{d-3}(\mathbf T^d)} \& {\mho^{d+2}_{\Pin^+}(\mathbf T^d)} \& {\coker(\FtwoI)} \\
	1 \& 0 \& \textcolor{ceruleanblue}{\Z_2} \& \textcolor{ceruleanblue}{\Z_2} \& 0 \\
	2 \& 0 \& \Z_2^2 \oplus \textcolor{ceruleanblue}{\Z_2} \& \Z_2^2 \oplus \textcolor{ceruleanblue}{\Z_2} \& 0 \\
	3 \& \textcolor{ceruleanblue}{16\Z} \& \textcolor{ceruleanblue}{\Z}\oplus \Z_2^6 \& \textcolor{ceruleanblue}{\Z_{16}}\oplus \Z_2^6 \& 0
	\arrow[from=1-2, to=1-3]
	\arrow["\FtwoI", from=1-3, to=1-4]
	\arrow[from=1-4, to=1-5]
\end{tikzcd}\end{gathered}
\end{equation}
\end{corollary}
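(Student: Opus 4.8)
The plan is to reduce the whole corollary to the strong case with \cref{F2I_james_splitting} and then feed in the known strong class-DIII free-to-interacting map. Reindexing the sum in \cref{F2I_james_splitting} by $j=d-k$ (legitimate since $\binom{d}{k}=\binom{d}{d-k}$) gives, for each spatial dimension $d$, $\FtwoI^d_{\mathit{weak}}=\bigoplus_{j=0}^{d}\binom{d}{j}\,\FtwoI^{j}_{\mathit{strong}}$, where $\FtwoI^{j}_{\mathit{strong}}\colon\KO^{j-3}(\pt)\to\mho^{j+2}_{\Pin^+}(\pt)$, using $s=-1$ (so $d+s-2=d-3$) and $H(-1)=\Pin^+$ from \cref{tenfold_table}. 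Thus it is enough to identify $\FtwoI^{j}_{\mathit{strong}}$ together with its kernel and cokernel for $j=0,1,2,3$ and then take the displayed direct sums. The \textcolor{ceruleanblue}{strong} summand of each row is the $j=d$ term (the $0$-cell of $\mathbf T^d$; see \S\ref{ss:compare}), and because the band-counting summand is $\FtwoI^0_{\mathit{strong}}$ with source $\KO^{s-2}(\pt)=\KO^{-3}(\pt)=0$, class DIII has no \textcolor{BrickRed}{red} (top-cell) entry.

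For the inputs: on the free side $\KO^{j-3}(\pt)$ is $0,\,\Z_2,\,\Z_2,\,\Z$ for $j=0,1,2,3$. On the interacting side, feed the low-degree \pinp bordism groups $\Omega_n^{\Pin^+}\cong\Z_2,\,0,\,\Z_2,\,\Z_2,\,\Z_{16},\,0$ for $n=0,\dots,5$~\cite{Gia73} into the Anderson-duality exact sequence~\eqref{IZproperty}; all these groups are torsion, so the $\Hom$ terms drop out and $\mho^{n}_{\Pin^+}(\pt)\cong\Ext(\Omega_{n-1}^{\Pin^+},\Z)$, giving $\mho^{j+2}_{\Pin^+}(\pt)\cong 0,\,\Z_2,\,\Z_2,\,\Z_{16}$ for $j=0,1,2,3$. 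Freed--Hopkins computed the strong class-DIII free-to-interacting map and showed it surjective in low dimensions~\cite[\S 9.3.1, Corollary 9.83]{freed_reflection_2021}; hence $\FtwoI^0_{\mathit{strong}}$ is $0\to0$, each of $\FtwoI^1_{\mathit{strong}},\FtwoI^2_{\mathit{strong}}$ is a surjection $\Z_2\twoheadrightarrow\Z_2$ and so an isomorphism, and $\FtwoI^3_{\mathit{strong}}$ is the surjection $\Z\twoheadrightarrow\Z_{16}$. Therefore $\ker\FtwoI^{j}_{\mathit{strong}}=0,0,0,16\Z$ and $\coker\FtwoI^{j}_{\mathit{strong}}=0$ throughout.

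Substituting into the direct-sum formula completes the proof. For $d=1$: $\FtwoI^1_{\mathit{strong}}\oplus\FtwoI^0_{\mathit{strong}}$, the isomorphism $\textcolor{ceruleanblue}{\Z_2}\to\textcolor{ceruleanblue}{\Z_2}$. For $d=2$: $\FtwoI^2_{\mathit{strong}}\oplus 2\FtwoI^1_{\mathit{strong}}\oplus\FtwoI^0_{\mathit{strong}}$, the isomorphism $\Z_2^2\oplus\textcolor{ceruleanblue}{\Z_2}\to\Z_2^2\oplus\textcolor{ceruleanblue}{\Z_2}$. For $d=3$: $\FtwoI^3_{\mathit{strong}}\oplus3\FtwoI^2_{\mathit{strong}}\oplus3\FtwoI^1_{\mathit{strong}}\oplus\FtwoI^0_{\mathit{strong}}$, a map $\textcolor{ceruleanblue}{\Z}\oplus\Z_2^6\to\textcolor{ceruleanblue}{\Z_{16}}\oplus\Z_2^6$ that is an isomorphism on the $\Z_2^6$ part and the surjection $\Z\twoheadrightarrow\Z_{16}$ on the \textcolor{ceruleanblue}{strong} part, so $\ker=\textcolor{ceruleanblue}{16\Z}$ and $\coker=0$. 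These are exactly the table's three rows.

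The main obstacle is not a new computation---\cref{F2I_james_splitting} absorbs all the difficulty---but certifying that the imported strong-case data is right, in particular that $\FtwoI^1_{\mathit{strong}}$ and $\FtwoI^2_{\mathit{strong}}$ are isomorphisms rather than zero. If the cited surjectivity statement of Freed--Hopkins only literally covers spacetime dimensions $0$ and $4$ (spatial $d=0,3$), one can instead obtain the $j=1,2$ cases either from the twisted Dirac-index description of $\FtwoI_s$ in \S\ref{subsec:ABS}, or by propagating the $j=3$ result downward through the $\MTSpin$-module structure of \cref{dimred_ansatz} (compactification on the nonbounding spin circle), using the known multiplications in $\Omega_*^{\Pin^+}$.
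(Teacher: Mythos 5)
Your proposal is correct and is essentially the proof the paper intends: the corollary is stated as a direct consequence of \cref{F2I_james_splitting}, and you apply that lemma together with the $\KO^*(\pt)$, $\Omega^{\Pin^+}_*$, and strong free-to-interacting data exactly as the paper's literature note implies. The reindexing $j = d-k$, the identification of the strong (blue, $j=d$) and band-counting ($j=0$, which vanishes since $\KO^{-3}(\pt)=0$) summands, and the Anderson-duality simplification $\mho^n_{\Pin^+}(\pt)\cong\Ext(\Omega^{\Pin^+}_{n-1},\Z)$ for the torsion groups all match; your caveat about certifying $\FtwoI^1_{\mathit{strong}}$ and $\FtwoI^2_{\mathit{strong}}$ are isomorphisms, with the fallback via \cref{dimred_ansatz}, is a sensible diligence point but does not represent a divergence in method.
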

\begin{litnote}
The \pinp bordism groups used in this computation were computed by Giambalvo~\cite[\S 2]{Gia73}. De Nittis-Gomi~\cite{de_nittis_cohomology_2022} classify the free weak phases in this class using equivariant cohomology. The weak phases arising from $d=1$ are stable to interactions. The strong phase in $d=3$ breaks from generating a $\Z$ of free phases to a $\Z_{16}$ of interacting phases; this has been argued in many different ways: see for example~\cite{Kit11, FCV13,
MFCV14, wang_interacting_2014, YX14, Kit15, Kapustin_SPT, TY16, Wit16, SSR17a, Wang_Super, freed_reflection_2021}.
\end{litnote}
\begin{corollary}[Symmetry class A, $s = 0$]
\label{tab:weakA}
The free-to-interacting map for the groups of weak phases in Altland-Zirnbauer type A is:
\begin{equation}
\begin{gathered}
\begin{tikzcd}[ampersand replacement = {\&}, row sep=0.2cm, column sep=0.4cm,
    execute at end picture={
        \foreach \y in {1.55, -1.75} {
            \draw[thick] (-4.8, \y) -- (4.8, \y);
        }
        \draw (-4.8, 0.85) -- (4.8, 0.85);
    }
]
	d \& {\ker(\FtwoI)} \& {K^d(\mathbf T^d)} \& {\mho^{d+2}_{\Spin^c}(\mathbf T^d)} \& {\coker(\FtwoI)} \\
	1 \& 0 \& \textcolor{BrickRed}{\Z} \& \textcolor{BrickRed}{\Z} \& 0 \\
	2 \& 0 \& \textcolor{BrickRed}{\Z} \oplus \textcolor{ceruleanblue}{\Z} \& \textcolor{BrickRed}{\Z} \oplus \textcolor{ceruleanblue}{\Z^2} \& \textcolor{ceruleanblue}{\Z} \\
	3 \& 0 \& \textcolor{BrickRed}{\Z}\oplus \Z^3 \& \textcolor{BrickRed}{\Z}\oplus \Z^6 \& \Z^3
	\arrow[from=1-2, to=1-3]
	\arrow["\FtwoI", from=1-3, to=1-4]
	\arrow[from=1-4, to=1-5]
\end{tikzcd}\end{gathered}
\end{equation}
\end{corollary}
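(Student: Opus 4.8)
The plan is to reduce the entire corollary to a short list of computations about Freed--Hopkins' \emph{strong} free-to-interacting map in class A. By the complex version of \cref{F2I_james_splitting},
\begin{equation*}
  \FtwoI^d_{\mathit{weak}} = \bigoplus_{k=0}^d \binom{d}{k}\,\FtwoI^{d-k}_{\mathit{strong}},
\end{equation*}
so the free group, the interacting group, the kernel, and the cokernel of $\FtwoI^d_{\mathit{weak}}$ are each the direct sum of the corresponding data of the complex strong maps $\FtwoI^j_{\mathit{strong}}\colon K^{j-2}(\pt)\to\mho^{j+2}_{\Spin^c}(\pt)$ of \cref{conjecture:freedhopkinsF2Icpx}, with multiplicity $\binom{d}{d-j}$, over $0\le j\le d\le 3$. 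So the first step is just to record these strong maps in dimensions $0,1,2,3$.

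To do that I would compute the relevant groups. On the free side, $K^{j-2}(\pt)$ is $\Z$ for $j$ even and $0$ for $j$ odd. On the interacting side I need $\mho^{j+2}_{\Spin^c}(\pt)$, which by \eqref{IZproperty} fits into a short exact sequence with $\Ext(\Omega^{\Spin^c}_{j+1},\Z)$ and $\Hom(\Omega^{\Spin^c}_{j+2},\Z)$. The low-degree spin\textsuperscript{c} bordism groups $\Omega^{\Spin^c}_0\cong\Z$, $\Omega^{\Spin^c}_1=0$, $\Omega^{\Spin^c}_2\cong\Z$ (generated by $\CP^1$), $\Omega^{\Spin^c}_3=0$, $\Omega^{\Spin^c}_4\cong\Z^2$ (generated by $\CP^2$ and $\CP^1\times\CP^1$), $\Omega^{\Spin^c}_5=0$ are classical; since they are all free abelian the Ext terms vanish, so $\mho^{j+2}_{\Spin^c}(\pt)\cong\Hom(\Omega^{\Spin^c}_{j+2},\Z)$, which is $\Z,0,\Z^2,0$ for $j=0,1,2,3$.

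Next I would identify the maps. Via the Anderson self-duality of $K$-theory (\cref{self_dual_K}), $\FtwoI^j_{\mathit{strong}}$ is the $\Z$-linear dual of the complex Atiyah--Bott--Shapiro orientation $\ABS^c\colon\Omega^{\Spin^c}_{j+2}\to K_{j+2}(\pt)$ — the Ext-correction terms in \eqref{IZproperty} all vanish because $\Omega^{\Spin^c}_{\le5}(\pt)$ and $K_{\le5}(\pt)$ are torsion-free, so the Anderson-dual map is the naive transpose here. The ABS\textsuperscript{c} orientation is the spin\textsuperscript{c} index, which on a complex manifold computes the Todd genus, so $\CP^1$, $\CP^2$ and $\CP^1\times\CP^1$ all map to generators (Todd genus $1$). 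Hence $\ABS^c$ is an isomorphism $\Z\xrightarrow{\ \cong\ }\Z$ in degree $2$, is the surjection $(a,b)\mapsto a+b$ from $\Z^2$ onto $\Z$ with kernel $\Z\langle[\CP^1\times\CP^1]-[\CP^2]\rangle$ in degree $4$, and is $0\to0$ in degrees $3$ and $5$. Dualizing: $\FtwoI^0_{\mathit{strong}}$ is an isomorphism $\Z\to\Z$ (recovering interaction-stability of the $0$-dimensional charge, compare~\cite[\S 9.3]{freed_reflection_2021}); $\FtwoI^2_{\mathit{strong}}$ is the primitive embedding $\Z\hookrightarrow\Z^2$, $n\mapsto(n,n)$, with cokernel $\cong\Z$; and $\FtwoI^1_{\mathit{strong}}=\FtwoI^3_{\mathit{strong}}$ are the zero maps $0\to0$.

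Finally I would assemble the table. For $d=1$ only $k=0$ ($\FtwoI^1_{\mathit{strong}}$, trivial) and $k=1$ ($\FtwoI^0_{\mathit{strong}}$, an isomorphism) contribute, giving $\Z\to\Z$ with zero kernel and cokernel. For $d=2$ the contributions are $k=0$ ($\FtwoI^2_{\mathit{strong}}\colon\Z\hookrightarrow\Z^2$, cokernel $\Z$), $k=1$ (two trivial copies), and $k=2$ ($\FtwoI^0_{\mathit{strong}}$, isomorphism): free group $\Z^2$, interacting group $\Z^3$, zero kernel, cokernel $\Z$. For $d=3$ only $k=1$ (three copies of $\FtwoI^2_{\mathit{strong}}$) and $k=3$ (one copy of $\FtwoI^0_{\mathit{strong}}$) survive: free group $\Z^3\oplus\Z$, interacting group $\Z^6\oplus\Z$, zero kernel, cokernel $\Z^3$ — exactly the table, with the color-coding recording which James-splitting cell of $\mathbf T^d$ produced each summand as in \S\ref{ss:compare}. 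The only step requiring honest input rather than bookkeeping is the computation of $\FtwoI^2_{\mathit{strong}}$, equivalently the surjectivity of $\ABS^c\colon\Omega^{\Spin^c}_4\to K_4(\pt)$ and the identification of its rank-one kernel; I expect this to be the crux, although it is immediate from the Todd-genus interpretation of the ABS orientation.
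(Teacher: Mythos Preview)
Your proposal is correct and follows exactly the strategy the paper uses: the corollary is stated without a separate proof because it is a direct consequence of \cref{F2I_james_splitting}, reducing everything to the strong class~A free-to-interacting maps in spatial dimensions $0$--$3$, whose computation (via the Anderson-dualized ABS\textsuperscript{$c$} orientation and the low-degree spin\textsuperscript{$c$} bordism groups of~\cite{ABP67,BG87a}) is precisely what you carry out. Your explicit identification of $\FtwoI^2_{\mathit{strong}}$ as the dual of the Todd-genus surjection $\Omega^{\Spin^c}_4\twoheadrightarrow K_4$ fills in a detail the paper leaves to the cited references, but the overall approach is the same.
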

\begin{litnote}
Varjas-de Juan-Lu~\cite[\S II]{varjas_space_2017} observe that the Hall conductivity, a $\Z$-valued invariant of weak free class A phases in 3d, remains a well-defined, $\Z$-valued invariant of interacting systems, which is consistent with our computations. The $\Z$ summand in $d=2$ corresponding to the integer quantum Hall effect (a strong phase) is stable under interactions and contributes to weak phases in $d=3$. There are also interaction-enabled phases in $d=2$, which contribute to weak interaction-enabled phases in the $d=3$ Chern insulator. The calculation of \spinc bordism groups is attributed to Anderson-Brown-Peterson~\cite{ABP67}; see Bahri-Gilkey~\cite[\S 1]{BG87a} for an explicit description.
\end{litnote}
\begin{corollary}[Symmetry class AIII, $s = 1$]
\label{tab:weakAIII}
The free-to-interacting map for the groups of weak phases in Altland-Zirnbauer type AIII is:
\begin{equation}
\begin{gathered}
\begin{tikzcd}[ampersand replacement = {\&}, row sep=0.2cm, column sep=0.4cm,
    execute at end picture={
        \foreach \y in {1.65, -1.65} {
            \draw[thick] (-5.35, \y) -- (5.35, \y);
        }
        \draw (-5.35, 0.9) -- (5.35, 0.9);
    }
]
	d \& {\ker(\FtwoI)} \& {K^{d-1}(\mathbf T^d)} \& {\mho^{d+2}_{\Pin^c}(\mathbf T^d)} \& {\coker(\FtwoI)} \\
	1 \& \textcolor{ceruleanblue}{4\Z}\& \textcolor{ceruleanblue}{\Z} \& \textcolor{ceruleanblue}{\Z_4} \& 0 \\
	2 \& (4\Z)^2 \& \Z^2 \& \Z_4^2 \& 0 \\
	3 \& \textcolor{ceruleanblue}{8\Z} \oplus (4\Z)^3 \& \textcolor{ceruleanblue}{\Z}\oplus \Z^3 \& \textcolor{ceruleanblue}{\Z_8\oplus\Z_2} \oplus \Z_4^3 \& \textcolor{ceruleanblue}{\Z_2}
	\arrow[from=1-2, to=1-3]
	\arrow["\FtwoI", from=1-3, to=1-4]
	\arrow[from=1-4, to=1-5]
\end{tikzcd}\end{gathered}
\end{equation}
\end{corollary}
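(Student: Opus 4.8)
The plan is to run the same argument used for the earlier corollaries in this section: apply \cref{F2I_james_splitting} to express $\FtwoI^d_{\mathit{weak}}$ (complex version) as a direct sum of strong free-to-interacting maps in dimensions $0$ through $d$, and then feed in the relevant $K$-groups of a point, $\Pin^c$-bordism groups, and strong maps. Concretely, the complex case of \cref{F2I_james_splitting} gives $\FtwoI^d_{\mathit{weak}} = \bigoplus_{k=0}^{d}\binom dk\,\FtwoI^{d-k}_{\mathit{strong}}$, where $\FtwoI^m_{\mathit{strong}} = \FtwoI^{c}_{m,\mathit{strong}}\colon K^{m-1}(\pt)\to\mho^{m+2}_{\Pin^c}(\pt)$ is the map of \cref{conjecture:freedhopkinsF2Icpx}, using the identification $H^c(1) = \Pin^c$ from \cref{tenfold_table}. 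For $d\le 3$ this requires only the four maps $\FtwoI^m_{\mathit{strong}}$ with $m\in\{0,1,2,3\}$.

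Second, I would compute the groups in the two middle columns. On the free side, Bott periodicity gives $K^{m-1}(\pt)\cong\Z$ for $m$ odd and $0$ for $m$ even, and the James splitting (\cref{lemma:jamessplitting}) assembles these into $K^{d-1}(\mathbf T^d)\cong\bigoplus_k\binom dk K^{d-1-k}(\pt)$. On the interacting side, \cref{anderson_dual} together with the (unnatural) splitting of its defining short exact sequence gives $\mho^{n}_{\Pin^c}(\pt)\cong\mathrm{tors}\,\Omega^{\Pin^c}_{n-1}\oplus\mathrm{free}\,\Omega^{\Pin^c}_{n}$; feeding in the low-degree $\Pin^c$-bordism groups (known in the literature) yields $\mho^2_{\Pin^c}(\pt)=0$, $\mho^3_{\Pin^c}(\pt)\cong\Z_4$, $\mho^4_{\Pin^c}(\pt)=0$, and $\mho^5_{\Pin^c}(\pt)\cong\Z_8\oplus\Z_2$, and the James splitting assembles these into $\mho^{d+2}_{\Pin^c}(\mathbf T^d)\cong\bigoplus_k\binom dk\mho^{d+2-k}_{\Pin^c}(\pt)$. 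Note the even-degree targets vanish, matching the vanishing of the domains $K^{m-1}(\pt)$ for $m$ even, so $\FtwoI^0_{\mathit{strong}}$ and $\FtwoI^2_{\mathit{strong}}$ are zero and contribute nothing to either kernel or cokernel.

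Third, I would identify the two remaining maps $\FtwoI^1_{\mathit{strong}}\colon\Z\to\Z_4$ and $\FtwoI^3_{\mathit{strong}}\colon\Z\to\Z_8\oplus\Z_2$. By \cref{conjecture:freedhopkinsF2Icpx}, each is the Anderson self-duality of $K$-theory (\cref{self_dual_K}) composed with the Anderson dual of the twisted complex ABS orientation $\ABS^c_1\colon\Omega^{\Pin^c}_n\to K_{n+1}$ of~\eqref{complex_twisted_ABS_eqn}. Chasing this through the Anderson short exact sequence, the claim is that $\FtwoI^1_{\mathit{strong}}$ is reduction mod $4$ (hence surjective with kernel $4\Z$) and that $\FtwoI^3_{\mathit{strong}}$ has kernel $8\Z$ and cokernel $\Z_2$. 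This is the one nonformal step and the place I expect the real work to lie: everything else is bookkeeping with Bott periodicity and James splitting, but pinning down these reduction maps requires either the explicit index-theoretic description of $\ABS^c_1$ on a set of bordism generators (analogous to the computation carried out for class AII in Appendix~\ref{app:twABS}) or adapting Freed-Hopkins' analysis of the corresponding real classes to the complex setting, where the $K$-theoretic free-to-interacting map is only implicit.

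Finally, with all ingredients in hand I would assemble $\FtwoI^d_{\mathit{weak}} = \bigoplus_k\binom dk\FtwoI^{d-k}_{\mathit{strong}}$ for $d=1,2,3$ and read off each column summand by summand, using $\ker\FtwoI^d_{\mathit{weak}}=\bigoplus_k\binom dk\ker\FtwoI^{d-k}_{\mathit{strong}}$ and likewise for cokernels. For instance, in $d=3$ the only nonzero contributions to the kernel are $\binom33\ker\FtwoI^3_{\mathit{strong}}=8\Z$ and $\binom32\ker\FtwoI^1_{\mathit{strong}}=(4\Z)^3$, and the only nonzero contribution to the cokernel is $\binom33\coker\FtwoI^3_{\mathit{strong}}=\Z_2$, which reproduces the last row. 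The colour coding is simply the statement of \S\ref{ss:compare}: the $k=0$ summand, coming from the bottom cell of $\mathbf T^d$, is the genuinely $d$-dimensional strong phase, shown in blue, while the remaining summands are weak phases built from strong phases in lower dimension.
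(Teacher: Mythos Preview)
Your proposal is correct and follows essentially the same approach as the paper: the paper does not give a separate proof for this corollary, but the method spelled out at the start of \S\ref{example_section} (James splitting plus the strong free-to-interacting maps) is exactly what you describe. The one point where you hedge more than necessary is the ``nonformal step'' of identifying $\FtwoI^1_{\mathit{strong}}$ and $\FtwoI^3_{\mathit{strong}}$: the paper does not recompute these but simply imports them from Freed--Hopkins~\cite[Corollary 9.91]{freed_reflection_2021}, as noted in the Literature Note following the corollary.
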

\begin{litnote}
\Cref{tab:weakAIII} is in agreement with work of Xiao-Kawabata-Luo-Ohtsuki-Shindou~\cite{xiao_anisotropic_2023}, who discuss how the weak $\Z$-valued indices in 3d remain nontrivial in the presence of interactions. See also Claes-Hughes~\cite{claes_disorder_2020}, who study the behavior of these indices under disorder.
De Nittis-Gomi~\cite{DNG18c} study free weak phases in this class in terms of objects called \term{chiral vector bundles}.

Like in \cref{tab:weak_real_s=1,tab:weak_real_s=-3}, the nontrivial kernel of the one-dimensional strong free-to-interacting map (\cite[Corollary 9.91]{freed_reflection_2021}) produces phases in the kernel of the weak free-to-interacting map in higher dimensions. The generator of the group of one-dimensional strong phases is the class of the Su-Schrieffer-Heeger model~\cite{SSH79}. In three dimensions 
we predict
an additional free phase that breaks down, as well as an interaction-enabled phase.  Pin\textsuperscript{$c$} bordism groups were first computed by Bahri-Gilkey~\cite{BG87a, BG87b}.
\end{litnote}

\subsection*{Statement on Conflicts of Interest} The authors have no competing interests to declare that
are relevant to the content of this article.

\appendix

\section{Calculation of the twisted ABS map \texorpdfstring{$\Omega_4^{\Pin^{\tilde c+}}\to\Z_2$}{}}
    \label{app:twABS}
Our goal in this appendix is to explicitly calculate Freed-Hopkins' twisted Atiyah-Bott-Shapiro map in dimension $4$ in class $s = -2$. We use this calculation in \cref{strong_F2I_AII_ex}.

Freed-Hopkins' original calculation of this free-to-interacting map in~\cite[\S 10]{freed_reflection_2021} is purely homotopy-theoretical, coming from an Adams spectral sequence computation. We make a more concrete and less technical calculation, which additionally results in an explicit understanding of the manifold generators of the relevant bordism groups.
Specifically, we use the long exact sequence associated to a \term{Smith homomorphism}, following a general theory worked out in~\cite{mathSmith}. See~\cite{HS13, DL23, DDHM23, Deb24, DNT24, DK24, DYY23} for more examples of calculations applying this technique and~\cite[Lemma 5.19]{miyazawa_localization_2025} for a similar calculation of Freed-Hopkins' map in dimension $3$.
\begin{remark}
Freed-Hopkins' original definition of the twisted Atiyah-Bott-Shapiro map is index-theoretic, as we review in \S\ref{subsec:ABS}. It therefore seems reasonable that there should be a description of the map $\Omega_4^{\Pin^{\tilde c+}}\to\KO_2\cong\Z_2$ as a mod $2$ index of the Dirac operator from \cref{ABSAII}. We would be interested in learning whether it is possible to prove \cref{appendix_thm} by calculating this mod $2$ index on a generating set for $\Omega_4^{\Pin^{\tilde c+}}$.
\end{remark}
\begin{definition}[{Hason-Komargodski-Thorngren~\cite[\S 4.1]{hason_anomaly_2020}}]
Let $V\to X$ be a virtual vector bundle. An \term{$(X, V)$-twisted spin structure} on a vector bundle $E\to M$ is data of a map $f\colon M\to X$ and a spin structure on $E\oplus f^*(V)$.
\end{definition}
Given a fermionic group $G_f$, there is often a virtual vector bundle $V\to BG_b$ such that the tangential structure associated to $G$ as defined in \S\ref{spacetime_sym} is equivalent to a $(BG_b, V)$-twisted spin structure. This occurs for all fermionic groups we consider in this paper; that it is not true in general follows from work of Gunarwardena-Kahn-Thomas~\cite{GKT89}. We will need the following three examples.
\begin{lemma}[{Freed-Hopkins~\cite[\S 10]{freed_reflection_2021}}]
\label{tangstrs}
\hfill
\begin{enumerate}
    \item\label{pincpc} Pin\textsuperscript{$\tilde c+$} structures are naturally equivalent to $(B\O(2), -V)$-twisted spin structures, where $V\to B\O(2)$ is the tautological bundle.
    \item Pin\textsuperscript{$+$} structures are naturally equivalent to $(B\O(1), -\sigma)$-twisted spin structures, where $\sigma\to B\O(1)$ is the tautological bundle.
    \item\label{twspinc} Spin\textsuperscript{$c$} structures are naturally equivalent to $(B\U(1), -L)$-twisted spin structures, where $L\to B\U(1)$ is the tautological complex line bundle.
\end{enumerate}
\end{lemma}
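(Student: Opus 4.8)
The plan is to prove the three identifications uniformly, as statements about stable tangential structures, treating \eqref{twspinc} as the model case. Recall that for a virtual bundle $V\to X$, an $(X,V)$-twisted spin structure on $E\to M$ is by definition a map $f\colon M\to X$ together with a stable spin structure --- a lift across $B\Spin\to B\O$ --- on the stable class of $E\oplus f^*V$. Twisted spin structures on rank-$d$ bundles over $M$ are therefore classified by the space
\[
\mathcal B^{(X,V)}\;\coloneqq\; B\Spin\times_{B\O}\bigl(B\O(d)\times X\bigr),
\]
fibered over $B\O(d)$ by the first projection, where $B\O(d)\times X\to B\O$ records the stable class of $E\oplus V$. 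Each part of the lemma is thus equivalent to a natural equivalence $BH(d)\simeq\mathcal B^{(X,V)}$ over $B\O(d)$, equivalently to an equivalence of Thom spectra $MH\simeq M\Spin\wedge X^{-V}$, and the word ``naturally'' becomes automatic once the equivalence is produced as a map over $B\O(d)$.

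For \eqref{twspinc} this is classical. A $\Spin^c$-structure on $E$ is the same as a complex line bundle $L$ together with a spin structure on the real bundle $E\oplus L_{\R}$, and $c_1(L)\equiv w_2(E)\pmod 2$ is forced. The passage from ``$+L$'' to the ``$-L$'' of the lemma uses the canonical spin structure on $L\oplus\overline L$: this rank-$2$ complex bundle has $c_1=0$, hence reduces to $\SU(2)$ and so is canonically spin, so a spin structure on $E\oplus L_\R=(E\ominus L_\R)\oplus(L\oplus\overline L)_\R$ canonically determines one on $E\ominus L_\R$, and conversely. A Stiefel--Whitney computation confirms $w_1(E\ominus L_\R)=w_1(E)$ and $w_2(E\ominus L_\R)=w_2(E)+\overline{c_1(L)}$, so the obstruction to the twisted structure is again $W_3(E)=\beta w_2(E)$; comparing obstruction classes and the $H^1(M;\Z_2)$-torsors of remaining choices shows $B\Spin^c\simeq\mathcal B^{(B\U(1),-L)}$ over $B\O$. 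The middle statement is the same with the orientation line bundle $\det E$ in place of $L$, the twist $\det E\otimes_{\R}\mathbb{H}$ (again canonically spin) absorbing the analogous sign change, and $\Pin^+(d)$ in place of $\Spin(n)$; here $w_1(E\ominus\det E)=0$ and $w_2(E\ominus\det E)=w_2(E)$ (the two $w_1(E)^2$ terms cancel mod $2$), matching the obstruction $w_2(E)=0$ to a $\Pin^+$-structure.

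For \eqref{pincpc} --- the case we actually use in \cref{app:twABS} --- the new feature is that the twisting group is $\O(2)=\U(1)\rtimes\Z_2$ rather than $\U(1)$, and this is dictated by $\Pin^{\tilde c+}(d)=(\Pin^+(d)\ltimes\U(1))/\Z_2^F$: the factor $\Pin^+(d)$ acts on $\U(1)$ through $\det\colon\Pin^+(d)\to\Z_2$ by complex conjugation, which is exactly the outer automorphism of $\U(1)$ realized in $\O(2)$ by a reflection. So the $\U(1)$-data and its $\det$-twist assemble into a single rank-$2$ real bundle $W=f^*V$ for some $f\colon M\to B\O(2)$. The group-level input is the homomorphism
\[
\Pin^+(d)\ltimes\U(1)\longrightarrow\O(2),\qquad (p,u)\longmapsto j(u^2)\,r(\det p),
\]
with $j\colon\SO(2)\hookrightarrow\O(2)$ and $r\in\O(2)$ a fixed reflection; the square on $u$ is exactly what makes this homomorphism kill the generator $((-1)^F,-1)$ of $\Z_2^F$, hence descend to $\Pin^{\tilde c+}(d)\to\O(2)$. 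Its composite with $\det\colon\O(2)\to\Z_2$ is $w_1(E)$, so $\det W=\det E$ and $w_1(W)=w_1(E)$ automatically; and the Stiefel--Whitney computation $w_1(E\ominus W)=w_1(E)+w_1(W)$, $w_2(E\ominus W)=w_2(E)+w_2(W)$ shows $E\ominus W$ is stably spin precisely when $w_1(W)=w_1(E)$ and $w_2(W)=w_2(E)$, with the $\U(1)$-factor supplying exactly the freedom in $w_2(W)$ needed to realize the second equality. Comparing obstruction classes (the obstruction being a twisted integral Bockstein of $w_2(E)$, consistent with Freed--Hopkins' computation of $\Omega_*^{\Pin^{\tilde c+}}$) and torsors of remaining choices then gives $B\Pin^{\tilde c+}\simeq\mathcal B^{(B\O(2),-V)}$ over $B\O$. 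The main obstacle throughout is not homotopy-theoretic --- the homotopy pullback description makes the naturality formal --- but keeping the twisting data scrupulously correct: the sign on $V$, the precise homomorphism to $\O(2)$ that survives the quotient by $\Z_2^F$, and the verification that the induced obstruction class is the expected one. These are pinned down by a choice of conventions, and we adopt those of Freed--Hopkins.
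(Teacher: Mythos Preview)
The paper does not give its own proof of this lemma; it is stated with attribution to Freed--Hopkins~\cite[\S 10]{freed_reflection_2021} and followed only by a remark recording the alternate characterizations of Stolz ($+3\sigma$ in place of $-\sigma$) and Stong ($+L$ in place of $-L$). There is therefore nothing in the paper to compare your argument against.

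On the substance of your sketch: the overall strategy (identify both sides as homotopy pullbacks over $B\O$ and compare) is sound, and your homomorphism $(p,u)\mapsto j(u^2)\,r(\det p)$ for part~\eqref{pincpc} is correct and does descend through the quotient by $\Z_2^F$. Two points need tightening. First, in part (2) your appeal to ``$\det E\otimes_\R\mathbb H$'' does not do what you claim: the difference between the $+\sigma$ and $-\sigma$ twists is $2\sigma$, and $w_2(2\sigma)=w_1(\sigma)^2$ is generally nonzero, so $2\sigma$ is \emph{not} canonically spin; what is canonically spin is $4\sigma\cong\sigma\otimes_\R\mathbb H$, which mediates between $-\sigma$ and Stolz's $+3\sigma$, not between $-\sigma$ and $+\sigma$. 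Second, in all three parts you only verify that the primary obstructions agree and then assert that ``comparing torsors'' finishes the job; but that comparison \emph{is} the content of the lemma. Having written down the group homomorphisms, the honest way to finish is to show directly that the induced map on classifying spaces is a weak equivalence over $B\O(d)$ --- equivalently, that the relevant square of compact Lie groups is a pullback --- rather than to match obstruction classes and $H^1(\text{--};\Z_2)$-torsors separately.
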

\begin{remark}[Alternate characterizations]
Stolz~\cite[\S 6]{Sto88} showed that \pinp structures are naturally equivalent to $(B\O(1), 3\sigma)$-twisted spin structures, and it is implicit in Stong~\cite[Chapter XI]{Sto68} that \spinc structures are $(B\U(1), L)$-twisted spin structures.
\end{remark}
The pullback of $V\to B\O(2)$ along the standard inclusion $B\O(1)\to B\O(2)$ is isomorphic to $\sigma\oplus\underline\R$, which means that a \pinp structure on a vector bundle $E\to X$ induces a \pincp structure: a spin structure on $E - f^*\sigma$, where $f$ is some map $X\to B\O(1)$, is equivalent data to a spin structure on $E - f^*\sigma\oplus\underline\R$.

Similarly, the pullback of $V\to B\O(2)$ along the map $B\U(1)\to B\O(2)$ induced by the inclusion $\U(1)\cong\SO(2)\hookrightarrow\O(2)$ is $L$. Thus, analogously to the way a \pinp structure defines a \pincp structure, a \spinc structure also induces a \pincp structure. In particular, complex manifolds have canonical \pincp structures induced from their canonical \spinc structures.
\begin{definition}
It follows from \cref{tangstrs}, part~\eqref{twspinc}, that a \spinc structure on a manifold $M$ is equivalent data to a complex line bundle $L\to M$ and a spin structure on $TM\oplus L$. If $M$ is an almost complex manifold, this is equivalent to the condition $c_1(TM\oplus L)\bmod 2 = 0$, i.e.\ $c_1(M)\equiv c_1(L)\bmod 2$ by the Whitney sum formula. Since $c_1(M) = c_1(\mathrm{Det}_\C(TM))$, we can also use the determinant bundle to characterize \spinc structures.

There is a canonical isomorphism $H^2(\CP^n;\Z)\cong\Z$ sending $c_1(\mathcal O(m))\mapsto m$, and $\mathrm{Det}_\C(T\CP^n)\cong \mathcal O(-(n+1))$, so a \spinc structure on $\CP^n$ is equivalent data to an integer $m$ such that $m\equiv n+1\bmod 2$: then $TM\oplus \mathcal O(m)$ admits a spin structure, and its spin structures are a torsor over $H^1(\CP^n;\Z/2)= 0$, so this spin structure is unique. We let $(\CP^n, m)$ denote the \spinc manifold $\CP^n$ with the \spinc structure defined by $\mathcal O(m)$ in this way.
\end{definition}
Thus the \spinc structure on $\CP^n$ induced by its complex structure is $(\CP^n, -(n+1))$. If we refer to $\CP^n$ as a \spinc manifold without clarifying, we mean this structure.
\begin{lemma}[{Freed-Hopkins~\cite[Theorem 9.87]{freed_reflection_2021}}]
\label{pincp_Z23}
There is an isomorphism $\Omega_4^{\Pin^{\tilde c+}}\cong\Z_2^3$.
\end{lemma}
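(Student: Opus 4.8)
The plan is to reduce to twisted spin bordism and compute via a Smith long exact sequence, then produce explicit manifold generators. By \cref{tangstrs}, part~\eqref{pincpc}, a pin\textsuperscript{$\tilde c+$} structure is the same as a $(B\O(2),-V)$-twisted spin structure, so $\Omega_n^{\Pin^{\tilde c+}}$ is the twisted spin bordism group $\Omega_n^{\Spin}(B\O(2);-V)$, i.e.\ $\pi_n$ of a shift of the Thom spectrum $M\Spin\wedge(B\O(2))^{-V}$. The rank-$2$ twisting bundle $V\to B\O(2)$ carries a Smith homomorphism: a generic section of $f^*V$ on a pin\textsuperscript{$\tilde c+$} manifold $M$ vanishes on a codimension-$2$ submanifold $N$ with normal bundle $f^*V|_N$, and since $TN\cong(TM\ominus f^*V)|_N$ the spin structure on $TM\ominus f^*V$ restricts to a spin structure on $TN$, making $N$ into a spin manifold with a map to $B\O(2)$. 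Following \cite{mathSmith}, this operation sits in a long exact sequence relating $\Omega_*^{\Pin^{\tilde c+}}$, the bordism $\Omega_*^{\Spin}(B\O(2))$, and the bordism of the complement of the sphere bundle $S(V)\simeq B\O(1)$, along which $V$ restricts to $\sigma\oplus\underline\R$ (the remark after \cref{tangstrs}); that complement term is pin\textsuperscript{$+$}-bordism $\Omega_*^{\Spin}(B\O(1);-\sigma)$. This is deliberately more hands-on than Freed--Hopkins' Adams spectral sequence computation.

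The second step is to feed in the auxiliary computations in low degrees. Pin\textsuperscript{$+$}-bordism is classical, $\Omega_*^{\Pin^+}=\Z_2,0,\Z_2,0,\Z_{16},0$ in degrees $0$ through $5$; and $\Omega_*^{\Spin}(B\O(2))$ (together with $\Omega_*^{\Spin}(\RP^\infty)$ if needed) can be read off from the Atiyah--Hirzebruch spectral sequence, using $H^*(B\O(2);\Z_2)=\Z_2[w_1,w_2]$, the integral homology $H_*(B\O(2);\Z)=\Z,\Z_2,\Z_2,\Z_2,\Z\oplus\Z_2$ in degrees $0$ through $4$, and the fact that the first differential is dual to $Sq^2$, which vanishes on $H^0$ and $H^1$ and so makes the spectral sequence very sparse here. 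Running the long exact sequence through degree $4$ and resolving the remaining group extensions pins $\Omega_4^{\Pin^{\tilde c+}}$ down as a finite abelian $2$-group of order $8$.

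It remains to show the group is $(\Z_2)^3$ rather than $\Z_8$ or $\Z_4\oplus\Z_2$, and to exhibit the generators used in \cref{appendix_thm}. The almost complex manifolds $\CP^1\times\CP^1$ and $\CP^2$ carry canonical spin\textsuperscript{$c$} structures and hence pin\textsuperscript{$\tilde c+$} structures by the discussion after \cref{tangstrs}, while $w_2(\RP^4)=0$ gives $\RP^4$ a pin\textsuperscript{$+$} and hence pin\textsuperscript{$\tilde c+$} structure. These are separated by $\Z/2$-valued invariants: the mod-$2$ characteristic numbers $\int_X w_2(TX)^2$ (nonzero only on $\CP^2$) and $\int_X w_1(TX)^4$ (nonzero only on $\RP^4$), together with one further invariant nonzero on $\CP^1\times\CP^1$ that is visible in the Smith sequence (the image of the class under the Smith homomorphism to $\Omega_2^{\Spin}(B\O(2))$, or equivalently a suitable reduced spin\textsuperscript{$c$}-index). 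Checking that the resulting $3\times 3$ matrix of invariant values is invertible over $\Z_2$ simultaneously forces $\Omega_4^{\Pin^{\tilde c+}}\cong(\Z_2)^3$ with no nontrivial extension and shows that $[\CP^1\times\CP^1],[\CP^2],[\RP^4]$ generate.

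I expect the main obstacle to be the middle steps: correctly assembling the Smith long exact sequence (identifying the connecting homomorphisms and the precise auxiliary bordism theory) and resolving the Atiyah--Hirzebruch differentials and extensions for $\Omega_*^{\Spin}(B\O(2))$, whose cohomology mixes a free part (Pontryagin classes) with $2$-torsion (the Bockstein of $w_1$), so one must track carefully which contributions survive. A secondary difficulty is isolating the third $\Z/2$-invariant and evaluating it on $\CP^1\times\CP^1$ without circularly invoking the twisted ABS map that \cref{appendix_thm} computes.
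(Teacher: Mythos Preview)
The paper does not prove this lemma: it is stated with attribution to Freed--Hopkins, who obtain it via an Adams spectral sequence computation, and no independent argument is given. What the paper \emph{does} prove in the surrounding appendix is the companion statement, \cref{the_pincp_mflds}, that $[\RP^4]$, $[(\CP^2,-1)]$, and $[\CP^1\times\CP^1]$ are linearly independent --- and that proof takes the isomorphism $\Omega_4^{\Pin^{\tilde c+}}\cong\Z_2^3$ as input to conclude these three classes form a basis.

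Your proposal therefore goes beyond what the paper attempts, and your route via the Smith long exact sequence is exactly the machinery the paper sets up in \cref{the_Smith_LES} for the twisted ABS calculation. So the strategy is sound and well aligned with the paper's toolkit. A few points of comparison. First, your two characteristic-number invariants $\int w_2(TX)^2$ and $\int w_1(TX)^4$ are precisely the partition functions appearing in~\eqref{clDW}; the paper's \cref{the_pincp_mflds} instead uses $\xi(X,E)=\int_X w_2(E)^2$, where $E$ is the rank-$2$ bundle carried by the \pincp structure --- this is a genuinely different invariant, sensitive to the structure and not just the underlying manifold, though both choices separate $(\CP^2,-1)$ from the other two. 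Second, for the third invariant detecting $\CP^1\times\CP^1$, the paper's route is effectively what you sketch: \cref{app_3} shows $\sm_V(\CP^1\times\CP^1)$ has nonzero Arf invariant in $\Omega_2^\Spin$, and the argument there (via exactness and the cyclicity of $\Omega_4^{\Pin^+}$) is independent of the twisted ABS map, so there is no circularity. Third, the paper computes $\Omega_2^\Spin(B\O(2))\cong\Z_2^3$ in \cref{2d_spin_O2} using a Mitchell--Priddy splitting rather than the Atiyah--Hirzebruch spectral sequence you propose; either works, but the splitting sidesteps the extension issues you flag.

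The genuine gap in your outline is the order-$8$ step: you assert that running the Smith long exact sequence ``pins down'' $|\Omega_4^{\Pin^{\tilde c+}}|=8$, but this requires knowing the image of $\Omega_4^{\Pin^+}\cong\Z_{16}$ under $i$ and the kernel of $\delta$ on $\Omega_2^\Spin(B\O(2))$, neither of which is immediate. The paper's \cref{delta_is_phi2} identifies $\delta$ with $\varphi_2$, and combining this with \cref{2d_spin_O2} one sees $\ker\delta\cong\Z_2^2$; one still needs $|\Im(i)|=2$, i.e.\ that $[\RP^4]$ has order exactly $2$ in $\Omega_4^{\Pin^{\tilde c+}}$, which the paper does not argue directly. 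Your proposed fix --- three independent $\Z_2$-valued homomorphisms on a group of order $8$ forces $(\Z_2)^3$ --- is correct once the order is established, so this is where the real work lies.
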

\begin{proposition}
\label{the_pincp_mflds}
The bordism classes of the following three manifolds are linearly independent in $\Omega_4^{\Pin^{\tilde c+}}$, and therefore form a basis.
\begin{enumerate}
    \item $\RP^4$, with \pincp structure induced from either of its two \pinp structures.
    \item $\CP^2$, with \pincp structure induced from the \spinc structure $(\CP^2, -1)$.
    \item $\CP^1\times\CP^1$, with \pincp structure induced from its complex structure.
\end{enumerate}
\end{proposition}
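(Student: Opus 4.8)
\textit{The plan.} By \cref{pincp_Z23}, $\Omega_4^{\Pin^{\tilde c+}}\cong\Z_2^3$, so it suffices to prove the three classes are linearly independent over $\bF_2$. Two of the three $\Z_2$-valued homomorphisms $\Omega_4^{\Pin^{\tilde c+}}\to\Z_2$ I would use are the Stiefel--Whitney numbers $\phi_1([M])=\langle w_1(TM)^4,[M]\rangle$ and $\phi_2([M])=\langle w_2(TM)^2,[M]\rangle$, which are well defined because they factor through the map forgetting down to unoriented bordism. From the total Stiefel--Whitney classes $w(T\RP^4)=(1+a)^5=1+a+a^4$, $w(T\CP^2)=(1+x)^3=1+x+x^2$, and $w(T(\CP^1\times\CP^1))=1$, one reads off
\[(\phi_1,\phi_2)(\RP^4)=(1,0),\qquad(\phi_1,\phi_2)(\CP^2)=(0,1),\qquad(\phi_1,\phi_2)(\CP^1\times\CP^1)=(0,0),\]
independently of the chosen \pincp structures, in particular of the $\spin^c$ structure on $\CP^2$. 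Hence $[\RP^4]$ and $[\CP^2]$ are already independent, and since $\phi_1$ and $\phi_2$ vanish on $\CP^1\times\CP^1$, any relation $[\CP^1\times\CP^1]=\epsilon_1[\RP^4]+\epsilon_2[\CP^2]$ forces $\epsilon_1=\epsilon_2=0$. So the whole statement reduces to the single claim that $[\CP^1\times\CP^1]$, with its complex-induced \pincp structure, is nonzero in $\Omega_4^{\Pin^{\tilde c+}}$.

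\textit{Showing $[\CP^1\times\CP^1]\neq 0$.} For this I would run the Smith long exact sequence machinery of \cite{mathSmith} on the description of \pincp structures as $(B\O(2),-V)$-twisted spin structures (\cref{tangstrs}). Cutting along the Poincar\'e dual of the orientation class $w_1(V)$ produces a long exact sequence relating $\Omega_*^{\Pin^{\tilde c+}}$, $\Omega_*^{\Spin^c}$, and an auxiliary oriented twisted-spin bordism theory, in which the map $\Omega_*^{\Spin^c}\to\Omega_*^{\Pin^{\tilde c+}}$ is the structure-induction map recalled just after \cref{tangstrs}. Since the \pincp structure on a complex manifold is by definition induced from its $\spin^c$ structure, $[\CP^1\times\CP^1]$ is the image of the class $[\CP^1\times\CP^1]_{\Spin^c}\in\Omega_4^{\Spin^c}\cong\Z^2$. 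Using the standard low-degree groups ($\Omega_3^{\Spin^c}=0$ and $\Omega_4^{\Spin^c}\cong\Z^2$, generated by $[(\CP^2,-1)]$ and $[\CP^1\times\CP^1]$) together with a computation of the auxiliary theory through degree $5$ and of the connecting homomorphism, one identifies $\ker\bigl(\Omega_4^{\Spin^c}\to\Omega_4^{\Pin^{\tilde c+}}\bigr)$ and checks that $[\CP^1\times\CP^1]_{\Spin^c}$ does not lie in it. I expect this final bookkeeping to be the main obstacle: correctly identifying the auxiliary twisted-spin theory, computing enough of its bordism, and tracking the connecting map precisely enough to pin down the kernel.

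\textit{A sanity check, and the role of the structure $(\CP^2,-1)$.} As a cross-check that does not feed into the argument, note that by naturality of the Atiyah--Bott--Shapiro construction the composite $\Omega_4^{\Spin^c}\to\Omega_4^{\Pin^{\tilde c+}}\xrightarrow{\ABS_{-2}}\KO_2\cong\Z_2$ is the mod-$2$ reduction of the $\spin^c$ Dirac index, i.e.\ of the Todd genus. This recovers $[\CP^1\times\CP^1]\neq 0$ once more, since $\mathrm{Todd}(\CP^1\times\CP^1)=1$, and it also explains why the structure $(\CP^2,-1)$ is selected instead of the complex structure $(\CP^2,-3)$: for $(\CP^2,m)$ the relevant index is $\tfrac{m^2-1}{8}$, which is even at $m=-1$, so that $\{[\RP^4],[(\CP^2,-1)],[\CP^1\times\CP^1]\}$ is a basis adapted to the invariants $\phi_1$, $\phi_2$, $\ABS_{-2}$ evaluated in \cref{appendix_thm}. (The linear-independence argument would go through verbatim with $(\CP^2,-3)$ as well.) I would nonetheless present the Smith-homomorphism route, since it establishes the generators without presupposing any evaluation of $\ABS_{-2}$.
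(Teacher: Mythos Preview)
Your opening is clean and correct: the tangential Stiefel--Whitney numbers $\phi_1=\int w_1^4$ and $\phi_2=\int w_2^2$ do separate $[\RP^4]$ and $[(\CP^2,-1)]$ and vanish on $\CP^1\times\CP^1$. It is worth noting that your $\phi_2$ is literally the paper's invariant $\xi=\int w_2(E)^2$: a \pincp structure makes $TM-E$ spin, hence $w_2(TM)=w_2(E)$. The paper's proof uses $\xi$ to isolate $[(\CP^2,-1)]$ and cites \cite[Proposition A.25]{DYY23} for the independence of $[\RP^4]$ and $[\CP^1\times\CP^1]$; you instead use $\phi_1$ to isolate $[\RP^4]$ and trade the citation for the weaker claim $[\CP^1\times\CP^1]\ne 0$. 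That trade is favorable in principle---you are asking for strictly less---but you do not close it.

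The gap is genuine. Your Smith sequence (cutting along $\det V$) has the form $\Omega_*^{\Spin^c}\to\Omega_*^{\Pin^{\tilde c+}}\to\Omega_{*-1}^{\zeta}\to\Omega_{*-1}^{\Spin^c}$ with $\zeta$ the $(B\O(2),\sigma-V)$-twisted spin structure; to show $[\CP^1\times\CP^1]_{\Spin^c}$ survives you must control the image of $\Omega_5^{\zeta}\to\Omega_4^{\Spin^c}$. One checks $w_1(\sigma-V)=0$ but $w_2(\sigma-V)=w_2(V)$, so $\zeta$ is an orientable theory with a residual $\O(2)$-twist, not one of the standard structures---this is computable but not lighter than the paper's route. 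The paper uses a different Smith sequence (\cref{the_Smith_LES}, with $\Pin^+$ rather than $\Spin^c$ as the fiber), and even there the eventual argument that $[\CP^1\times\CP^1]\notin\Im(\Omega_4^{\Pin^+}\to\Omega_4^{\Pin^{\tilde c+}})$ in \cref{app_3} leans on the independence of $[\RP^4]$ and $[\CP^1\times\CP^1]$ already imported from \cite{DYY23}. So neither Smith sequence yields $[\CP^1\times\CP^1]\ne 0$ for free.

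Your sanity check is suggestive but not a proof as written: the assertion that $\ABS_{-2}$ restricted along $\Omega_4^{\Spin^c}\to\Omega_4^{\Pin^{\tilde c+}}$ equals the mod-$2$ $\Spin^c$ Dirac index needs justification beyond ``naturality.'' Via \cref{FH_recast} this composite sends $(M,L)$ to the Arf invariant of a spin surface Poincar\'e-dual to $c_1(L)$, and identifying that with the parity of $\int_M e^{c_1(L)/2}\hat A(TM)$ requires an argument (e.g.\ a Rokhlin-type formula relating Arf invariants of characteristic surfaces to indices). If you supply that identification, it does complete your proof without presupposing \cref{appendix_thm}, since it uses only the construction of $\ABS_{-2}$, not its values on the full basis.
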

\begin{proof}
The fact that the bordism classes of $\RP^4$ and $\CP^1\times\CP^1$ are linearly independent in $\Omega_4^{\Pin^{\tilde c+}}$ is shown in~\cite[Proposition A.25]{DYY23}. Thus it suffices to find a bordism invariant $\xi\colon \Omega_4^{\Pin^{\tilde c+}}\to\Z_2$ which vanishes on $\RP^4$ and $\CP^1\times\CP^1$, but is nonzero on $(\CP^2, -1)$.

Given a \pincp manifold $X$, let $E\to X$ denote the rank-$2$ vector bundle associated to the \pincp structure: by \cref{tangstrs}, part~\eqref{pincpc}, a \pincp structure is a $(B\O(2), -V)$-twisted spin structure, so we have a map $f\colon X\to B\O(2)$, and $E\coloneqq f^*(V)$. By a standard argument due to Pontryagin~\cite{Pon47} (see Milnor-Stasheff~\cite[Theorem 4.9]{MS74}), $\xi\colon (X,E)\mapsto \int_X w_2(E)^2$ is a bordism invariant $\Omega_4^{\Pin^{\tilde c+}}\to\Z_2$.

If the \pincp structure on $X$ is induced from a \pinp structure, then as discussed above the pullback map of $E$ factors through $B\O(1)$ and therefore $E \cong L\oplus\underline\R$ for some real line bundle $L$. Thus in this case $w_2(E) = 0$, so $\xi(\RP^4) = 0$.

To show $\xi(\CP^1\times\CP^1) = 0$, we use that since the \pincp structure on $\CP^1\times\CP^1$ is induced from its complex structure,
\begin{equation}
\begin{aligned}
    \xi(\CP^1\times\CP^1) &= \int_{\CP^1\times\CP^1} w_2(\mathrm{Det}_{\C}(T(\CP^1\times\CP^1))\\
    &= \int_{\CP^1\times\CP^1} c_1(\mathrm{Det}_{\C}(T(\CP^1\times\CP^1))\bmod 2\\
    &= \int_{\CP^1\times\CP^1} c_1(\CP^1\times\CP^1)\bmod 2.
\end{aligned}
\end{equation}
Since $\CP^1\cong S^2$ has a spin structure, so does $\CP^1\times\CP^1$, and therefore its first Chern class is even, so $\xi(\CP^1\times\CP^1) = 0$.

For $(\CP^2, -1)$, $E = \mathcal O(-1)$, which has odd Chern class, so $w_2(E)\ne 0$. Since $H^*(\CP^2;\Z_2)\cong\Z_2[a]/(a^2)$ with $\lvert a\rvert  = 2$, then as soon as we know $w_2(E)\ne 0$ we see $w_2(E)^2$ is the unique nonzero element of $H^4(\CP^2;\Z_2)$, so $\xi(\CP^2, -1) =1$.
\end{proof}
Now that we know a set of generators, we can state the main theorem of this appendix, which is the calculation of the twisted Atiyah-Bott-Shapiro map on these generators.
\begin{theorem}\label{appendix_thm}
The twisted Atiyah-Bott-Shapiro map $\ABS_{-2}\colon \Omega_4^{\Pin^{\tilde c+}}\to\KO_2\cong\Z_2$ sends $[\RP^4]\mapsto 0$, $[\CP^2, -1]\mapsto 0$, and $[\CP^1\times\CP^1]\mapsto 1$.
\end{theorem}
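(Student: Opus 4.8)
The plan is to realise $\ABS_{-2}$ as the \emph{classical} Atiyah--Bott--Shapiro map precomposed with a Smith homomorphism that strips off the twist, and then to evaluate that Smith homomorphism on the three manifolds of \cref{the_pincp_mflds}; this makes concrete the ``indirect'' argument promised at the start of this appendix.

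By \cref{tangstrs}, a \pincp structure on a closed $n$-manifold $M$ amounts to a map $f\colon M\to B\O(2)$ together with a spin structure on $TM-f^{*}V$; write $E\coloneqq f^{*}V$ for the associated rank-$2$ real bundle. A generic section of $E$ vanishes along a closed codimension-$2$ submanifold $N\subset M$ with normal bundle $E|_{N}$, so $TM|_{N}-E|_{N}\cong TN$ acquires a genuine spin structure. Standard transversality arguments (different sections give bordant $N$, and a \pincp null-bordism of $M$ restricts to a spin null-bordism of $N$) show that $M\mapsto[N]$ descends to a homomorphism
\begin{equation*}
    s\colon\Omega_{n}^{\Pin^{\tilde c+}}\longrightarrow\Omega_{n-2}^{\Spin},
\end{equation*}
the Smith homomorphism attached to the twist $(B\O(2),-V)$ in the sense of~\cite{mathSmith}, which fits into a corresponding long exact sequence. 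The essential step is then to show that, in the relevant degree, $\ABS_{-2}$ coincides with $\ABS\circ s$. I would deduce this from the compatibility established in~\cite{mathSmith} between Smith homomorphisms and index theory: the Smith long exact sequence is covered by a ladder of long exact sequences in $\KO$-homology whose vertical maps are the (twisted) ABS maps, and concretely the twisted Clifford-linear Dirac operator of \cref{ABSAII} localizes near $N$ by a product formula, the two normal directions contributing the Bott generator, so that its index equals the index of the untwisted Dirac operator on $N$. Identifying $\ABS_{-2}$ as the pertinent vertical map in that ladder and controlling the neighbouring bordism terms so that the relevant commuting square determines it is, I expect, the main obstacle; the remaining steps are routine.

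Granting $\ABS_{-2}=\ABS\circ s$ in dimension $4$, it remains to compute $s$ on the basis of \cref{the_pincp_mflds}. For $\RP^{4}$ with a \pincp structure induced from a \pinp structure, $E\cong\sigma\oplus\underline{\R}$ (pull $V$ back along $B\O(1)\to B\O(2)$), so the constant section of the $\underline{\R}$-summand is nowhere zero and $N=\varnothing$; hence $s[\RP^{4}]=0$. For $(\CP^{2},-1)$ the bundle $E$ is the realification of $\mathcal O(-1)$, whose Euler class generates $H^{2}(\CP^{2};\Z)$, so $N$ is a projective line $\CP^{1}\cong S^{2}$ with its unique, hence bounding, spin structure, and $[N]=0$ in $\Omega_{2}^{\Spin}$. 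For $\CP^{1}\times\CP^{1}$ with the \pincp structure from its complex structure, $E$ is $L_{\R}$ for a line bundle $L$ with $c_{1}(L)=\pm(2a+2b)$, where $a,b$ are the hyperplane classes of the two factors, so $N$ is a smooth curve of bidegree $(2,2)$, which has genus $1$ by the adjunction formula; thus $N\cong T^{2}$, and the induced spin structure is the non-bounding (Lie-group) one. This last point can be checked directly via the associated theta characteristic (using that the canonical bundle $\mathcal O(-2,-2)$ of $\CP^{1}\times\CP^{1}$ has the square root $\mathcal O(-1,-1)$), or obtained for free: $\ABS$ is injective in degree $2$ while $\ABS_{-2}$ is surjective~\cite[Corollary 9.93]{freed_reflection_2021}, so together with $s[\RP^{4}]=s[\CP^{2},-1]=0$ this forces $s[\CP^{1}\times\CP^{1}]$ to generate $\Omega_{2}^{\Spin}\cong\Z_{2}$.

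Finally, the classical ABS map $\Omega_{2}^{\Spin}\to\KO_{2}$ is an isomorphism (ABS is an isomorphism in degrees $0$ through $7$), so we conclude $\ABS_{-2}[\RP^{4}]=0$, $\ABS_{-2}[\CP^{2},-1]=0$, and $\ABS_{-2}[\CP^{1}\times\CP^{1}]$ is the generator of $\KO_{2}\cong\Z_{2}$, as claimed. As a consistency check, this recovers the surjectivity of~\cite[Corollary 9.93]{freed_reflection_2021} and identifies $\ker\ABS_{-2}=\langle[\RP^{4}],[\CP^{2},-1]\rangle$, of the expected order $4$ inside $\Omega_{4}^{\Pin^{\tilde c+}}\cong\Z_{2}^{3}$.
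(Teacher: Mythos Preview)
Your overall strategy matches the paper's: both factor $\ABS_{-2}$ as the classical $\ABS$ composed with a Smith-type map out of \pincp bordism, and both compute on the three generators. The factorization you call ``the main obstacle'' is exactly \cref{FH_recast}, which the paper simply cites from Freed--Hopkins~\cite[Proposition 10.27]{freed_reflection_2021}; your index-theoretic sketch for it is plausible but the paper does not attempt an independent proof of that step. Your computations for $\RP^4$ and $(\CP^2,-1)$ are essentially the paper's \cref{pinp_vanishing} and \cref{app_2}.

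Where you and the paper genuinely diverge is $\CP^1\times\CP^1$. You attempt a direct geometric computation: identify the Smith image as a bidegree-$(2,2)$ curve $C\cong T^2$ and argue that the induced spin structure is the non-bounding one. The paper instead argues indirectly via the Smith long exact sequence of \cref{the_Smith_LES}: it shows $\sm_V[\CP^1\times\CP^1]\neq 0$ in $\Omega_2^{\Spin}(B\O(2))$ (because $\Omega_4^{\Pin^+}$ is cyclic and $[\RP^4],[\CP^1\times\CP^1]$ are linearly independent, so $[\CP^1\times\CP^1]$ is not in the image of $i$), then separately kills the two coordinates $\varphi_2,\varphi_3$ of $\boldsymbol\varphi$ orthogonal to the Arf invariant $\varphi_1$ (\cref{delta_is_phi2} and the Whitney-sum/Wu-formula calculation in \cref{app_3}), forcing $\varphi_1 = 1$.

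Your direct approach has a gap. The theta-characteristic sketch is incomplete: knowing that $K_S$ has the square root $\mathcal O(-1,-1)$ pins down the spin structure on $TS$, but the Smith construction hands you the spin structure on $TC$ via the identification $(TS-E)|_C\cong TC$ using the \emph{specific} isomorphism $\nu_C\cong E|_C$ coming from the differential of the section. Determining which of the four theta characteristics on the elliptic curve $C$ this produces requires tracking that identification (equivalently, the induced square root of $\nu_C$), which you have not done. Your fallback---invoking Freed--Hopkins' surjectivity~\cite[Corollary 9.93]{freed_reflection_2021}---is logically valid but defeats the appendix's stated purpose of giving an independent proof of that very surjectivity. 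A minor side point: your map $s$ lands in $\Omega_{n-2}^{\Spin}$, but the Smith long exact sequence of~\cite{mathSmith} is for the map to $\Omega_{n-2}^{\Spin}(B\O(2))$ that remembers the bundle; the composite with the forgetful map $c$ does not sit in such a sequence, so your aside about a ``corresponding long exact sequence'' for $s$ itself is not quite right.
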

The key fact that enables us to get at $\ABS_{-2}$ is:
\begin{proposition}[{Freed-Hopkins~\cite[Proposition 10.27]{freed_reflection_2021}}]
\label{FH_recast}
For $-3\le s\le -1$, the twisted Atiyah-Bott-Shapiro map $\ABS_s$ factors as
\begin{equation}
    \Omega_n^{H(s)} \xrightarrow{\sm_V} \Omega_{n+s}^\Spin(B\O(-s)) \xrightarrow{c} \Omega^\Spin_{n+s} \xrightarrow{\ABS_0} \KO_{n+s},
\end{equation}
where $V\to B\O(-s)$ is the tautological bundle, $\sm_V$ is the \term{Smith homomorphism} defined by taking a manifold representative of the Poincaré dual of the Euler class of $V$, and $c$ is the map forgetting the data of the map to $B\O(-s)$.
\end{proposition}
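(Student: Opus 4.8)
The plan is to unwind both sides of the claimed factorization as $\Cl$-linear indices of Dirac-type operators and then identify them by a localization argument, keeping a homotopy-theoretic reduction in reserve. First I would unwind the composite on the right. By \cref{tangstrs} and its counterparts for the other two values of $s$ (see Freed--Hopkins~\cite[\S10]{freed_reflection_2021}), for $-3\le s\le -1$ an $H(s)$-structure on a closed $n$-manifold $M$ amounts to a map $f\colon M\to B\O(-s)$ together with a spin structure on the virtual bundle $TM-f^*V$, where $V\to B\O(-s)$ is the tautological $(-s)$-plane bundle. For such an $M$, the Smith homomorphism $\sm_V$ picks a section of $E\coloneqq f^*V$ transverse to zero, with zero locus $Z^{n+s}\subseteq M$; since the normal bundle of $Z$ is $E|_Z$ and $TM|_Z\cong TZ\oplus E|_Z$, the virtual bundle $TZ\cong (TM-E)|_Z$ inherits a spin structure and $f|_Z\colon Z\to B\O(-s)$ records $E|_Z$, giving a class in $\Omega^{\Spin}_{n+s}(B\O(-s))$; then $c$ forgets $f|_Z$ and $\ABS_0$ returns the $\Cl_{n+s}$-linear index of the Dirac operator of the spin manifold $Z$. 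Thus the statement is equivalent to: \emph{the twisted $\Cl_{n+s}$-linear Dirac index of $M$ in the sense of \cref{ABSAII} equals the $\Cl_{n+s}$-linear Dirac index of $Z$}.

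I would prove this equality by excision. Under the $(B\O(-s),-V)$-description, Freed--Hopkins' twisted Dirac bundle $P_{H_n(s)}\times_{H_n(s)}(\Cl_n\otimes\Cl_{-s})$ is --- up to a Clifford--Morita identification --- the spinor-type bundle of $TM-E$ tensored with the Clifford-algebra bundle $\Cl(E)$, and the twisted Dirac operator is the Dirac operator of $TM-E$ coupled to $\Cl(E)$. Over $M\setminus Z$ the chosen section of $E$ is nowhere zero, splitting off a trivial line summand and hence endowing $\Cl(E)|_{M\setminus Z}$ with a free graded $\Cl_1$-action, which forces the index contribution there to vanish; so the index is supported in a tubular neighbourhood $D(E|_Z)$ of $Z$. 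Since $D(E|_Z)$ retracts onto $Z$, the relative index theorem for Dirac operators with invertible symbol at infinity (of Callias type) identifies this contribution with the $\Cl_{n+s}$-linear Dirac index of $Z$, the Clifford bookkeeping on the $(-s)$ normal directions producing exactly the degree shift from $\KO_n$ to $\KO_{n+s}$. Alternatively one can avoid index theory entirely: both maps refine to maps of $\MTSpin$-module spectra $\MTH(s)\simeq \MTSpin\wedge MT\O(-s)\to\Sigma^{-s}KO$ (multiplicativity of the twisted Dirac construction under products with spin manifolds is in~\cite[\S9.2, \S10]{freed_reflection_2021}, cf.\ \cref{dimred_ansatz}), hence are determined by spectrum maps $MT\O(-s)\to\Sigma^{-s}KO$; the one from the composite factors through the unit $S^{-s}\to\Sigma^{-s}KO$, and one compares the two maps via the Smith cofiber sequence $\Sigma^{-1}MT\O(-s-1)\to MT\O(-s)\to\Sigma^\infty B\O(-s)_+$ by induction on $-s$, the base case $s=0$ being trivial.

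The hard part will be the Clifford-algebra bookkeeping in the previous paragraph: making precise the identification of Freed--Hopkins' twisted spinor bundle with ``$\mathcal S_{TM-E}\otimes\Cl(E)$'' and checking that, after localization, pairing the $(-s)$ normal directions against the $\Cl_{-s}$-factor converts the $\Cl_n\otimes\Cl_{-s}$-linear structure on $M$ into the $\Cl_{n+s}$-linear structure on $Z$, all compatibly with Freed--Hopkins' conventions (\cite[Lemma 9.27, \S9.2.2]{freed_reflection_2021}). This is the Dirac-operator-level manifestation of the Clifford periodicity isomorphisms $\KO_i(A\otimes\Cl_{\pm 1})\cong\KO_{i\mp 1}(A)$ used earlier (e.g.\ in \cref{AIIKtheory}); the analytic input (excision for operators invertible at infinity) is then standard. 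In the homotopy-theoretic route the corresponding obstacle is computing $\ABS_s$ on the bottom cell of $\MTH(s)$ and controlling $[\Sigma^\infty B\O(-s)_+,\Sigma^{-s}KO]$; for the dimensions actually used in \S\ref{example_section} ($n\le 6$, $s\in\{-1,-2,-3\}$) one can instead verify the factorization directly on explicit small generating sets, such as the one for $\Omega_4^{\Pin^{\tilde c+}}$ in \cref{the_pincp_mflds}.
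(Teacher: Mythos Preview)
The paper does not give its own proof of this proposition: it is cited as Freed--Hopkins' result, and the only argument the paper supplies is the short \cref{FH_recast}-adjacent remark explaining why the statement as written matches Freed--Hopkins' original formulation (their zero-section map is identified with $\sm_V$ via \cite[Proposition 3.17]{mathSmith}, and their ``tensor with the trivial line bundle'' is the same as composing with the forgetful map $c$). So the paper's ``proof'' is a two-line translation, not an independent argument.

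Your proposal is therefore doing strictly more than the paper: you are sketching a direct proof of the factorization rather than appealing to Freed--Hopkins. Both of your suggested routes are reasonable and are in the spirit of how such statements are proved. The excision/localization argument is the index-theoretic shadow of the Thom isomorphism in $\KO$-theory, and the homotopy-theoretic route via $\MTSpin$-module maps $MT\O(-s)\to\Sigma^{-s}\KO$ is closer to how Freed--Hopkins themselves argue in \cite[\S 10]{freed_reflection_2021}. You correctly identify the genuine work in the analytic approach as the Clifford bookkeeping: making precise the identification of the twisted spinor bundle with something built from a ``spinor bundle of $TM-E$'' when $TM-E$ is only virtual, and tracking the $\Cl_n\otimes\Cl_{-s}\to\Cl_{n+s}$ identification through the normal-bundle collapse. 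That step is not hard in principle but is exactly where conventions bite, and your sketch does not yet pin it down. If you want a self-contained proof rather than a citation, the cleanest option is probably your second route, since it reduces to a comparison of two maps out of a Thom spectrum and avoids the virtual-bundle analytic subtleties altogether.
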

See Miyazawa~\cite[Theorem 3.1]{miyazawa_localization_2025} for a generalization of this theorem.
\begin{remark}
Freed-Hopkins do not define their map in exactly this way. Instead of $\sm_V$, they use the map defined by the zero section of the tautological bundle; see~\cite[Proposition 3.17]{mathSmith} for a proof identifying this with the Smith homomorphism. Instead of $\ABS_0\circ c$, they tensor $\ABS_0$ with a map to $\KO$-theory corresponding to the trivial line bundle on $B\O(-s)$, but the trivial line bundle pulls back from the point so we may use the forgetful map $c$.
\end{remark}
\begin{remark}
The Euler class mentioned in \cref{FH_recast} is not the usual Euler class, but an analogue in the spin cobordism generalized cohomology theory. This Euler class has subtle behavior and can be tricky to calculate: see~\cite[Appendix B]{mathSmith}. For this reason, we will for the most part calculate the Smith homomorphism indirectly in this section.
\end{remark}
The Smith homomorphisms in \cref{FH_recast} can be fit into long exact sequences which often can be explicitly computed. Focusing again on $s = -2$, we have:
\begin{proposition}\label{the_Smith_LES}
There is a long exact sequence
\begin{equation}\label{appendix_Smith_LES}
    \dotsb\to
    \Omega_4^{\Pin^+} \xrightarrow{i}
    \Omega_4^{\Pin^{\tilde c+}} \xrightarrow{\sm_V}
    \Omega_2^{\Spin}(B\O(2)) \xrightarrow{\delta}
    \Omega_3^{\Pin^+} \to\dotsb
\end{equation}
where $i$ is the map on bordism corresponding to the induced \pincp structure on a \pinp manifold described above and $\delta$ applied to the bordism class of a spin manifold $\Sigma$ and a rank-$2$ vector bundle $E\to \Sigma$ is the bordism class of the sphere bundle $S(E)$ with a certain \pinp structure.
\end{proposition}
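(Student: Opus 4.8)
The plan is to identify \eqref{appendix_Smith_LES} with the homotopy long exact sequence of a \emph{Smith cofiber sequence} of Thom spectra, in the sense of \cite{mathSmith}, taking $V\to B\O(2)$ to be the tautological rank-$2$ bundle. The input is \cref{tangstrs}, part~\eqref{pincpc}: a \pincp structure is a $(B\O(2),-V)$-twisted spin structure, so under Pontryagin--Thom $\Omega_n^{\Pin^{\tilde c+}}\cong\pi_{n-2}(M\Spin\wedge B\O(2)^{-V})$, the shift by $2=\mathrm{rank}\,V$ being the usual one for a twist by a virtual bundle of negative rank. From the disc bundle $D(V)\simeq B\O(2)$ and sphere bundle $S(V)$ one has the standard cofiber sequence $S(V)_+\to D(V)_+\to B\O(2)^V$; applying the exact fiberwise Thomification functor $Y\mapsto Y\wedge_{B\O(2)}B\O(2)^{-V}$ gives a cofiber sequence of spectra
\begin{equation}
    S(V)^{-\pi^*V}\longrightarrow B\O(2)^{-V}\longrightarrow B\O(2)_+,
\end{equation}
whose third arrow is the Euler/Thom map. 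Smashing with $M\Spin$ and taking $\pi_*$ yields a long exact sequence; the task is then to match its terms and maps with \eqref{appendix_Smith_LES}.

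For the terms: the middle one is $\Omega_*^{\Pin^{\tilde c+}}$ by the above. For the sphere bundle, $\O(2)$ acts transitively on $S^1$ with stabilizer $\O(1)$, so $S(V)\simeq B\O(1)$; under this identification $V|_{S(V)}$ splits as the radial line $\underline\R$ (canonically trivialized) together with its orthogonal complement $\sigma$, the tautological line on $B\O(1)$, so $S(V)^{-\pi^*V}\simeq\Sigma^{-1}B\O(1)^{-\sigma}$. Since \pinp is $(B\O(1),-\sigma)$-twisted spin, the same Pontryagin--Thom description gives that the $M\Spin$-homotopy of this spectrum, in the relevant degree, is $\Omega_*^{\Pin^+}$. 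Finally the Euler/Thom map lowers geometric degree by $\mathrm{rank}\,V=2$, so the third term contributes $\Omega_{*-2}^{\Spin}(B\O(2))$. Re-indexing, the resulting long exact sequence is $\dotsb\to\Omega_n^{\Pin^+}\to\Omega_n^{\Pin^{\tilde c+}}\to\Omega_{n-2}^{\Spin}(B\O(2))\to\Omega_{n-1}^{\Pin^+}\to\dotsb$, which at $n=4$ is exactly \eqref{appendix_Smith_LES}.

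It remains to check the maps. Because $B\O(1)\hookrightarrow B\O(2)$ is compatible with $V|_{B\O(1)}=\sigma\oplus\underline\R$, the first map of the cofiber sequence induces on bordism the operation sending a \pinp manifold to itself equipped with the induced \pincp structure of the discussion after \cref{tangstrs} --- this is $i$. The Euler/Thom map realizes the Smith homomorphism $\sm_V$ (compare the identification used in \cref{FH_recast} and \cite{mathSmith}): on a \pincp manifold $M$ with classifying map $f\colon M\to B\O(2)$ it returns a submanifold Poincaré dual to the (spin-cobordism) Euler class of $f^*V$, carrying the residual map to $B\O(2)$ and an induced spin structure. And the connecting map $\delta$ is, by the general description of connecting maps of Smith cofiber sequences in \cite{mathSmith}, the sphere-bundle construction: a spin $(n-2)$-manifold $\Sigma$ with a rank-$2$ bundle $E=g^*V$ is sent to the sphere bundle $S(E)$, with \pinp structure obtained from the spin structure on $T\Sigma\oplus E$ via the stable isomorphism $TS(E)\oplus\ell\cong\pi^*(T\Sigma\oplus E)$, where $\ell\to S(E)$ is the tautological line.

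The points I expect to absorb the most effort --- the ``main obstacle'' --- are the degree-and-twist bookkeeping that turns the abstract cofiber-sequence long exact sequence into the displayed one (in particular how the rank-mismatched twists $-V$ and $-\sigma$ interact with the suspension contributed by the radial line $\underline\R$), and pinning down precisely which \pinp structure $\delta$ attaches to the sphere bundle, i.e.\ the exact sign and line-bundle conventions in the final step. Both are handled in the general theory of \cite{mathSmith}, but need to be transported carefully to the normalization used here.
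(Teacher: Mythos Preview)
Your proposal is correct and takes essentially the same approach as the paper: both obtain \eqref{appendix_Smith_LES} by specializing the Smith cofiber/long exact sequence of \cite{mathSmith} with $X=B\O(2)$, $E=-V$, $F=V$, and both use the identifications $S(V)\simeq B\O(1)$, $V|_{B\O(1)}\cong\sigma\oplus\underline\R$, together with \cref{tangstrs} to recognize the three bordism theories. The only difference is packaging: the paper quotes \cite[Corollary 5.8]{mathSmith} as a black box and plugs in, whereas you rebuild the underlying Thom-spectrum cofiber sequence explicitly before identifying terms.
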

\begin{proof}
Let $E\to X$ be a virtual vector bundle and $F\to X$ be a vector bundle of rank $r$. Let $p\colon S(F)\to X$ be the sphere bundle of $F$. Introduce the following three tangential structures:
\begin{enumerate}
    \item a $\xi$-structure is a $(S(F), p^*(E))$-twisted spin structure,
    \item a $\eta$-structure is an $(X, E)$-twisted spin structure, and
    \item a $\zeta$-structure is an $(X, E\oplus F)$-twisted spin structure.
\end{enumerate}
Then~\cite[Corollary 5.8]{mathSmith} there is a long exact sequence
\begin{equation}
\label{general_Smith_LES}
\dotsb\to
\Omega_n^{\xi} \xrightarrow{p_*}
\Omega_n^{\eta} \xrightarrow{\sm_F}
\Omega_{n-r}^{\zeta} \xrightarrow{\delta}
\Omega_{n-1}^{\xi} \to \dotsb,
\end{equation}
called the \term{Smith long exact sequence}, where $\sm_F$ is the Smith homomorphism associated to $F$ and $\delta$ is induced by taking the sphere bundle of the pullback of $F$ with a certain $\xi$-structure.

Let $X = B\O(2)$ and $V\to B\O(2)$ denote the tautological bundle. Then let $E = -V$ and $F = V$, so that a $\zeta$-structure is a spin structure with a map to $B\O(2)$ and, by \cref{tangstrs}, a $\eta$-structure is equivalent to a \pincp structure.

There is a homotopy equivalence $S(V)\simeq B\O(1)$ such that the bundle map $p\colon S(V)\to B\O(2)$ is identified with $i\colon B\O(1)\to B\O(2)$,\footnote{This is a standard result; one non-original reference is~\cite[Example 7.57]{mathSmith}.} so a $\xi$-structure is a $(B\O(1), -i^*(V))$-twisted spin structure; as noted above, this is equivalent to a $(B\O(1), -\sigma)$-twisted spin structure and therefore by \cref{tangstrs} a \pinp structure. This finishes the identification of this Smith long exact sequence with the one in the theorem statement.
\end{proof}
\begin{corollary}\label{pinp_vanishing}
For any closed \pinp $4$-manifold $X$, $\ABS_{-2}(X) = 0$. In particular, $\ABS_{-2}(\RP^4) = 0$.
\end{corollary}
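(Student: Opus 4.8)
The plan is to combine the two structural results just established: the factorization of $\ABS_{-2}$ through the Smith homomorphism in \cref{FH_recast}, and the exactness of the Smith long exact sequence in \cref{the_Smith_LES}. First I would recall that by \cref{FH_recast} with $s = -2$, the map $\ABS_{-2}$ factors as
\begin{equation*}
\Omega_4^{\Pin^{\tilde c+}} \xrightarrow{\sm_V} \Omega_2^\Spin(B\O(2)) \xrightarrow{c} \Omega_2^\Spin \xrightarrow{\ABS_0} \KO_2,
\end{equation*}
so in particular $\ABS_{-2}$ vanishes on any class in the kernel of $\sm_V$.

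Next I would observe that in the Smith long exact sequence~\eqref{appendix_Smith_LES}, the map $i\colon \Omega_4^{\Pin^+}\to\Omega_4^{\Pin^{\tilde c+}}$ sits immediately to the left of $\sm_V$, so exactness gives $\im(i) = \ker(\sm_V)$. Therefore, for any closed \pinp $4$-manifold $X$, the class $i[X]\in\Omega_4^{\Pin^{\tilde c+}}$ obtained by equipping $X$ with the induced \pincp structure lies in $\ker(\sm_V)$, whence $\ABS_{-2}(X) = \ABS_0\bigl(c(\sm_V(i[X]))\bigr) = 0$. Finally, since $\RP^4$ carries a \pincp structure induced from a \pinp structure (as used in \cref{the_pincp_mflds}), its \pincp bordism class is of the form $i[\RP^4]$, so the general statement specializes to $\ABS_{-2}(\RP^4) = 0$.

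There is essentially no obstacle here: the content has been front-loaded into \cref{FH_recast,the_Smith_LES}, and the corollary is a formal consequence of exactness together with the factorization. The only point requiring a word of care is the compatibility of the two maps both named $i$ — the one in \cref{the_Smith_LES} arising as $p_*$ for the sphere bundle $S(V)\simeq B\O(1)$, and the one described in the excerpt as ``the induced \pincp structure on a \pinp manifold'' — but this identification is exactly what the last paragraph of the proof of \cref{the_Smith_LES} checks, so I would simply cite it.
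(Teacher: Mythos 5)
Your argument is the paper's argument: you invoke \cref{FH_recast} to factor $\ABS_{-2}$ through $\sm_V$ and exactness of~\eqref{appendix_Smith_LES} to see that $\sm_V$ vanishes on the image of $i$, which contains the class of any closed \pinp $4$-manifold. The paper states this more tersely (it only needs $\sm_V\circ i = 0$, not the full $\im(i)=\ker(\sm_V)$), but the content is identical, and your closing remark about matching the two descriptions of $i$ is indeed handled inside the proof of \cref{the_Smith_LES}.
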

\begin{proof}
Exactness of~\eqref{appendix_Smith_LES} implies $\sm_V\circ i = 0$, so $\sm_V(X) = 0$; \cref{FH_recast} tells us that $\ABS_{-2}$ factors through $\sm_V$, so $\ABS_{-2}(X) = 0$ too.
\end{proof}
That's one-third of \cref{appendix_thm} right there!

For $(\CP^2, -1)$ and $\CP^1\times\CP^1$ we have to perform a more detailed analysis: $\CP^2$ has no \pinp structure (as that combined with an orientation would be a spin structure, but $\CP^2$ is not spin), and though $\CP^1\times\CP^1$ has a \pinp structure, that structure does not induce the \pincp structure we use in this section.

\begin{definition}
Define maps $\varphi_1,\varphi_2,\varphi_3\colon\Omega_2^\Spin(B\O(2))\to\Z_2$ as follows on a closed spin $2$-manifold $\Sigma$ with rank-$2$ vector bundle $E\to\Sigma$.
\begin{enumerate}
    \item $\varphi_1 = \ABS_0\circ c$, as in \cref{FH_recast}.
    \item $\varphi_2$ is the composition
    \begin{equation}
        \Omega_2^\Spin(B\O(2))\xrightarrow{\det} \Omega_2^\Spin(B\O(1))\xrightarrow{\sm_\sigma}
        \Omega_1^{\Pin^-}\cong\Z_2,
    \end{equation}
    where $\det$ is induced from the determinant map $\O(2)\to\O(1)$, $\sm_\sigma$ is the Smith homomorphism introduced by Anderson-Brown-Peterson~\cite{ABP69}, which takes a manifold representative of the Poincaré dual of the Euler class of the principal $\O(1)$-bundle, and the isomorphism $\Omega_1^{\Pin^-}\cong\Z_2$ was established by (\textit{ibid.}, Theorem 5.1).
    \item $\varphi_3(\Sigma, E) = \int_\Sigma w_2(E)$.
\end{enumerate}
\end{definition}
\begin{proposition}\label{2d_spin_O2}
The following map is an isomorphism:
\begin{equation}\label{what_is_spinO2}
    \boldsymbol\varphi \coloneqq (\varphi_1,\varphi_2,\varphi_3)\colon \Omega_2^\Spin(B\O(2))\overset\cong\longrightarrow \Z_2\oplus \Z_2\oplus\Z_2.
\end{equation}
The bordism classes of the following manifolds form the basis for $\Omega_2^\Spin(B\O(2))$ dual to $(\varphi_1,\varphi_2,\varphi_3)$.
\begin{itemize}
    \item $(\Snb^1\times \Snb^1, \underline\R^2)$, where $\Snb^1$ refers to the nonbounding spin structure on the circle. $\boldsymbol\varphi(\Snb^1\times \Snb^1, \underline\R^2) = (1,0,0)$.
    \item $(\Snb^1\times S_b^1, \sigma_R\oplus\underline\R)$, where $S_b^1$ refers to the bounding spin structure on the circle and $\sigma_R\to \Snb^1\times S_b^1$\footnote{The ``R'' 
in $\sigma_R$ refers to the \textbf{R}ight-hand factor of $S^1$. } is the pullback of the Möbius bundle $\sigma\to S^1$ by the projection onto the second factor of $\Snb^1\times S_b^1$. $\varphi(\Snb^1\times S_b^1, \sigma_R\oplus\underline\R) = (0,1,0)$.
    \item $(\CP^1, \mathcal O(1))$: $\boldsymbol\varphi(\CP^1, \mathcal O(1)) = (0,0,1)$.
\end{itemize}
\end{proposition}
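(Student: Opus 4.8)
The plan is to first pin down $\Omega_2^\Spin(B\O(2))$ abstractly, show it has order $8$, and then verify that $\boldsymbol\varphi$ sends the three listed manifolds to a basis of $\Z_2\oplus\Z_2\oplus\Z_2$; the isomorphism statement and the identification of the dual basis then follow simultaneously.

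\textbf{Step 1: the order of $\Omega_2^\Spin(B\O(2))$.} I would run the homological Atiyah--Hirzebruch spectral sequence $E^2_{p,q} = H_p(B\O(2);\Omega_q^\Spin)\Rightarrow\Omega_{p+q}^\Spin(B\O(2))$. Using $H^*(B\O(2);\Z_2) = \Z_2[w_1,w_2]$ and the low-degree integral homology it forces (notably $H_1(B\O(2);\Z) = \Z_2$ and $H_2(B\O(2);\Z) = \Z_2$, the latter being dual to $W_3 = \beta w_2$), the only nonzero contributions in total degree $2$ are $E^2_{2,0}\cong\Z_2$, $E^2_{1,1}\cong\Z_2$ and $E^2_{0,2}\cong\Z_2$. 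Every $d^2$ into or out of these groups is dual to $Sq^2$ applied to a class of degree $\le 1$ and therefore vanishes, and the bottom column $p=0$ survives because the basepoint section splits off $\Omega_2^\Spin(\pt)$; hence the spectral sequence collapses in total degree $2$ and $\lvert\Omega_2^\Spin(B\O(2))\rvert = 8$. (One can alternatively extract this order from the Smith long exact sequence of \cref{the_Smith_LES} together with $\Omega_4^{\Pin^{\tilde c+}}\cong\Z_2^3$ from \cref{pincp_Z23} and $\Omega_3^{\Pin^+} = 0$.)

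\textbf{Step 2: evaluating $\boldsymbol\varphi$ on the three manifolds.} Seven of the nine entries are immediate. The manifold $(\Snb^1\times\Snb^1,\underline\R^2)$ has constant classifying map, so $\varphi_2$ and $\varphi_3$ vanish on it, while $\varphi_1 = \ABS_0$ of the Arf-invariant-$1$ spin torus is the generator of $\KO_2$, giving $(1,0,0)$. For $(\CP^1,\mathcal O(1))$: $[\CP^1]$ is spin-nullbordant so $\varphi_1 = 0$; $\mathcal O(1)$ is orientable, so its real determinant line bundle is trivial and $\varphi_2 = 0$; and $w_2(\mathcal O(1)) = c_1\bmod 2$ generates $H^2(\CP^1;\Z_2)$, so $\varphi_3 = 1$. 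For $(\Snb^1\times S_b^1,\sigma_R\oplus\underline\R)$: the underlying spin manifold is the odd--even spin torus, which has Arf invariant $0$, so $\varphi_1 = 0$, and $w_2(\sigma_R\oplus\underline\R) = 0$ by the Whitney formula, so $\varphi_3 = 0$. The one substantive computation is $\varphi_2(\Snb^1\times S_b^1,\sigma_R\oplus\underline\R)$: here $\det(\sigma_R\oplus\underline\R)\cong\sigma_R$, and $\sm_\sigma$ returns the Poincar\'e dual of $w_1(\sigma_R)$, namely the circle $\Snb^1\times\{\pt\}$ equipped with the $\Pin^-$ structure transported to it by the Smith construction; I would verify this represents the nonbounding generator of $\Omega_1^{\Pin^-}\cong\Z_2$ (not the trivial class) by unwinding, on this explicit example, the normal-bundle/structure-transport recipe that defines $\sm_\sigma$.

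\textbf{Step 3: conclusion and main obstacle.} Granting the last point, $\boldsymbol\varphi$ sends the three manifolds to $(1,0,0)$, $(0,1,0)$, $(0,0,1)$, hence is surjective; together with $\lvert\Omega_2^\Spin(B\O(2))\rvert = 8$ this forces $\boldsymbol\varphi$ to be an isomorphism, and the three manifolds are then, by the computed values, exactly the basis dual to $(\varphi_1,\varphi_2,\varphi_3)$. The one genuinely delicate step is the $\varphi_2$ evaluation on $(\Snb^1\times S_b^1,\sigma_R\oplus\underline\R)$ --- deciding which of the two $\Pin^-$ structures on the Smith circle appears; everything else is a routine spectral-sequence collapse plus Stiefel--Whitney-class and Arf-invariant bookkeeping.
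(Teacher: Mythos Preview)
Your argument is correct and the Step~2 evaluations match the paper's almost verbatim, but your Step~1 takes a genuinely different route. The paper does not run the Atiyah--Hirzebruch spectral sequence; instead it invokes the Mitchell--Priddy stable splitting of $B\O(2)$ (modulo odd torsion) as $\pt \vee B\O(1) \vee B\SO(3) \vee L(2)$, then quotes known computations of $\widetilde\Omega_2^\Spin$ on each wedge summand and removes the odd-primary caveat via Randal-Williams. That approach buys structural information---it tells you directly that $\varphi_1$ (from the basepoint) and $\varphi_2$ (from the $B\O(1)$ summand via $\det$) are already linearly independent---whereas your AHSS route is more elementary and self-contained but establishes only the order of the group, leaving all the linear-independence work to the explicit manifold evaluations.

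Two corrections. First, your parenthetical alternative is wrong: $\Omega_3^{\Pin^+}\cong\Z_2$, not $0$ (the paper uses this in \cref{delta_is_phi2}), so the Smith long exact sequence alone does not immediately give the order. Second, on the ``delicate'' $\varphi_2$ computation, the paper's argument is slightly cleaner than the structure-transport unwinding you propose: since $\sigma_R$ is pulled back from the second circle, a generic section has zero locus $\Snb^1\times\{\pt\}$, and the induced $\Pin^-$ structure on that circle is the one coming from its nonbounding spin structure, which is nonzero in $\Omega_1^{\Pin^-}$ by Kirby--Taylor. You may want to phrase it that way rather than leaving it as an unspecified verification.
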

\begin{proof}
Mitchell-Priddy~\cite[Theorem C]{MP84} show that, modulo odd-primary torsion, for any generalized homology theory $h_*$, there is a natural map $\psi_1\colon h_*(B\O(2))\to h_*(B\SO(3))$ and an isomorphism
\begin{equation}
    (c, \psi_1, \psi_2, \det)\colon h_n(B\O(2)) \overset\cong\longrightarrow h_n(\pt) \oplus  \widetilde h_n(B\SO(3)) \oplus h_n(L(2)) \oplus \widetilde h_n(B\O(1))
\end{equation}
for a certain spectrum
$L(2)$ and map $\psi_2\colon B\O(2)\to L(2)$. Bayen~\cite[\S 3.5.3, \S 3.6.3]{Bay94} shows $\Omega_k^\Spin(L(2))$ vanishes in degrees $3$ and below, so we will not need to worry about this factor. Wan-Wang~\cite[\S 5.5.3]{WW19} show $\widetilde\Omega_2^\Spin(B\SO(3)))\cong\Z_2$, and Anderson-Brown-Peterson~\cite{ABP69} show $\widetilde\Omega_2^\Spin(B\O(1))\cong\Z_2$. The additional hypothesis on odd-primary torsion can be removed: Randal-Williams~\cite[\S 5.1]{RW08} shows that the odd-primary torsion in $\widetilde h_*(B\O(2))$ coincides with that of a $3$-connected space, meaning there can be none in degree $2$. Thus we have the abstract isomorphism~\eqref{what_is_spinO2} and the fact that $\varphi_1$ and $\varphi_2$ are linearly independent, but we still need to address $\varphi_3$. The integral of a stable characteristic class is a bordism invariant by an argument of Pontryagin~\cite{Pon47} (see also Milnor-Stasheff~\cite[Theorem 4.9]{MS74}), so $\varphi_3$ indeed defines a map $\Omega_2^\Spin(B\O(2))\to\Z_2$; we need to show this map is linearly independent of $\varphi_1$ and $\varphi_2$. To do so, we will calculate $\boldsymbol\varphi$ on the three surfaces in the theorem statement.
\begin{itemize}
    \item $\varphi_2$ and $\varphi_3$ by definition vanish on trivial bundles, so $\boldsymbol\varphi(\Snb^1\times\Snb^1, \underline\R^2) = (?, 0, 0)$; for the value of $\varphi_1$ observe that $\ABS_0\circ c$ is the Arf invariant, which equals $1$ on $\Snb^1\times\Snb^1$.
    \item For $(\Snb^1\times S_b^1, \sigma_R\oplus\underline\R)$, we have $\Snb^1\times S_b^1 = \partial(\Snb^1\times D^2)$, so $c$ kills this manifold and therefore $\varphi_1$ does too. For $\varphi_2$, $\mathrm{Det}(\sigma_R\oplus\underline\R)\cong\sigma_R$. This bundle is trivializable when restricted to $\Snb^1\times\set x\subset \Snb^1\times S_b^1$ for any $x\in S_b^1$, which means $\Snb^1$ is Poincaré dual to the Euler class of $\sigma_R$ and therefore $\sm_\sigma(\Snb^1\times S_b^1, \sigma_R) = \Snb^1$, whose class in $\Omega_1^{\Pin^-}$ is nonzero~\cite[Theorem 2.1]{KT90}. Thus $\varphi_2(\Snb^1\times S_b^1,\sigma_R\oplus\underline\R) = 1$.
    For $\varphi_3$, $w_2(\sigma_R\oplus\underline\R) = w_2(\sigma_R) = 0$, because $\sigma_R$ is a line bundle.
    \item Finally, $(\CP^1, \mathcal O(1))$: since $\CP^1\cong S^2$ is simply connected, it has a unique spin structure, which bounds $D^3$ and therefore has trivial Arf invariant, so $\varphi_1(\CP^1, \mathcal O(1)) = 0$. Since $\mathcal O(1))$ is complex, it is oriented, so its real determinant bundle vanishes, and therefore $\varphi_2(\CP^1, \mathcal O(1)) = 0$. Since $c_1(\mathcal O(1))\mapsto 1$ under the isomorphism $H^2(\CP^1;\Z)\xrightarrow{\cong}\Z$ defined by the orientation induced by the complex structure, $w_2(\mathcal O(1)) = c_1(\mathcal O(1))\bmod 2$ is the nonzero element of $H^2(\CP^1;\Z_2)\cong\Z_2$, and therefore $\int_{\CP^1} w_2(\mathcal O(1)) = 1$.
\end{itemize}
Thus we have shown that the bordism classes of these three surfaces are linearly independent, and dual to the three invariants in $\boldsymbol\varphi$, as promised.
\end{proof}
Recall the map $\delta$ from \cref{the_Smith_LES}.
\begin{lemma}\label{delta_is_phi2}
There is a (necessarily unique) isomorphism $q\colon \Omega_3^{\Pin^+}\xrightarrow{\cong}\Z_2$, and the composition $q\circ \delta\colon \Omega_2^{\Spin}(B\O(2))\to\Z_2$ equals $\varphi_2$.
\end{lemma}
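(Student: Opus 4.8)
The first assertion is immediate: $\Omega_3^{\Pin^+}\cong\Z_2$ by Kirby--Taylor~\cite{KT90} (see also Giambalvo~\cite{Gia73}), and the only group isomorphism $\Z_2\to\Z_2$ is the identity, so there is a unique isomorphism $q\colon\Omega_3^{\Pin^+}\xrightarrow{\cong}\Z_2$. For the equality $q\circ\delta=\varphi_2$, both sides are homomorphisms out of $\Omega_2^\Spin(B\O(2))$, which by \cref{2d_spin_O2} is $\Z_2^3$ with basis given by the classes of $G_1=(\Snb^1\times\Snb^1,\underline\R^2)$, $G_2=(\Snb^1\times S_b^1,\sigma_R\oplus\underline\R)$, and $G_3=(\CP^1,\mathcal O(1))$, on which $\varphi_2$ takes the values $0$, $1$, $0$. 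So it suffices to compute $q\circ\delta$ on these three classes.

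I would first dispatch $G_1$ and $G_3$ using the concrete description of $\delta$ from \cref{the_Smith_LES} (take the sphere bundle, with an induced $\Pin^+$ structure). For $G_1$: since $E=\underline\R^2$ is trivial, $S(E)\cong\Sigma\times S^1$, and the real line bundle classifying the induced $\Pin^+$ structure is pulled back from the fibre of $S(V)\to B\O(2)$, where it is trivial; hence this $\Pin^+$ structure reduces to a spin structure, so the class lies in the image of $\Omega_3^\Spin=0\to\Omega_3^{\Pin^+}$ and vanishes. For $G_3$: the unit sphere bundle of $\mathcal O(1)$ over $\CP^1$ is the Hopf fibration $S^3\to\CP^1$, and since $H^1(S^3;\Z_2)=0$ the sphere $S^3$ has a unique $\Pin^+$ structure, namely the bounding one coming from its spin structure, so again $\delta(G_3)=0$. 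Thus $q\circ\delta$ agrees with $\varphi_2$ on $G_1$ and $G_3$.

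For $G_2$, rather than identify the sphere bundle $S(\sigma_R\oplus\underline\R)$ (a Klein-bottle bundle over $\Snb^1$) together with its $\Pin^+$ structure and compute its class in $\Omega_3^{\Pin^+}$ by hand, I would show that $\delta$ is surjective and then conclude by a counting argument. In the Smith long exact sequence of \cref{the_Smith_LES},
\[
\Omega_4^{\Pin^+}\xrightarrow{\ i\ }\Omega_4^{\Pin^{\tilde c+}}\xrightarrow{\ \sm_V\ }\Omega_2^\Spin(B\O(2))\xrightarrow{\ \delta\ }\Omega_3^{\Pin^+},
\]
exactness gives $\ker(\sm_V)=\operatorname{im}(i)$. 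The class of $\RP^4$ with the $\Pin^{\tilde c+}$ structure induced from a $\Pin^+$ structure lies in $\operatorname{im}(i)$ and is nonzero by \cref{the_pincp_mflds}, so $\lvert\ker(\sm_V)\rvert\ge 2$; since $\lvert\Omega_4^{\Pin^{\tilde c+}}\rvert=\lvert\Omega_2^\Spin(B\O(2))\rvert=8$ by \cref{pincp_Z23,2d_spin_O2}, it follows that $\lvert\ker(\delta)\rvert=\lvert\operatorname{im}(\sm_V)\rvert\le 4$, hence $\lvert\operatorname{im}(\delta)\rvert\ge 2=\lvert\Omega_3^{\Pin^+}\rvert$ and $\delta$ is onto. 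Therefore $q\circ\delta\colon\Z_2^3\to\Z_2$ is a nonzero homomorphism, so its kernel is $2$-dimensional; by the previous paragraph it contains the span of $G_1$ and $G_3$, which is already $2$-dimensional, so $\ker(q\circ\delta)=\langle G_1,G_3\rangle$ and $q\circ\delta(G_2)=1=\varphi_2(G_2)$. This establishes agreement on all three basis classes, whence $q\circ\delta=\varphi_2$.

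The main obstacle is exactly the value on $G_2$: a direct computation would require pinning down which $\Pin^+$ structure $\delta$ places on the Klein-bottle bundle and then recognizing its $\Pin^+$-bordism class, which is delicate since I would also need an explicit generator of $\Omega_3^{\Pin^+}$. The surjectivity argument above sidesteps this entirely, relying only on exactness of the Smith sequence and the nonvanishing of $[\RP^4]$ in $\Omega_4^{\Pin^{\tilde c+}}$ recorded in \cref{the_pincp_mflds}.
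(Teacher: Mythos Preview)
Your proof is correct and follows essentially the same approach as the paper's: both compute $\delta$ directly on $G_1$ and $G_3$ by identifying the sphere bundle (the paper notes simply that $T^3$ and $S^3$ are orientable, hence any \pinp structure comes from spin and bounds since $\Omega_3^\Spin=0$), and both handle $G_2$ by the same counting argument showing $\delta\ne 0$ via exactness and the nonvanishing of $[\RP^4]$ in the kernel of $\sm_V$. The only cosmetic difference is that the paper states the surjectivity argument as ``$\sm_V$ is not surjective, hence $\delta\ne 0$,'' whereas you track the cardinalities explicitly; the content is identical.
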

\begin{proof}
The calculation $\Omega_3^{\Pin^+}\cong\Z_2$ is due to Giambalvo~\cite[\S 2]{Gia73}. To identify $q\circ\delta = \varphi_2$, it suffices by \cref{2d_spin_O2} to show $\delta(\Snb^1\times\Snb^1, \underline\R^2) = 0$, $\delta(\Snb^1\times S_b^1, \sigma_R\oplus\underline\R) = 1$, and $\delta(\CP^1, \mathcal O(1)) = 0$.

First observe that $\sm_V$ is not surjective: its domain and codomain are both sets of size $8$ (\cref{pincp_Z23}, resp.\ \cref{2d_spin_O2}) but $\sm_V(\RP^4) = 0$ (\cref{pinp_vanishing}). Since $\sm_V$ is not surjective, exactness of~\eqref{appendix_Smith_LES} implies $\delta\ne 0$. Therefore if we can show $\delta(\Snb^1\times\Snb^1, \underline\R^2) = 0$ and $\delta(\CP^1, \mathcal O(1)) = 0$, then it must follow that $\delta(\Snb^1\times S_b^1, \sigma_R\oplus\underline\R) = 1$.

For $\Snb^1\times\Snb^1$, the vector bundle is trivial, so the total space of its sphere bundle is $T^3$ with some \pinp structure. Since $T^3$ is orientable, this \pinp structure is induced from a spin structure, but $\Omega_3^\Spin = 0$~\cite{Mil63} so $T^3$ bounds some compact spin $4$-manifold $X$. This is also a \pinp null-bordism of $T^3$, so $\delta(\Snb^1\times\Snb^1, \underline\R^2) = 0$.

For $\CP^1$, the sphere bundle map $S(\mathcal O(1))\to\CP^1$ is one definition of the Hopf fibration, so the total space is $S^3$. The argument in the previous paragraph shows that any \pinp structure on any closed, orientable $3$-manifold is null-bordant, so $\delta(\CP^1, \mathcal O(1)) = 0$.
\end{proof}
Thus, by exactness of~\eqref{appendix_Smith_LES}, $\varphi_2\circ\sm_V = 0$.
\begin{lemma}\label{app_3}
$\boldsymbol\varphi\circ\sm_V(\CP^1\times\CP^1) = (1,0,0)$.
\end{lemma}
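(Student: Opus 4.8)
The plan is to realize the class $\sm_V(\CP^1\times\CP^1)\in\Omega_2^\Spin(B\O(2))$ by an explicit spin-surface-with-bundle and then evaluate the three coordinates $\varphi_1,\varphi_2,\varphi_3$ on it. Write $X\coloneqq\CP^1\times\CP^1$. Since the \pincp structure on $X$ is induced from its canonical \spinc structure, \cref{tangstrs}, part~\eqref{pincpc}, identifies the rank-$2$ bundle $E=f^*V\to X$ classified by this \pincp structure with the underlying real bundle of the anticanonical bundle $\mathrm{Det}_\C(TX)$, which is effective (of bidegree $(2,2)$). A generic holomorphic section of $\mathrm{Det}_\C(TX)$ has, by Bertini, a smooth zero locus: a complex curve $C\subset X$ of bidegree $(2,2)$ with normal bundle $E|_C$. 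By the Smith construction, $\sm_V(\CP^1\times\CP^1)$ is then represented by $C$, the bundle $E|_C$, and the spin structure on $TC\cong(TX-E)|_C$ obtained by restricting the ambient $(B\O(2),-V)$-twisted spin structure.

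Next I would pin down $(C,\mathrm{spin})$. Adjunction gives $K_C\cong(K_X\otimes\mathrm{Det}_\C(TX))|_C\cong\mathcal O_C$, so $C$ is an elliptic curve, i.e.\ a $2$-torus. Because $C\subset X$ is a complex submanifold and the ambient \spinc structure is canonical, the restricted \spinc structure on $C$ is again the canonical \spinc structure of $C$; as $K_C$ is trivial, this is an honest spin structure, with trivial positive spinor bundle $\Lambda^{0,0}T^*C=\mathcal O_C$. In the theta-characteristic description of spin structures on Riemann surfaces this is the trivial theta characteristic, i.e.\ the translation-invariant (Lie group) spin structure on $T^2$ --- the unique spin structure on $T^2$ with nonzero Arf invariant. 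Since $\varphi_1=\ABS_0\circ c$ and $\ABS_0\colon\Omega_2^\Spin\xrightarrow{\cong}\KO_2\cong\Z_2$ is the Arf invariant, this yields $\varphi_1(\sm_V(\CP^1\times\CP^1))=1$.

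The remaining two coordinates vanish by bookkeeping already in place. For $\varphi_2$, exactness of the Smith long exact sequence~\eqref{appendix_Smith_LES} together with \cref{delta_is_phi2} gives $\varphi_2\circ\sm_V=q\circ\delta\circ\sm_V=0$, so $\varphi_2(\sm_V(\CP^1\times\CP^1))=0$. For $\varphi_3$, the bundle $E$ is the realification of a complex line bundle whose first Chern class (bidegree $(2,2)$) is even, hence $w_2(E)=c_1\bmod 2=0$ already on $X$, so $\int_C w_2(E|_C)=0$. Assembling the three, $\boldsymbol\varphi(\sm_V(\CP^1\times\CP^1))=(1,0,0)$.

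I expect the main obstacle to be the spin-structure identification in the second paragraph: one must be sure that the \spinc structure restricts to the Arf-$1$ (Lie group) spin structure on $C$, not a null-bordant one. This rests on two standard but sign-sensitive facts --- that canonical \spinc structures are functorial along complex-analytic embeddings (the determinant twist being $(\mathrm{Det}_\C(N_{C/X}))^{-1}$, which is exactly what forces a genuine spin structure here), and that the canonical \spinc structure of a Riemann surface with trivial canonical bundle is the trivial theta characteristic, whose parity $h^0\bmod 2=1$ equals the Arf invariant by Atiyah--Mumford. These should be spelled out with care; a more hands-on alternative, degenerating $C$ to the reducible configuration $\{p_1,p_2\}\times\CP^1\cup\CP^1\times\{q_1,q_2\}$ and tracking the spin structure through the smoothing, looks more delicate rather than less.
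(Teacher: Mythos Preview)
Your argument is correct in outline, and the conclusion agrees with the paper, but your route to $\varphi_1=1$ is genuinely different from the paper's and carries exactly the risk you flag.

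For $\varphi_2$ you use the same exactness argument as the paper. For $\varphi_3$ your observation that $w_2(E)=c_1(\mathcal O(2,2))\bmod 2=0$ already on $X$, hence $w_2(E|_C)=0$, is actually cleaner than the paper's approach: the paper instead works on the unknown submanifold $\Sigma$, applies the Whitney sum formula to $T\Sigma\oplus\nu\cong i^*TX$, and uses the Wu relation $w_2(\Sigma)+w_1(\Sigma)^2=0$ on closed surfaces to force $w_2(\nu)=0$.

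The real divergence is at $\varphi_1$. The paper never computes the spin structure on the Poincar\'e dual. Instead it argues indirectly: once $\varphi_2=\varphi_3=0$, showing $\varphi_1=1$ is equivalent (since $\boldsymbol\varphi$ is an isomorphism) to showing $\sm_V(\CP^1\times\CP^1)\ne 0$, which by exactness of~\eqref{appendix_Smith_LES} means $[\CP^1\times\CP^1]\notin\Im(i)$. Since $\Omega_4^{\Pin^+}$ is cyclic and $[\RP^4]\in\Im(i)$, while $[\RP^4]$ and $[\CP^1\times\CP^1]$ are linearly independent in $\Omega_4^{\Pin^{\tilde c+}}$ by \cref{the_pincp_mflds}, this forces $[\CP^1\times\CP^1]\notin\Im(i)$. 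This completely sidesteps the spin-structure identification.

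Your direct computation is more geometric and gives the explicit representative $(C,\mathcal O_C\text{-spin})$, which is nice, but the step ``restricted \spinc structure on $C$ is again the canonical \spinc structure of $C$'' is stated too quickly: restricting the canonical $\Spinc$ structure of $X$ to a complex hypersurface is \emph{not} the canonical $\Spinc$ structure of the hypersurface in general---there is a normal-bundle twist, and it is precisely the Smith subtraction of $E|_C=N_{C/X}$ that cancels it here. A clean way to make your argument airtight is to note that over $X$ the spin structure on the virtual bundle $TX-K_X^{-1}$ is unique (since $H^1(X;\Z_2)=0$) and corresponds to the square root $\mathcal O(1,1)\otimes\mathcal O(1,1)^{-1}=\mathcal O_X$ of the trivial determinant; restricting to $C$ gives the square root $\mathcal O_C$ of the trivial bundle $K_C^{-1}$, which is indeed the odd theta characteristic. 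With that patched, your proof is complete.
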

\begin{proof}
By \cref{delta_is_phi2}, $\varphi_2\circ\sm_V = 0$.

To compute $\varphi_3\circ\sm_V(\CP^1\times\CP^1)$, let $i\colon \Sigma\hookrightarrow\CP^1\times\CP^1$ be a manifold representative for the Poincaré dual of the Euler class of the principal $\O(2)$-bundle $V\to \CP^1\times\CP^1$ associated to the \pincp structure. If $\nu\to\Sigma$ denotes the normal bundle to the embedding $i$, then there is an isomorphism $i^*(V)\cong\nu$ and $\sm_V(\CP^1\times\CP^1)$ is by definition the class of $(\Sigma, i^*(V))$ in $\Omega_2^\Spin(B\O(2))$. Apply the Whitney sum formula to the decomposition $T\Sigma\oplus\nu\cong i^*(T(\CP^1\times\CP^1))$ to deduce
\begin{subequations}
\begin{align}
    \label{the_w1_part}
    w_1(\Sigma) + w_1(\nu) &= i^*(w_1(\CP^1\times\CP^1))\\
    \label{the_w2_part}
    w_2(\Sigma) + w_1(\Sigma)w_1(\nu) + w_2(\nu) &= i^*(w_2(\CP^1\times\CP^1)).
\end{align}
\end{subequations}
Since $\CP^1\cong S^2$, it has a spin structure, so $\CP^1\times\CP^1$ does as well, and therefore $w_i(\CP^1\times\CP^1) = 0$ for $i = 1,2$. Thus~\eqref{the_w1_part} simplifies to $w_1(\Sigma) = w_1(\nu)$, and so~\eqref{the_w2_part} simplifies to $w_2(\Sigma) + w_1(\Sigma)^2 + w_2(\nu) = 0$. The Wu formula implies $w_2(\Sigma) + w_1(\Sigma)^2 = 0$ because the Wu class $v_2 = w_2 + w_1^2$ vanishes on closed $2$-manifolds such as $\Sigma$, so we have calculated that $w_2(\nu) = 0$ and therefore
\begin{equation}
    \varphi_3(\CP^1\times\CP^1) \coloneqq \int_\Sigma w_2(\nu) = 0.
\end{equation}
Since $\varphi_i\circ\sm_V(\CP^1\times\CP^1) = 0$ for $i = 2,3$, to show $\varphi_1\circ\sm_V(\CP^1\times\CP^1) = 1$ is the same as showing $\boldsymbol\varphi(\CP^1\times\CP^1)\ne 0$. Since $\boldsymbol\varphi$ is an isomorphism (\cref{2d_spin_O2}), this is the same as showing $\sm_V(\CP^1\times\CP^1)\ne 0$, which by exactness of~\eqref{appendix_Smith_LES} is equivalent to showing $[\CP^1\times\CP^1]\not\in\Im(i)$. The domain of $i$, $\Omega_4^{\Pin^+}$, is a cyclic group~\cite[\S 2]{Gia73} and by construction $[\RP^4]\in\Im(i)$, so since the bordism classes of $\RP^4$ and $\CP^1\times\CP^1$ are linearly independent (\cref{the_pincp_mflds}), $[\CP^1\times\CP^1]$ cannot also be in $\Im(i)$. Thus $\varphi_1\circ\sm_V(\CP^1\times\CP^1) = 1$.
\end{proof}
\begin{lemma}\label{app_2}
$\varphi_1\circ\sm_V(\CP^2, -1) = 0$.
\end{lemma}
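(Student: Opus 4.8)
The plan is to identify $\sm_V(\CP^2,-1)$ explicitly as a bordism class in $\Omega_2^{\Spin}(B\O(2))$ whose underlying spin surface is $S^2$, so that $\varphi_1 = \ABS_0\circ c$ annihilates it for trivial reasons. Recall from \cref{FH_recast} that $\ABS_{-2} = \ABS_0\circ c\circ\sm_V = \varphi_1\circ\sm_V$, so carrying out this computation is precisely what is needed.

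First I would unwind the structures on $(\CP^2,-1)$. By the discussion following \cref{tangstrs}, the \pincp structure on $(\CP^2,-1)$ is the one induced from the \spinc structure with twisting line bundle $L = \mathcal O(-1)$, and the associated rank-$2$ real bundle is $E = f^*V\cong\mathcal O(-1)$, regarded as an oriented plane bundle via its complex structure; the twisted-spin condition says exactly that $T\CP^2 - E$ admits a spin structure. As in \cref{FH_recast} (and following the same geometric description of $\sm_V$ used in the proof of \cref{app_3}), $\sm_V(\CP^2,-1)$ is the class of $(\Sigma, i^*E)\in\Omega_2^{\Spin}(B\O(2))$, where $i\colon\Sigma\hookrightarrow\CP^2$ is a manifold representative of the Poincaré dual of $e(E) = c_1(\mathcal O(-1))$, with normal bundle $\nu\cong i^*E$ and spin structure induced from the isomorphism $T\Sigma\cong i^*(T\CP^2 - E)$.

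Since $c_1(\mathcal O(-1))$ is, up to sign, the hyperplane generator of $H^2(\CP^2;\Z)\cong\Z$, a manifold representative of its Poincaré dual is a linearly embedded $\CP^1$; concretely, $\mathcal O(-1)\cong\overline{\mathcal O(1)}$ has a transverse section whose zero locus is a hyperplane, so I may take $\Sigma\cong\CP^1\cong S^2$. Being simply connected, $S^2$ carries a unique spin structure, which bounds $D^3$; thus $[\Sigma]_{\Spin} = 0$ in $\Omega_2^{\Spin}\cong\Z_2$, the latter group being detected by the Arf invariant of a spin surface. Finally, $\varphi_1 = \ABS_0\circ c$ only records this underlying spin bordism class, and $\ABS_0\colon\Omega_2^{\Spin}\to\KO_2$ is an isomorphism, so $\varphi_1\circ\sm_V(\CP^2,-1) = \ABS_0(0) = 0$.

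I do not anticipate a genuine obstacle here, since $\varphi_1$ forgets the vector bundle and hence this case is simpler than the $\CP^1\times\CP^1$ case handled in \cref{app_3}. The only point requiring care is pinning down the Smith representative $\Sigma$ together with its induced spin structure, and both are forced once one records that $e(\mathcal O(-1))$ is (up to sign) the hyperplane class and that the spin structure on $T\Sigma\cong i^*(T\CP^2 - E)$ is inherited from the \pincp structure on $\CP^2$.
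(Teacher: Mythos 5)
Your proof is correct and follows essentially the same route as the paper's: identify the Smith representative $\sm_V(\CP^2,-1)$ as $\CP^1\cong S^2$ with some spin structure and some rank-$2$ bundle, note that $\varphi_1$ forgets the bundle, and conclude from the uniqueness of the (bounding) spin structure on $S^2$ that the class vanishes. Your added remark that $\mathcal O(-1)\cong\overline{\mathcal O(1)}$ admits a transverse section with hyperplane zero locus is a fine explicit justification of what the paper simply asserts.
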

\begin{proof}
One characterization of $\sm_V(\CP^2, -1)$ is that it is the bordism class of the zero set of any section of $V\to\CP^2$ which is transverse to the zero section~\cite[Definition 3.7]{mathSmith}. For the \pincp structure $(\CP^2, -1)$, $V = \mathcal O(-1)$, and the zero set of any such section is isotopic to the standard embedding $\CP^1\to\CP^2$. Thus $\sm_V(\CP^2, -1)\in\Omega_2^\Spin(B\O(2))$ is the bordism class of $\CP^1$ with some spin structure and some rank-$2$ vector bundle. The map $\varphi_1$ forgets the vector bundle, so $\varphi_1\circ\sm_V(\CP^2, -1)\in\Omega_2^\Spin$ is the bordism class of $\CP^1\cong S^2$ with some spin structure. Since $S^2$ is simply connected, it has a unique spin structure, which therefore is the spin structure appearing at the boundary $S^2\cong\partial D^3$, where $D^3$ is given its canonical (also unique) spin structure. Therefore $[\CP^1] = 0$ in $\Omega_2^\Spin$ and therefore $\varphi_1\circ\sm_V(\CP^2, -1)= 0$.
\end{proof}
\begin{remark}
It is possible to show $\boldsymbol\varphi\circ\sm_V(\CP^2, -1) = (0,0,1)$ similarly to the proof of \cref{app_3}.
\end{remark}
Since the first component of $\boldsymbol\varphi\circ\sm_V$ is $\ABS_{-2}$, \cref{pinp_vanishing,app_3,app_2} finish the proof of \cref{appendix_thm}.

\bibliographystyle{alpha}
\bibliography{CK_Zotero}

\end{document}